\keywords{Change actions, programming language semantics, category theory, Cartesian differential categories}
\let\th@plain\relax
\renewcommand{\epsilon}[0]{\varepsilon}
\newcommand{\todo}[1]{\red{[TODO: #1]}}
\newcommand{\ds}[1]{\widehat {#1}}
\newcommand{\CAct}[0]{CAct}
\newcommand{\cart}[0]{\cat{Cat_\times}}
\newcommand{\ftor}[1]{\mathrm{#1}}
\newcommand{\A}[1]{\mathbf{#1}}
\newcommand{\theo}[1]{\mathfrak{T}_{#1}}
\newcommand{\CSet}[0]{\cat{Set}}
\newcommand{\CGrpS}[0]{\cat{Grp}_{\CSet}}
\newcommand{\arr}[1]{\mathrm{#1}}
\renewcommand\Id{\arr{Id}}
\newsavebox{\@brx}
\newcommand{\llangle}[1][]{\savebox{\@brx}{\(\m@th{#1\langle}\)}%
  \mathopen{\copy\@brx\kern-0.5\wd\@brx\usebox{\@brx}}}
\newcommand{\rrangle}[1][]{\savebox{\@brx}{\(\m@th{#1\rangle}\)}%
  \mathclose{\copy\@brx\kern-0.5\wd\@brx\usebox{\@brx}}}
\newcommand{\kpair}[2]{\llangle {#1}, {#2} \rrangle}
\newcommand{\reachOrder}[0]{\sqsubseteq}
\newcommand{\changes}[0]{\Delta}
\newcommand{\change}[0]{\delta}
\newcommand{\plus}[0]{+}
\renewcommand{\d}[0]{\partial}
\newcommand{\der}[2]{\frac{\d\,{#1}}{\d{#2}}}
\newcommand{\vf}[1]{\mathcal{#1}}
\newcommand{\pr}[0]{{\sf p}}
\newcommand{\da}[1]{#1} 
\newcommand{\us}[1]{{|{#1}|}}
\newcommand{\KK}[0]{\mathbb{K}}
\newcommand\anglebra[1]{\langle #1 \rangle}
\newcommand\tmap[3]{\boldsymbol{#1}^{({#2},{#3})}}
\newcommand\epow[1]{\pi_1^{{\langle#1\rangle}}}
\newcommand\cpow[1]{\pi_2^{{(#1)}}}
\newcommand{\DF}[1]{\ftor{D}\left[ {#1} \right]}
\newcommand{\ax}[1]{\textbf{[CD.#1]}}
\newcommand{\repeatthm}[1]{%
  \begingroup
  \renewcommand{\thethm}{\ref{thm:#1}}%
  \expandafter\expandafter\expandafter\thm
  \csname reptheorem@#1\endcsname
  \endthm
  \endgroup
}
\xdef\csname reptheorem@#1\endcsname{%
    \unexpanded\expandafter{\BODY}%
  }%
\unskip\label{thm:#1}\endthm
  \gdef\lasttheorem{#1}
\newcommand{\repeatlemma}[1]{%
  \begingroup
  \renewcommand{\thelem}{\ref{lem:#1}}%
  \expandafter\expandafter\expandafter\lem
  \csname replemma@#1\endcsname
  \endlem
  \endgroup
}
\xdef\csname replemma@#1\endcsname{%
    \unexpanded\expandafter{\BODY}%
  }%
\unskip\label{lem:#1}\endlem
  \gdef\lasttheorem{#1}
\newcommand{\repeatprop}[1]{%
  \begingroup
  \renewcommand{\theprop}{\ref{prop:#1}}%
  \expandafter\expandafter\expandafter\prop
  \csname repproposition@#1\endcsname
  \endprop
  \endgroup
}
\xdef\csname repproposition@#1\endcsname{%
    \unexpanded\expandafter{\BODY}%
  }%
\unskip\label{prop:#1}\endprop
  \gdef\lasttheorem{#1}
\newcommand{\proofref}[0]{
  \begin{proof}
    See Appendix~\ref{prf:\lasttheorem}
  \end{proof}
}
\renewcommand\defeq{\coloneqq}
\newif\ifdraft\draftfalse
\newcommand\lo[1]{{\color{blue} [{#1}--Luke]}}
\newcommand\lochanged[1]{{\color{blue} #1}}
\newcommand\map[1]{{\color{red} [{#1}--Mario]}}
\newcommand\lo[1]{}
\newcommand\lochanged[1]{#1}
\newcommand\map[1]{}
\renewcommand{\todo}[1]{}
\renewcommand\cite[1]{\citep{#1}}
\begin{document}
%
\title{Change Actions: 
Models of Generalised Differentiation}

\author[M.~Alvarez-Picallo]{Mario Alvarez-Picallo}
\address{University of Oxford}
\email{mario.alvarez-picallo@cs.ox.ac.uk}

\author[C.-H.~L.~Ong]{C.-H. Luke Ong}
\address{University of Oxford}
\email{luke.ong@cs.ox.ac.uk}

\begin{abstract}
  Change structures, introduced by Cai et al., have recently been proposed as a semantic framework for incremental computation.
  We generalise change actions, an alternative to change structures, to arbitrary cartesian categories and propose the notion of \emph{change action model} as a categorical model for (higher-order) generalised differentiation.
  Change action models naturally arise from many geometric and computational settings, such as (generalised) cartesian differential categories, group models of discrete calculus, and Kleene algebra of regular expressions.
  We show how to build \lochanged{canonical} change action models on arbitrary cartesian categories, reminiscent of the F\`aa di Bruno construction.
\end{abstract}

\maketitle              

\tableofcontents

\section{Introduction}
\label{sec:intro}
Incremental computation is the process of incrementally updating the output of some given function as the input is gradually changed, without recomputing the entire function from scratch. 
Recently, Cai et al.~\cite{cai2014changes} introduced the notion of change structure to give a semantic account of incremental computation. 
Change structures have subsequently been generalised to \emph{change actions} \cite{mario2019fixing}, 
and proposed as a model for automatic differentiation \cite{kelly2016evolving}.
These developments raise a number of questions about the structure of change actions themselves and how they relate to more traditional notions of  differentiation. 


A \emph{change action} $A = (\us A, \changes A, \oplus_A, +_A, 0)$ is a set $\us A$ equipped with a monoid $(\changes A, +_A, 0_A)$ acting on it, via action $\oplus_A : \us A \times \changes A \to \us A$.
For example, every monoid $(S, +, 0)$ gives rise to a (so-called \emph{monoidal}) change action $(S, S, +, +, 0)$.
Given change actions $A$ and $B$, consider functions $f : \us A \to \us B$.
A \emph{derivative} of $f$ is a function $\d f : \us A \times \changes A \to \changes B$ such that for all $a \in \us A, \change a \in \changes A$,
\(
f (a \oplus_A \change a) = f(a) \oplus_B \d f(a, \change a).
\) 
Change actions and differentiable functions (i.e.~functions that have a regular derivative) organise themselves into categories (and indeed 2-categories) with finite (co)products, whereby morphisms are composed via the chain rule.

The definition of change actions (and derivatives of functions) makes no use of properties of $\cat{Set}$ beyond the existence of products.
We develop the theory of change actions on arbitrary cartesian categories and study their properties. 
A first contribution is the notion of a \emph{change action model}, which is defined to be a coalgebra for a certain (copointed) endofunctor $\ftor{\CAct}$ on the category $\cart$ of (small) cartesian categories.
The functor $\ftor{\CAct}$ sends a category $\cat{C}$ to the category $\ftor{\CAct}(\cat{C})$ of (internal) change actions and differential maps on $\cat{C}$.



\lo{What are the benefits of showing that something is a change action model? What general properties follow? 1. Chain rule. 2. Lemma 5 (Internalisation of change action). 3. Structure of products: partial derivatives make sense) 4. Theorem 5.}

\lo{Tangent bundles in change action models.}

There is a natural, extrinsic, notion of higher-order derivative in change action models.
In such a model $\alpha : \cat{C} \to \ftor{\CAct}(\cat{C})$,
a $\cat{C}$-object $A$ is associated (via $\alpha$) with a change action, the carrier object of whose monoid is in turn associated with a change action, and so on \emph{ad infinitum}.
We construct a ``canonical'' change action model, $\ftor{\CAct_\omega}(\cat{C})$, that internalises such $\omega$-sequences that exhibit higher-order differentiation.
Objects of $\ftor{\CAct_\omega}(\cat{C})$ are $\omega$-sequences of ``contiguously compatible'' change actions; 
and morphisms are corresponding $\omega$-sequences of differential maps, each map being the canonical (via $\alpha$) derivative of the preceding in the $\omega$-sequence.
We show that $\ftor{\CAct_\omega}(\cat{C})$ is the final $\ftor{\CAct}$-coalgebra (relativised to change action models on $\cat{C}$).
The category $\ftor{\CAct_\omega}(\cat{C})$ may be viewed as a kind of Fa\`a di Bruno construction \cite{cruttwell2017cartesian,CockettS11} in the more general setting of change action models.

Change action models capture many versions of differentiation that arise in mathematics and computer science.
We illustrate their generality via three examples.
The first, \emph{(generalised) cartesian differential categories} (GCDC) \cite{blute2009cartesian,cruttwell2017cartesian}, are themselves an axiomatisation of the essential properties of the derivative.
We show that a GCDC $\cat{C}$---which by definition associates every object $A$ with a monoid $L(A) = (L_0(A), +_A, 0_A)$---gives rise to change action models in various non-trivial ways.
First there is a canonical change action model mapping each object $A$ to the trivial action of $L(A)$ on itself. 
A second arises from interpreting the map that zeroes the second component y $\Id : A \times L_0(A) \ra A \times L_0(A)$ as defining an action of $L(A)$ on $A$ in the Kleisli category of the tangent bundle monad.


Secondly we show how discrete differentiation in both the \emph{calculus of finite differences} \cite{jordan1965calculus} and \emph{Boolean differential calculus} \cite{steinbach2017boolean,thayse1981boolean} can be modelled using the full subcategory $\CGrpS$ of $\cat{Set}$ whose objects are groups.
Our unifying formulation generalises these discrete calculi to arbitrary groups, and gives an account of
the chain rule in these settings.

Our third example is differentiation of regular expressions.
Recall that Kleene algebra $\KK$ 
is the algebra of regular expressions.
Thanks to Taylor's Theorem \cite{DBLP:conf/lics/HopkinsK99}, every polynomial   over a commutative $\KK$, viewed as an endofunction on $\KK$ \emph{qua} (monoidal) change action, has a regular derivative.
We show that the algebra of polynomials over a commutative Kleene algebra is a change action model. 
Interestingly the derivatives are not additive in the second (i.e.~vectorial) argument, thus violating GCDC axiom [CD.2].

\paragraph{Outline.} In Section~\ref{sec:change-actions} we present the basic definitions of change actions and differential maps, and show how they can be organised into categories.
The theory of change action is extended to arbitrary cartesian categories $\cat{C}$ in Section~\ref{sec:change-actions-arbitrary}: we introduce the category $\ftor{\CAct}(\cat{C})$ of internal change actions on $\cat{C}$.
In Section~\ref{sec:extrinsic} we present change action models, and properties of the tangent bundle functors.
In Section~\ref{sec:examples} we illustrate the unifying power of change action models via three examples.
In Section~\ref{sec:omega-change-actions}, we study the category $\ftor{\CAct_\omega}(\cat{C})$ of $\omega$-change actions and $\omega$-differential maps.
Missing proofs are provided in the Appendix.

\section{Change actions}
\label{sec:change-actions}

  A \emph{change action} is a tuple $\da{A} = (\us{A}, \changes{A}, \oplus_A, +_A, 0_A)$ where $\us{A}$ and $\changes{A}$ are sets,
  $(\changes{A}, +_A, 0_A)$ is a monoid, and $\oplus_A : \us{A} \times \changes{A} \ra \us{A}$ is an action of the monoid on $\us{A}$.
We omit the subscript from $\oplus_A, +_A$ and $0_A$ whenever we can.

\begin{rem}
  Change actions are closely related to the notion of \emph{change structures} introduced in \cite{cai2014changes} but differ from the latter in not being dependently typed or assuming the existence of an $\ominus$ operator.
  On the other hand, change actions require the change set $\changes{A}$ to have a monoid structure compatible with the map $\oplus$, hence neither notion is strictly a generalisation of the other. Whenever one has a change structure, however, one can easily obtain a change action by considering the free monoid generated by its change set.
\end{rem}

\begin{defi}[Derivative condition]\rm
  Let $\da{A}$ and $\da{B}$ be change actions.
  A function $f : \us{A} \ra \us{B}$ is \emph{differentiable} if there is a function $\d f : \us{A} \times \changes{A} \ra \changes{B}$ satisfying
  \(
    f(a \oplus_A \change{a}) = f(a) \oplus_B \d f (a, \change{a}),
  \)
  for all $a \in \us{A}, \change{a} \in \changes{A}$.
  We call $\d f$ a \emph{derivative} for $f$, and write $f : \da{A} \ra \da{B}$ whenever $f$ is differentiable.
\end{defi}

\begin{lem}[Chain rule]
Given $f : \da{A} \ra \da{B}$ and $g : \da{B} \ra \da{C}$ with derivatives $\d f$ and $\d g$ respectively,
the function $\d (g \circ f) : \us{A} \times \changes{A} \to \changes{C}$ defined by
  \(
    \d(g \circ f)(a, \change a) \defeq \d g(f(a), \d f(a, \change a))
  \)
is a derivative for $g \circ f : \us A \to \us C$.
\end{lem}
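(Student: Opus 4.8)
The plan is to verify directly that the proposed candidate $\d(g \circ f)$ satisfies the derivative condition of the preceding definition: for all $a \in \us{A}$ and $\change a \in \changes{A}$ we must have $(g \circ f)(a \oplus_A \change a) = (g \circ f)(a) \oplus_C \d(g \circ f)(a, \change a)$. Since $\d(g \circ f)$ is defined purely in terms of $\d f$ and $\d g$, the argument is a chain of substitutions using the two hypotheses, with no auxiliary construction required.

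First I would unfold the left-hand side using the definition of function composition, writing $(g \circ f)(a \oplus_A \change a) = g\bigl(f(a \oplus_A \change a)\bigr)$. Then I would apply the derivative condition for $f$ to the inner expression, replacing $f(a \oplus_A \change a)$ by $f(a) \oplus_B \d f(a, \change a)$; this is the step that introduces the action $\oplus_B$ of $\da B$.

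Next, with the expression now in the form $g\bigl(f(a) \oplus_B \d f(a, \change a)\bigr)$, I would instantiate the derivative condition for $g$ at the base point $b \defeq f(a) \in \us{B}$ and the change $\change b \defeq \d f(a, \change a) \in \changes{B}$. This yields $g(f(a)) \oplus_C \d g\bigl(f(a), \d f(a, \change a)\bigr)$. Recognising that $g(f(a)) = (g \circ f)(a)$ and that, by the very definition of $\d(g \circ f)$, we have $\d g\bigl(f(a), \d f(a, \change a)\bigr) = \d(g \circ f)(a, \change a)$, produces exactly the required right-hand side.

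There is no genuine obstacle here: the proof is a direct calculation, and the only point deserving care is that the two hypotheses be instantiated at matching arguments, i.e.\ that the output $f(a)$ of $f$ be fed as the base point of $g$'s derivative, and that the change $\d f(a, \change a)$ produced by $\d f$ be passed as the second argument to $\d g$. A quick type-check confirms that $f(a) \in \us{B}$ and $\d f(a, \change a) \in \changes{B}$, so the composite $\d g\bigl(f(a), \d f(a, \change a)\bigr)$ is well-defined and lands in $\changes{C}$, as is required of a derivative for $g \circ f : \us{A} \ra \us{C}$.
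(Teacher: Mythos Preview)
Your proposal is correct and follows essentially the same approach as the paper: a direct verification of the derivative condition by applying the derivative property of $f$ followed by that of $g$. The paper's proof runs the chain of equalities in the opposite direction (starting from $(g \circ f)(a) \oplus_C \d(g \circ f)(a,\change a)$ and ending at $g(f(a \oplus_A \change a))$), but the argument is identical.
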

\begin{proof}
Unpacking the definition, we have
$(g \circ f)(a) \oplus_C \d (g \circ f)(a, \change a)
= g(f(a)) \oplus_C \d g(f(a), \d f(a, \change a)) = g(f(a) \oplus_B \d f(a, \change a)) = g(f(a \oplus_A \change a))$, as desired.\qed
\end{proof}

\begin{exa}[Some useful change actions]
  \begin{enumerate}
  \item If $(A, +, 0)$ is a monoid, $(A, A, +, +, 0)$ is a change action (called \emph{monoidal}).
  \item 
    For any set $A$, $\da{A}_{\star} \defeq (A, \{\star\}, \pi_1, \pi_1, \star)$ is a (trivial) change action.
  \item Let $A \Ra B$ be the set of functions from $A$ from $B$, and $\arr{ev}_{A, B} : A \times (A \Ra B) \ra B$ be the usual evaluation map. 
  Then $(A, A \Ra A, \arr{ev}_{A,A}, \circ, \Id_A)$ is a change action.
  If $U \subseteq (A \Ra A)$ contains the identity map and is closed under composition, $(A, U, \arr{ev}_{A, A} \restriction_{A \times U}, \circ \restriction_{U \times U}, \Id_U)$ is a change action.
  \end{enumerate}
\end{exa}


\subsection{Regular derivatives}

The preceding definitions neither assume nor guarantee a derivative to be additive (i.e. they may not satisfy $\d f(x, \changes{a} + \changes{b}) = \d f(x, \changes{a}) + \d f(x, \changes{b})$), as they are in standard
differential calculus.
A strictly weaker condition that we will now require is \emph{regularity}: 
if a derivative is additive in its second argument then it is regular, but not vice versa. 
Under some condition, the converse is also true.

\begin{defi}\rm\label{def:regular}
  Given a differentiable map $f : \da{A} \ra \da{B}$, a derivative $\d f$ for $f$ is
  \emph{regular} if, for all $a \in \us{A}$ and $\change{a}, \change{b} \in \changes{A}$, we have $f(a, 0_A) = 0_B$ and $\d f(a, \change{a} +_A \change{b}) = \d f(a, \change{a}) +_B \d f(a \oplus_A \change{a}, \change{b})$.
\end{defi}

\def\lasttheorem{hello}
\begin{rep-proposition}{unique-derivative-regular}
  Whenever $f : \da{A} \ra \da{B}$ is differentiable and has a unique derivative $\d f$,
  this derivative is regular.
\end{rep-proposition}
\proofref
\begin{rep-proposition}{chain-rule-regular}
  Given $f : \da{A} \ra \da{B}$ and $g : \da{B} \ra \da{C}$ with regular derivatives $\d f$ and $\d g$ respectively, 
  the derivative $\d (g \circ f) = \d g \circ \pair{f \circ \pi_1}{\d f}$ is regular.
\end{rep-proposition}
\proofref
\subsection{Two categories of change actions}

The study of change actions can be undertaken in two ways: one can consider
functions that are differentiable (without choosing a derivative);
alternatively, the derivative itself can be considered part of the morphism.
The former leads to the category $\cat{\CAct^-}$, whose objects are change actions and morphisms are the differentiable maps.

The category $\cat{\CAct^-}$ was the category we originally proposed 
\cite{mario2019fixing}. 
It is well-behaved, possessing limits, colimits, and exponentials, which is a trivial corollary of the following result:

\begin{rep-theorem}{cact-preord}
  The category $\cat{\CAct^-}$ of change actions and differentiable morphisms is
  equivalent to $\cat{PreOrd}$, the category of preorders and monotone maps.
\end{rep-theorem}
\proofref

The actual structure of the limits and colimits in $\cat{\CAct^-}$ is, however, not so satisfactory. 
One can, for example, obtain the product of two change actions $\da{A}$ and $\da{B}$ by taking their product in $\cat{PreOrd}$ and turning it into a change action, 
but the corresponding monoid action map $\oplus$ is not, in general, easily expressible, even if those for $\da{A}$ and $\da{B}$ are. Derivatives of morphisms in $\cat{\CAct^-}$ can also be hard to obtain, as
exhibiting $f$ as a morphism in $\cat{\CAct^-}$ merely proves it is differentiable
but gives no clue as to how a derivative might be constructed. 

A more constructive approach is to consider morphism as a function together with a choice of a derivative for it.

\begin{defi}\rm
  Given change actions $\da{A}$ and $\da{B}$, a \emph{differential map} $\da{f} :
  \da{A} \ra \da{B}$ is a pair $(\us{f}, \d f)$ where $\us{f} : \us{A} \ra \us{B}$ is a function, and $\d f : \us{A} \times \changes A \ra \changes B$ is a regular derivative for $\us{f}$.
\end{defi}

The category $\cat{\CAct}$ has change actions as objects and differential maps as morphisms. 
  The identity morphisms are $(\Id_A, \pi_1)$; given morphisms $f : A \to B$ and $g : B \to C$, define the composite $\da{g} \circ \da{f} \defeq (\us{g} \circ \us{f}, \d g \circ \pair{\us{f} \circ \pi_1}{\d f}) : A \to C$.

Finite products and coproducts exist in $\cat{\CAct}$ (see Theorems~\ref{thm:cact-products} and~\ref{thm:cact-coproducts} for a more general statement).
Whether limits and colimits exist in $\cat{\CAct}$ beyond products and coproducts is open.  
\lo{It may be worth thinking along the lines of the example in Remark~\ref{rem:reg-der-not-nec}.}

\lo{What is the relationship between the two categories, $\cat{\CAct^-}$ and $\cat{\CAct}$?}
\lo{I presume $\cat{\CAct}$ is not cartesian closed?}
\todo{Figure these out.}

\begin{rem}
If one thinks of changes (i.e.~elements of $\changes A$) as morphisms between elements of $\us A$, then regularity resembles functoriality. 
This intuition is explored in Appendix~\ref{sec:2-categories}, where we show that categories of change actions organise themselves into 2-categories.
\end{rem}

\subsection{Adjunctions with $\cat{Set}$}

There is an obvious forgetful functor $\mathcal{F} : \cat{\CAct} \ra \cat{Set}$ that maps every change action $\da{A}$ to its underlying set $\us{A}$ and every differential map $\da{f} : \da{A} \ra \da{B}$ to the function on the underlying sets $\us f : \us A \ra \us B$.

Given a change action $\da{A}$ on a set $\us{A}$, the structure of the change action defines a preorder $\leq$ on $\us{A}$ where $a \leq b$ whenever there exists some $\change a$ such that $a \oplus \change a = b$ 
(indeed, one can think of change actions as particular presentations of preorders). 
Then we can define a quotient functor $\mathcal{Q} : \cat{\CAct} \ra \cat{Set}$ that maps the change action $\da{A}$ to the set $\us{A} /{\sim}$, 
where $\sim$ is the transitive  and symmetric closure of $\leq$; and the action on morphisms  $\da{f} : \da{A} \ra \da{B}$ is defined as $\mathcal{Q}(\da{f})(\left[ a \right]) = \left[ \us{f}(a) \right]$, 
where $[a]$ is the $\sim$-equivalence class of $a$.

Finally, there is a functor $\mathcal{D} : \cat{Set} \ra \cat{\CAct}$ that maps
every set $A$ to the discrete change action $(A, 1, \A{Id}, !, !)$ (where $1$
denotes the terminal object in $\cat{Set}$ and $!$ the universal morphism). This
functor $\mathcal{D}$ sends every function $f$ to the differential map $(f, !)$.

In what follows we will make use of the fact that
$\mathcal{F} \circ \mathcal{D}
= \mathcal{Q} \circ \mathcal{D}
= \ftor{Id}_{\cat{Set}}$.

\begin{rep-lemma}{forgetful-functor-is-right-adjoint}
  The forgetful functor $\mathcal{F}$ is right-adjoint to the functor
  $\mathcal{D}$, with the unit and counit given by:
  \begin{gather*}
    \epsilon : \mathcal{D} \circ \mathcal{F} \ra \ftor{Id}_{\cat{\CAct}}\\
    \epsilon_{\da{A}} = (\A{Id}_A, 0)\\
    \eta : \ftor{Id}_{\cat{Set}} \ra \mathcal{F} \circ \mathcal{D}
    = \ftor{Id}_{\cat{Set}}\\
    \eta_A = \A{Id}_A
  \end{gather*}
\end{rep-lemma}
\proofref

\begin{lem}
  The functor $\mathcal{D}$ is right adjoint to the quotient functor
  $\mathcal{Q}$, with unit and counit given by:
  \begin{gather*}
    \epsilon : \ftor{Id}_{\cat{Set}} \cong \mathcal{Q} \circ \mathcal{D}
    \ra \ftor{Id}_{\cat{Set}}\\
    \epsilon_A = \A{Id}_A\\
    \eta : \ftor{Id}_{\cat{\CAct}} \ra \mathcal{D} \circ \mathcal{Q}\\
    \eta_{\da{A}} = (\left[ \A{Id}_{\da{A}} \right], !)
  \end{gather*}
  where $\left[ \A{Id}_{\da{A}} \right]$ is the map that sends an element $a$ in
  $A$ to the equivalence class $\left[ a \right]$ of $a$ modulo $\sim_\leq$.
  Note that $\eta$ is well-defined since whenever $a \oplus \change a = b$ it
  is the case that $\left[ a \right] = \left[ b \right]$.
\end{lem}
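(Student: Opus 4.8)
The plan is to recognise the stated $\eta$ and $\epsilon$ as the unit and counit of an adjunction $\mathcal{Q} \dashv \mathcal{D}$, and to prove the adjunction by establishing its defining natural bijection $\cat{Set}[\mathcal{Q}(\da{A}), B] \cong \cat{\CAct}[\da{A}, \mathcal{D}(B)]$. Conceptually this is the familiar ``connected components is left adjoint to discrete'' situation $\pi_0 \dashv \mathrm{disc} \dashv \Gamma$, with $\mathcal{Q}$ in the role of $\pi_0$, $\mathcal{D}$ that of the discrete functor, and $\mathcal{F}$ (the forgetful functor, already shown right adjoint to $\mathcal{D}$) that of the global-sections functor $\Gamma$. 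Before the main step I would record the easy computation that the preorder $\leq$ induced by a discrete change action $\mathcal{D}(B) = (B, 1, \A{Id}, !, !)$ is just equality: since $\oplus$ is the projection $B \times 1 \to B$, we have $a \leq b$ iff $a = b$. Hence $\mathcal{Q}(\mathcal{D}(B)) = B/{=} = B$, which is exactly why $\epsilon_B = \A{Id}_B$ is well typed and realises the stated isomorphism $\ftor{Id}_{\cat{Set}} \cong \mathcal{Q} \circ \mathcal{D}$.

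The heart of the argument is to characterise the morphisms $\da{A} \to \mathcal{D}(B)$. Such a morphism is a differential map $(\us{f}, \d f)$ with $\us{f} : \us{A} \to B$ and $\d f : \us{A} \times \changes{A} \to \changes{\mathcal{D}(B)} = 1$. Since the change set $1$ is terminal, $\d f$ is forced to be the unique map $!$, and both regularity equations hold automatically because they are equations between maps into the terminal object. The only surviving constraint is the derivative condition, which (as $\oplus_{\mathcal{D}(B)} : B \times 1 \to B$ is the projection) collapses to $\us{f}(a \oplus_A \change{a}) = \us{f}(a)$ for all $a, \change{a}$. Thus $\us{f}$ is invariant along the generating relation $\leq$, hence along its symmetric and transitive closure $\sim$, so $\us{f}$ factors uniquely through the quotient $q : \us{A} \twoheadrightarrow \us{A}/{\sim} = \mathcal{Q}(\da{A})$; conversely every $g : \mathcal{Q}(\da{A}) \to B$ yields the differential map $(g \circ q, !)$. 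This establishes the bijection, and naturality in both $\da{A}$ and $B$ is immediate from the formula $\mathcal{Q}(\da{f})([a]) = [\us{f}(a)]$.

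To recover the stated presentation I would then observe that the unit $\eta_{\da{A}}$ is the morphism corresponding to $\A{Id}_{\mathcal{Q}(\da{A})}$ under the bijection at $B = \mathcal{Q}(\da{A})$, namely $(q, !) = ([\A{Id}_{\da{A}}], !)$, which is well defined precisely because $a \sim a \oplus \change{a}$. Equivalently one can bypass the hom-set bijection and verify the two triangle identities directly: $\epsilon_{\mathcal{Q}(\da{A})} \circ \mathcal{Q}(\eta_{\da{A}}) = \A{Id}$ reduces to $[a] \mapsto [a]$, and $\mathcal{D}(\epsilon_B) \circ \eta_{\mathcal{D}(B)} = \A{Id}$ uses that on $\mathcal{D}(B)$ the class map $q$ is the identity, since $\sim$ is equality there. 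The only genuinely content-bearing step is the collapse of the morphism condition in the previous paragraph: everything hinges on the change set of $\mathcal{D}(B)$ being terminal, which simultaneously trivialises the derivative component and the regularity requirements and reduces differentiability to $\oplus$-invariance. I expect no real obstacle beyond being careful that invariance of $\us{f}$ along $\leq$ indeed propagates to the full equivalence relation $\sim$, which is what licenses the factorisation through the quotient.
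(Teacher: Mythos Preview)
The paper states this lemma without proof, so there is no argument to compare against. Your proposal is correct: the key observation that $\changes{\mathcal{D}(B)} = 1$ forces $\d f = {!}$, trivialises regularity, and reduces the derivative condition to $\us{f}(a \oplus \change a) = \us{f}(a)$ is exactly the right move, and the resulting hom-set bijection $\cat{Set}[\mathcal{Q}(\da{A}),B] \cong \cat{\CAct}[\da{A},\mathcal{D}(B)]$ (with the unit and counit you describe) establishes the adjunction cleanly.
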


In what follows we assume the Axiom of Choice. 
This is equivalent to the assumption that every set is the underlying set of some group. 
We will suppose a map $\mathcal{G}$ that sends each set $A$ to a group $\mathcal{G}_A$ whose underlying set is $A$.
\begin{defi}
  The functor $\mathcal{G} : \cat{Set} \ra \cat{\CAct}$ maps every
  object to the monoidal change action $(\mathcal{G}_A, \mathcal{G}_A, +, +,
  0)$, and every function $f : A \ra B$ to the differential map $(f, \d^1 f)$,
  with $\d^1f (x, \change x) = -f(x) + f(x + \change x)$. A straightforward
  consequence of this definition is that $\mathcal{G}$ is full and faithful.
\end{defi}

\begin{lem}
  The functor $\mathcal{G}$ is a right adjoint to the forgetful
  functor $\mathcal{F}$.
\end{lem}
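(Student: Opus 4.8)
The plan is to produce, for every change action $\da A$ and every set $B$, a bijection
\[
\Phi_{\da A, B} : \homset{\cat{\CAct}}{\da A}{\mathcal{G}(B)} \longrightarrow \homset{\cat{Set}}{\us A}{B}, \qquad \Phi_{\da A, B}(\us f, \d f) \defeq \us f,
\]
natural in both arguments, thereby witnessing $\mathcal{F} \dashv \mathcal{G}$. The crucial observation is that $\mathcal{G}(B) = (\mathcal{G}_B, \mathcal{G}_B, +, +, 0)$ has as its change set the \emph{group} $\mathcal{G}_B$, with action $\oplus_{\mathcal{G}(B)} = +$. Hence for any function $g : \us A \ra B$ the derivative condition $g(a \oplus_A \change a) = g(a) + \d g(a, \change a)$ determines $\d g$ \emph{uniquely}, namely $\d g(a, \change a) = -g(a) + g(a \oplus_A \change a)$, because $+$ is cancellable in a group. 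Note that $\Phi$ is nothing but the action of $\mathcal{F}$ on hom-sets.

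To build the inverse $\Psi_{\da A, B}$, I would send $g : \us A \ra B$ to the pair $(g, \d g)$ with $\d g$ the forced derivative above. The one point requiring verification is that $(g, \d g)$ is a genuine morphism of $\cat{\CAct}$, i.e.\ that $\d g$ is a \emph{regular} derivative: a direct computation shows $\d g$ is a derivative for $g$, it is the unique such by the cancellation argument, and a unique derivative is automatically regular by Proposition~\ref{prop:unique-derivative-regular}. Once this is established, $\Phi$ and $\Psi$ are mutually inverse: $\Phi \circ \Psi = \Id$ holds by construction, and $\Psi \circ \Phi = \Id$ because the derivative component of any $(\us f, \d f) : \da A \ra \mathcal{G}(B)$ is forced to coincide with the unique derivative of $\us f$.

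Naturality is the formal part. Since $\Phi$ merely extracts the underlying function and, by the composition rule in $\cat{\CAct}$, the underlying function of a composite is the composite of the underlying functions, naturality in both variables reduces to the functoriality of $\mathcal{F}$ together with $\mathcal{F} \circ \mathcal{G}(h) = h$. Concretely, for $h : B \ra B'$ the map $\mathcal{G}(h) \circ (\us f, \d f)$ has underlying function $h \circ \us f$, and for a differential map $\da k : \da{A'} \ra \da A$ the composite $(\us f, \d f) \circ \da k$ has underlying function $\us f \circ \us k$, so $\Phi$ commutes with post- and pre-composition. Unwinding the bijection at $B = \us A$ on $\Id_{\us A}$, the unit $\eta_{\da A} : \da A \ra \mathcal{G}(\us A)$ has underlying function $\Id_{\us A}$, while the counit is $\epsilon_B = \Id_B$ (consistent with $\mathcal{G}$ being full and faithful).

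The only genuine content is the regularity step, and this is precisely where the group (rather than merely monoid) structure of $\mathcal{G}_B$ is indispensable: cancellability forces uniqueness of the derivative, and Proposition~\ref{prop:unique-derivative-regular} then upgrades uniqueness to regularity. Everything else is bookkeeping; the main thing to keep straight is not to conflate the two roles played by $\us A$ in the unit morphism, where it appears both as the underlying set and, via $\mathcal{G}_{\us A}$, as the change set.
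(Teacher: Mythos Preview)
Your proof is correct and takes essentially the same approach as the paper: both use the hom-set bijection characterisation of adjunction, resting on the fact that any function into a change action of the form $\mathcal{G}(B)$ has a \emph{unique} derivative (forced by cancellability in the group $\mathcal{G}_B$). Your version is simply more detailed, making explicit the regularity step via Proposition~\ref{prop:unique-derivative-regular} and the naturality verification, both of which the paper leaves implicit.
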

\begin{proof}
  If one uses the hom-set isomorphism definition of adjunction, it follows
  trivially from the fact that every function into a change action of the form
  $\mathcal{G}(A)$ has one and only one derivative.
\end{proof}

\begin{rem}
  In a nutshell, this means there is a sequence of four adjunctions
  \begin{gather*}
    \mathcal{Q} \ts \mathcal{D} \ts \mathcal{F} \ts \mathcal{G}
  \end{gather*}
  where $\mathcal{Q}$ preserves finite products and $\mathcal{D}, \mathcal{G}$ are full and faithful.
\lo{What are the mathematical consequences of (non-topos) differential cohesion?}
  This is precisely the setting of Lawvere's notion of differential cohesion
  \cite{nLabDifferentialCohesive}
  (with the exception that $\cat{\CAct}$ is not a topos), which has been proposed to
  unify many settings for higher differential geometry.
  
  It is a topic of ongoing research to put this fact in the context of the recent
  advances in differential cohesive
  type theory \cite{grossdifferential,shulman2018brouwer}
  (the internal language of a topos with differential cohesion), which
  have recently been used to give a constructive formalization of Brouwer's fixed
  point theorem \cite{shulman2018brouwer}.
\end{rem}

\section{Change actions on arbitrary categories}
\label{sec:change-actions-arbitrary}
The definition of change actions makes no use of any properties of $\cat{Set}$ beyond the existence of products. 
Indeed, change actions can be characterised as just a kind of multi-sorted algebra, which is definable in any category with products.

\subsection{The category $\ftor{\CAct}(\cat{C})$}

Consider the category $\cart$ of (small) cartesian categories 
(i.e.~categories with chosen finite products) and product-preserving functors. 
We can define an endofunctor $\ftor{\CAct} : \cart \rightarrow \cart$ sending a category $\cat{C}$ to the category of (internal) change actions on $\cat{C}$.

The objects of $\ftor{\CAct}(\cat{C})$ are tuples 
$\da{A} = (\us A, \changes{A},\oplus_A, \plus_A, 0_A)$ where $\us A$ and $\changes{A}$ are (arbitrary) objects in $\cat{C}$,
$(\changes{A}, \plus_A, 0_A)$ is a monoid object in $\cat{C}$, 
and $\oplus_A : \us A \times \changes{A} \rightarrow \us A$ is a $\cat{C}$-morphism
such that the following diagrams---specifying monoid action---commute (omitting the obvious structural morphisms):
    \begin{center}
    \begin{tikzcd}
      \us A \arrow[rr, "\pair{\Id}{0_A}"] \arrow[rrd,swap, "\Id"]
      && \us A \times \changes{A} \arrow[d, "\oplus_A"]\\
      && \us A
    \end{tikzcd}\qquad
    \begin{tikzcd}
      \us A \times \changes{A} \times \changes{A}
      \arrow[d, swap, "\pair{\Id}{\plus_A}"] 
      \arrow[r, "\pair{\oplus_A}{\Id}"]
      & \us A \times \changes{A}
      \arrow[d, "\oplus_A"]\\
      \us A \times \changes{A}
      \arrow[r, swap, "\oplus_A"] & \us A
    \end{tikzcd}
    \end{center}

Given objects $\da{A}, \da{B}$ in $\ftor{\CAct}(\cat{C})$, 
the morphisms of $\ftor{\CAct}(\da{A}, \da{B})$ are pairs $\da{f} = (\us{f}, \d f)$ where
$\us f : \us A \ra \us B$ and $\d f : \us A \times \changes{A} \ra \changes{B}$ are morphisms in $\cat{C}$, such that the following diagrams commute:
    \begin{center}
      \begin{tikzcd}
        \us A \times \changes{A}
        \arrow[rr, "\pair{\us f \circ \pi_1}{\d f}"]
        \arrow[d, swap, "\oplus_A"] 
        && \us B \times \changes{B} \arrow[d, "\oplus_B"]\\
        \us A \arrow[rr, swap, "\us{f}"] && \us B
      \end{tikzcd}
      \qquad
      \begin{tikzcd}
        \us A \arrow[d, swap, "\pair{\Id}{0_A \circ !}"] \arrow[r, "!"] & 1 \arrow[d, "0_B"]\\
        \us A \times \changes{A} \arrow[r, swap, "\d f"] & \changes{B}
      \end{tikzcd}
    \end{center}
    \begin{center}
      \begin{tikzcd}
        \us A \times (\changes{A} \times \changes{A})
        \arrow[rr, "\pair{\pair{\pi_1}{\pi_1 \circ \pi_2}}{\A{a}}"]
        \arrow[dd, "\d f \circ (\Id \times +_A)", swap]
        && (\us A \times \changes{A}) \times ((\us A \times \changes{A}) \times \changes{A})
        \arrow[dd, "\d f \times (\d f \circ (\oplus_A \times \Id))"]
        \\\\
        \changes{B}
        &&
        \changes{B} \times \changes{B}
        \arrow[ll, "+_B"]
      \end{tikzcd}
    \end{center}
The first diagram states the derivative condition: $\us{f}(x \oplus_A \change x) = \us{f}(x) \oplus_B \d f(x, 
\change x)$. The other two assert a diagrammatic version of the regularity of
$\d f$.

The chain rule can then be expressed naturally by pasting two instances of the previous
diagram:
\begin{center}
  \begin{tikzcd}
    \us{A} \times \changes{A}
    \arrow[d, swap, "\oplus_A"]
    \arrow[rr, "\pair{\us{f} \circ \pi_1}{\d f}"] 
    \arrow[rrrr, bend left,
    "\pair{(\us{g} \circ \us{f}) \circ \pi_1}{\d g \circ \pair{\us{f} \circ \pi_1}{\d f}}"]
    &&
    \us B \times \changes{B} \arrow[d, "\oplus_B"]
    \arrow[rr, "\pair{\us{g} \circ \pi_1}{\d \us{g}}"] &&
    \us C \times \changes{C}
    \arrow[d, "\oplus_C"] \\
    \us A \arrow[rr, "\us{f}"]
    \arrow[rrrr, bend right, swap, "\us{g} \circ \us{f}"]
    && \us B
    \arrow[rr, "\us{g}"] && \us C
  \end{tikzcd}
\end{center}
Hence
\(
\da{f} \circ \da{g} = \pair{(\us g \circ \us f) \circ \pi_1}{\d g \circ \pair{\us f \circ \pi_1}{\d f}}.
\)

Now, given a product-preserving functor $\ftor{F} : \cat{C} \ra \cat{D}$, there
is a corresponding functor
$\ftor{\CAct}(\ftor{F}) : \ftor{\CAct}(\cat{C}) \ra \ftor{\CAct}(\cat{D})$
given by:
\begin{align*}
  \ftor{\CAct}(\ftor{F})(\us{A}, \changes{A}, \oplus_A, \plus_A, 0_A) 
  &\defeq (\ftor{F}(\us{A}), \ftor{F}(\changes{A}), \ftor{F}(\oplus_A), \ftor{F}(\plus_A), \ftor{F}(0_A))\\
  \ftor{\CAct}(\ftor{{F}})(\us f, \d f) &\defeq (\ftor{F}(\us {f}), \ftor{F}(\d f))
\end{align*}

We can embed $\cat{C}$ fully and faithfully into $\ftor{\CAct}(\cat{C})$ via the functor
$\eta_{\cat{C}}$ which sends an object $A$
of $\cat{C}$ to the ``trivial'' change action $\da{A}_\star = (A, \top, \pi_1, !,
!)$ and every morphism $f : A \ra B$ of $\cat{C}$ to the morphism $(f, !)$. As
before, this functor extends to a natural transformation from the identity
functor to $\ftor{\CAct}$.

Additionally, there is an obvious forgetful functor $\epsilon_{\cat{C}} : \ftor{\CAct}(\cat{C}) \ra \cat{C}$,
which defines the components of a natural transformation $\epsilon$ from the functor $\ftor{\CAct}$ to the identity endofunctor $\Id$.

Given $\cat{C}$, we write $\xi_{\cat{C}}$ for the functor $\ftor{\CAct}(\epsilon_{\cat{C}}) : \ftor{\CAct}(\ftor{\CAct}(\cat{C})) \ra \ftor{\CAct}(\cat{C})$. \footnote{
  One might expect $\ftor{\CAct}$ to be a comonad with $\epsilon$ as a counit. But if this were the case, we would have $\xi_{\cat{C}} = \epsilon_{\ftor{\CAct}(\cat{C})}$, which
  is, in general, not true.
} Explicitly, this functor maps an object $(\da{A}, \da{B}, \da{\oplus}, \da{\plus}, \da{0})$ in $\ftor{\CAct}(\ftor{\CAct}(\cat{C}))$ to
the object $(\us{A}, \us{B}, \us\oplus, \us\plus, \us 0)$.
Intuitively, $\epsilon_{\ftor{\CAct}(\cat{C})}$ prefers the ``original'' structure on objects, whereas
$\xi_{\cat{C}}$ prefers the ``higher'' structure. The equaliser of these two
functors is precisely the category of change actions whose higher structure is
the original structure.

\subsection{Products and coproducts in $\ftor{\CAct}(\cat{C})$}

We have defined $\ftor{\CAct}$ as an endofunctor on cartesian categories.
This is well-defined: if $\cat{C}$ has all finite (co)products, so does $\ftor{\CAct}(\cat{C})$.
Let $A = (\us A, \changes A, \oplus_A, +_A, 0_A)$ and $B = (\us B, \changes B, \oplus_B, +_B, 0_B)$ be change actions on $\cat{C}$.
We present their product and coproducts as follows. 

\begin{rep-theorem}{cact-products}
  The following change action is the product of $A$ and $B$ in $\ftor{\CAct}(\cat{C})$
  \begin{gather*}
    A \times B \defeq (\us A \times \us B, \changes A \times \changes B, \oplus_{A \times B}, +_{A \times B}, \pair{0_A}{0_B})
  \end{gather*}
where
$\oplus_{A \times B} \defeq \pair{\oplus_A \circ (\pi_1 \times \pi_1)}{\oplus_B \circ (\pi_2 \times \pi_2)}$ and $+_{A \times B} \defeq \langle+_A \circ (\pi_1 \times \pi_1), +_B \circ (\pi_2 \times \pi_2)\rangle$.
The projections are
\( \overline{\pi_1} = (\pi_1, \pi_1 \circ \pi_2)
\)
and
\(
\overline{\pi_2} = (\pi_2, \pi_2 \circ \pi_2),\)
writing $\overline{f}$ for maps $f$ in $\ftor{\CAct}$ to distinguish them from $\cat{C}$-maps.
\end{rep-theorem}
\proofref

\begin{rep-theorem}{cact-terminal}
  The change action $\overline{\top} = (\top, \top, \pi_1, \pi_1, \Id_\top)$ is the terminal object in $\ftor{\CAct}(\cat{C})$, where $\top$ is the terminal object of $\cat{C}$.
  Furthermore, if $A$ is a change action every point $\us{f} : \top \ra \us{A}$ in $\cat{C}$ is differentiable, with (unique) derivative $0_A$.
\end{rep-theorem}
\proofref

Whenever we have a differential map $f : A \times B \ra C$ between change actions, we can compute its derivative $\d f$ by adding together its ``partial'' derivatives:\footnote{
  Alternatively, one can define the (first) partial derivative of $f(x, y)$ as a map $\delta_1 f$ such that $f(x \oplus \change x, y) = f(x, y) \oplus \delta_1(x, y, \change x)$.
  It can be shown that a map is differentiable iff its first and second derivatives exist.
}.
\begin{rep-lemma}{partial-derivatives}
    Let $f : A \times B \ra C$ be a differential map.
    Then \[\d f((a, b), (\change a, \change b)) = +_C \circ \pair{\d f((a, b), (\change a, 0_B))}{\d f((\oplus_A \circ \pair{a}{\change a}, b), (0_A, \change b))} \]
    (The notational abuse is justified by the internal logic of a cartesian category.)
\end{rep-lemma}
\proofref

\begin{rep-theorem}{cact-coproducts}
  If $\cat{C}$ is distributive, with law $\delta_{A, B, C} : (A \sqcup B) \times C \ra (A \times C) \sqcup (B \times C)$, the following change action is the coproduct of $A$ and $B$ in $\ftor{\CAct}(\cat{C})$
  \begin{gather*}
    A \sqcup B \defeq (\us A \sqcup \us B, \changes A \times \changes B, \oplus_{A \sqcup B}, +_{A \sqcup B}, \pair{0_A}{0_B})
  \end{gather*}
where 
$\oplus_{A \sqcup B} \defeq \left[ \oplus_A \circ (\Id_A \times \pi_1), \oplus_B \circ (\Id_B \times \pi_2) \right] \circ \delta_{A, B, C}$, and $+_{A \sqcup B} \defeq \langle +_A \circ (\pi_1 \times \pi_1), +_B \circ (\pi_2 \times \pi_2)\rangle$.
The injections are
  \( \overline{\iota_1} = (\iota_1, \pair{\pi_2}{0_B})
  \) 
  and
  \(\overline{\iota_2} = (\iota_2, \pair{0_A}{\pi_2})
  \).  
\end{rep-theorem}
\proofref

\subsection{Change actions as Lawvere theories}

According to their definition, change actions seem nothing more than multi-sorted
algebras. This is perhaps misleading in that it suggests that differential
maps should correspond to algebra homomorphisms, which is in fact false:
a homomorphism of change actions $\da{A}, \da{B}$ would be a pair $(u, v)$ 
where $\us{u} : \us A \ra \us B$ is a function and $v : \changes{A} \ra \changes{B}$ is
a monoid homomorphism such that $u(a \oplus \change{a}) = u(a) \oplus v(\change{a})$.
That is to say, a homomorphism of change actions as algebras is precisely a
homomorphism of monoid actions.

There is a sense, however, in which differential maps are exactly algebra
homomorphisms. To make this precise, we require a few new definitions.

\begin{defi}
  The $\changes$-theory $\theo{\changes}$ is the free Cartesian category
  generated by objects $X, \changes{X}$ and morphisms
  \begin{gather*}
    \oplus : X \times \changes{X} \ra X\\
    + : \changes{X} \times \changes{X} \ra \changes{X}\\
    0 : 1 \ra \changes{X}
  \end{gather*}
  subject to the equations:
  \begin{gather*}
    \oplus \circ \pair{\A{Id}}{0 \circ !} = \A{Id}\\
    \oplus \circ (\A{Id} \times +)
    = \oplus \circ (\oplus \times \A{Id})\circ \alpha^{-1} 
  \end{gather*}
  A $\changes$-algebra on a Cartesian category $\cat{C}$ is a product-preserving
  functor from $\theo{\changes}$ into $\cat{C}$.
\end{defi}

\begin{rem}
  Every $\changes$-algebra $F : \theo{\changes} \ra \cat{C}$ corresponds to a
  change action on $\cat{C}$ given by:
  \begin{gather*}
    \da{F} = (F(X), F(\changes X), F(\oplus), F(+), F(0))
  \end{gather*}
  Conversely, every change action on $\cat{C}$ induces a $\changes$-algebra.
  However, $\changes$-algebra homomorphisms do not correspond to differentiable
  morphisms, hence the category of $\changes$-algebras and $\changes$-algebra
  homomorphisms is not equivalent to the category $\CAct$ of change actions.
\end{rem}

\begin{defi}
  The $\A{T}$-theory $\theo{\A{T}}$ is the free Cartesian category
  generated by objects $X, \A{T}X$ and morphisms
  \begin{gather*}
    \oplus : \A{T}X \ra X\\
    \Pi    : \A{T}X \ra X
  \end{gather*}
  A $\A{T}$-algebra on a Cartesian category $\cat{C}$ is a product-preserving
  functor from $\theo{\A{T}}$ into $\cat{C}$. A homomorphism of $\A{T}$-algebras
  $F, G$ is a natural transformation $\phi : F \ra G$.
\end{defi}

\begin{rep-lemma}{delta-algebras-are-t-algebras}
  Consider the product-preserving functor $\mathbb{T} : \theo{\A{T}} \ra
  \theo{\changes}$ defined by:
  \begin{gather*}
    \mathbb{T}(X) = X\\
    \mathbb{T}(\A{T}X) = X \times \changes{X}\\
    \mathbb{T}(\oplus) = \oplus\\
    \mathbb{T}(\Pi) = \pi_1
  \end{gather*}
  Every $\changes$-algebra $F$ corresponds then to a $\A{T}$-algebra
  $F \circ \mathbb{T}$. Furthermore, given $\changes$-algebras $F, G$, there is
  a one-to-one correspondence between $\mathbb{T}$-algebra homomorphisms
  $\phi : F \circ \mathbb{T} \ra G \circ \mathbb{T}$ and pairs $(f, f')$ of
  a differentiable function $f$ and its derivative $f'$ between the underlying
  change actions $\da{F}, \da{G}$.
\end{rep-lemma}
\proofref

These definitions exhibit $\changes$-algebras and $\A{T}$-algebras as
multi-sorted Lawvere algebras. Differential morphisms between
$\changes$-algebras are precisely Lawvere homomorphisms when the corresponding
$\changes$-algebras are regarded as $\A{T}$-algebras.

The parallel between $\changes$-algebras and $\A{T}$-algebras is strikingly
similar to the connection between (generalized) Cartesian differential categories
and categories with tangent structure that was outlined in \cite{cockett2014differential}.

\subsection{Stable derivatives and additivity}

We do not require derivatives to be additive in their second argument; indeed in many cases they are not. 
Under some simple conditions, however, (regular) derivatives can be shown to be additive.

\lo{In this subsection, we have used both ``linear'' and ``additive'' interchangeably.
Clarify this / make this consistent.}

\begin{defi}\rm
  Given a (internal) change action $\da{A}$ and arbitrary objects $\us B, \us C$ in a cartesian category $\cat{C}$,
  a morphism
  $u : \us A \times \us B \ra \us C$
  is \emph{stable} whenever the 
  following diagram commutes:
  \begin{center}
    \begin{tikzcd}
      (\us A \times \changes A) \times \us B
      \arrow[d, "\pi_1 \times \Id"]
      \arrow[r, "\oplus \times \Id"]
      &\us A \times \us B
      \arrow[d, "u"]
      \\
      \us A \times \us B
      \arrow[r, "u"]
      & \us C
    \end{tikzcd}
  \end{center}  
\end{defi}

\medskip

If one thinks of $\changes A$ as the object of ``infinitesimal'' transformations on $\us A$, then the preceding definition says that a morphism $u : \us A \times \us B \ra \us C$ is stable whenever infinitesimal changes on the input $A$ do not affect its output.

\begin{rep-lemma}{stable-derivatives-additive}
  Let $\da{f} = (\us f, \d f)$ be a differential map in $\ftor{\CAct}(\cat{C})$. If
  $\d f$ is stable, then it is additive in its second argument\footnote{
    Note that the converse is not the case, i.e. a derivative can be additive but not stable.
  }, i.e. the following
  diagram commutes:
  \begin{center}
    \begin{tikzcd}[ampersand replacement=\&]
      \us A \times (\changes A \times \changes A)
      \arrow[rrr, "\pair{\pair{\pi_1}{\pi_1 \circ \pi_2}}{\pair{\pi_1}{\pi_2\circ\pi_2}}"]
      \arrow[dd, "\d f \circ (\Id_{\us{A}} \times +_A)", swap]
      \& \& \& (\us A \times \changes A) \times (\us A \times \changes A)
      \arrow[dd, "\d f \times \d f"]
      \\
      \\
      \changes{B}
      \& \& \&
      \changes B \times \changes{B}
      \arrow[lll, "+_B"]
    \end{tikzcd}
  \end{center}
\end{rep-lemma}
\proofref
\begin{rep-lemma}{composition-stable}
  Let $\da{f} = (\us f, \d f)$ and $\da{g} = (\us g, \d g)$ be differential maps, with
  $\d g$ stable. Then $\d (\da{g} \circ \da{f})$ is stable.
\end{rep-lemma}
\proofref
It is straightforward to see that the category $\ftor{Stab}(\cat{C})$ of change actions and differential maps with stable derivatives is a subcategory of $\ftor{\CAct}(\cat{C})$.

\section{Higher-order derivatives: the extrinsic view}
\label{sec:extrinsic}

\lochanged{In this section we study categories in which every object is equipped with a change action, and every morphism specifies a corresponding differential map.
This provides a simple way of characterising categories which are models of higher-order differentiation purely in terms of change actions.}

\subsection{Change action models}

\lochanged{Recall that a \emph{copointed endofunctor} is a pair $(\ftor{F}, \sigma)$ where the endofunctor $\ftor{F} : \cat{C} \ra \cat{C}$ is equipped with a natural transformation $\sigma : \ftor{F} \xrightarrow{.} \ftor{Id}$.
A \emph{coalgebra of a copointed endofunctor $(\ftor{F}, \sigma)$} is an object $A$ of $\cat{C}$ together with a morphism $\alpha : A \ra \ftor{F}A$ such that $\sigma_A \circ \alpha = \Id_A$.}

\begin{defi}\rm
We call a coalgebra $\alpha : \cat{C} \ra \ftor{\CAct}(\cat{C})$ of the copointed endofunctor $(\ftor{\CAct}, \epsilon)$ a \emph{change action model} (on $\cat{C})$.
\end{defi}

\noindent\emph{Assumption}. Throughout Sec.~\ref{sec:extrinsic}, we fix a change action model $\alpha : \cat{C} \to \ftor{\CAct}(\cat{C})$.

Given an object $A$ of $\cat{C}$, the coalgebra $\alpha$ specifies a (internal) change action 
$\alpha(A) = (A, \changes A, \oplus_A, +_A, 0_A)$ in $\ftor{\CAct}(\cat{C})$. 
(We abuse notation and write $\changes A$ for the \lochanged{carrier object of the monoid specified in $\alpha(A)$}; similarly for $+_A, \oplus_A$, and $0_A$.) 
Given a morphism $f : A \ra B$ in $\cat{C}$,
there is an associated differential map $\alpha(f) = (f, \d f) : \alpha(A) \ra \alpha(B)$.
Since $\d f : A \times \changes{A} \ra \changes{B}$ is also a $\cat{C}$-morphism, there is a corresponding differential map $\alpha(\d f) = (\d f, \d^2 f)$ in $\ftor{\CAct}(\cat C)$, where $\d^2 f : (A \times \changes A) \times (\changes A \times \changes^2A) \ra \changes^2 {B}$ is a second derivative for $f$. 
Iterating this process, we obtain an $n$-th derivative $\d^n f$ for every $\cat{C}$-morphism $f$.
Thus change action models offer a setting for reasoning about higher-order differentiation.

\subsection{Tangent bundles in change action models}

In differential geometry the tangent bundle functor, which maps every manifold to its tangent bundle, is an important construction. 
There is an endofunctor on change action models reminiscent of the tangent bundle functor, with analogous properties.

\begin{defi}\rm
  The \emph{tangent bundle functor} $\ftor{T} : \cat{C} \ra \cat{C}$ is defined as $\ftor{T}A \defeq A \times \changes A$ and $\ftor{T}f \defeq \pair{f \circ \pi_1}{\d f}$.
\end{defi}

\noindent\emph{Notation}. We use shorthand $\pi_{ij} \defeq \pi_i \circ \pi_j$.

  The tangent bundle functor $\ftor{T}$ preserves products up to isomorphism,
  i.e.~for all objects $A, B$ of $\cat{C}$, we have $\ftor{T}(A \times B) \cong \ftor{T}A \times \ftor{T}B$ and $\ftor{T}1 \cong 1$. 
  In particular, 
    $\phi_{A, B} \defeq \pair{\pair{\pi_{11}}{\pi_{12}}}
    {\pair{\pi_{21}}{\pi_{22}}} : \ftor{T}A \times \ftor{T}B \ra \ftor{T}(A \times B)$
  is an isomorphism.
  Consequently, given maps $f : A \ra B$ and $g : A \ra C$, then, up to the previous
  isomorphism, $\ftor{T}\pair{f}{g} = \pair{\ftor{T}f}{\ftor{T}g}$.

A consequence of the structure of products in $\ftor{\CAct}(\cat{C})$ is that the map
$\oplus_{A \times B}$ inherits the pointwise structure in the following sense:
\begin{lem}
  Let $\phi_{A, B} : \ftor{T}A \times \ftor{T}B \ra \ftor{T}(A \times B)$ be the canonical isomorphism described above. Then $\oplus_{A \times B} \circ \phi_{A, B} = \oplus_A \times \oplus_B$.
\end{lem}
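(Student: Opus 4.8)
The plan is to verify the equation $\oplus_{A \times B} \circ \phi_{A, B} = \oplus_A \times \oplus_B$ directly, by appealing to the universal property of the product $A \times B$. Both sides are morphisms $\ftor{T}A \times \ftor{T}B \ra A \times B$: recall from Theorem~\ref{thm:cact-products} that $\changes(A \times B) = \changes A \times \changes B$, so that $\ftor{T}(A \times B) = (A \times B) \times (\changes A \times \changes B)$ and $\oplus_{A \times B} = \pair{\oplus_A \circ (\pi_1 \times \pi_1)}{\oplus_B \circ (\pi_2 \times \pi_2)}$. Since $A \times B$ is a product, it suffices to show that the two morphisms agree after post-composing with each of the projections $\pi_1 : A \times B \ra A$ and $\pi_2 : A \times B \ra B$.

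The crux of the argument is the pair of identities $(\pi_1 \times \pi_1) \circ \phi_{A, B} = \pi_1$ and $(\pi_2 \times \pi_2) \circ \phi_{A, B} = \pi_2$, where on each right-hand side $\pi_i$ denotes the projection out of $\ftor{T}A \times \ftor{T}B$. These follow purely from the definition $\phi_{A,B} = \pair{\pair{\pi_{11}}{\pi_{12}}}{\pair{\pi_{21}}{\pi_{22}}}$ together with the abbreviation $\pi_{ij} = \pi_i \circ \pi_j$: unfolding $\pi_1 \times \pi_1 = \pair{\pi_1 \circ \pi_1}{\pi_1 \circ \pi_2}$ and composing with $\phi_{A,B}$ collapses the nested projections to $\pair{\pi_{11}}{\pi_{21}} = \pair{\pi_1}{\pi_2} \circ \pi_1 = \pi_1$, and symmetrically for the second identity. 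Granting these, for the first projection I would compute
\[
\pi_1 \circ \oplus_{A \times B} \circ \phi_{A, B}
= \oplus_A \circ (\pi_1 \times \pi_1) \circ \phi_{A, B}
= \oplus_A \circ \pi_1
= \pi_1 \circ (\oplus_A \times \oplus_B),
\]
where the last equality is the defining property of $\oplus_A \times \oplus_B$; the case of $\pi_2$ is identical with $\oplus_B$ and $\pi_2$ in place of $\oplus_A$ and $\pi_1$, so the universal property closes the proof.

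There is no real obstacle here: the entire content is the bookkeeping of nested projections through the shuffle isomorphism $\phi_{A,B}$, and the only point requiring care is keeping straight which $\pi_i$ lives over which product while invoking $\changes(A \times B) = \changes A \times \changes B$. Alternatively, exploiting the internal-logic convention already licensed in the paper, one can simply track a generalized element: $\phi_{A,B}$ sends $((a, \change a), (b, \change b))$ to $((a, b), (\change a, \change b))$, which $\oplus_{A \times B}$ then sends to $(a \oplus_A \change a, b \oplus_B \change b)$, exactly the image of $((a, \change a), (b, \change b))$ under $\oplus_A \times \oplus_B$.
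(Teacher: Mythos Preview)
Your proof is correct. The paper does not actually supply a proof of this lemma: it is stated immediately after introducing $\phi_{A,B}$ as ``a consequence of the structure of products in $\ftor{\CAct}(\cat{C})$'' and left at that. Your verification via the universal property of $A \times B$, unwinding the explicit description of $\oplus_{A \times B}$ from Theorem~\ref{thm:cact-products} together with the definition of $\phi_{A,B}$, is exactly the routine computation the paper is gesturing at; your alternative element-chasing argument in the internal logic is equally acceptable and is the style the paper uses elsewhere (e.g.\ Lemma~\ref{lem:partial-derivatives}).
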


It will often be convenient to operate directly on the functor $\ftor{T}$, rather than on the underlying derivatives. 
For these, the following results are useful:
\begin{rep-lemma}{tangent-bundle-monad}
  The following families of morphisms are natural transformations:
  \(
  \pi_1, \oplus_A : \ftor{T}(A) \ra A, \
    \arr{z} \defeq \pair{\Id}{0} : A \ra \ftor{T}(A), \
    \arr{l} \defeq \pair{\pair{\pi_1}{0}}{\pair{\pi_2}{0}} : \ftor{T}(A) \ra \ftor{T}^2(A).
  \)  
  Additionally, the triple $(\ftor{T}, \arr{z}, \ftor{T}\oplus)$ defines a monad on $\cat{C}$.
\end{rep-lemma}
\proofref
\todo{
  More lemmas about the tangent bundle functor? Hopefully something about
  vector fields, maybe even Leibniz's rule.
}

A particularly interesting class of change action models are those that are also cartesian closed.
Surprisingly, this has as an immediate consequence that differentiation is itself internal to the category.

\begin{rep-lemma}{internalisation}[Internalisation of derivatives]
  Whenever $\cat{C}$ is cartesian closed, there is a morphism $\arr{d}_{A, B} : (A \Ra B) \ra (A \times \changes{A}) \Ra \changes{B}$ such that, 
  for any morphism $f : 1 \times A \ra B$, $\arr{d}_{A, B} \circ \Lambda f = \Lambda (\d f \circ \pair{\pair{\pi_1}{\pi_{12}}}{\pair{\pi_1}{\pi_{22}}})$.
\end{rep-lemma}
\proofref

Under some conditions, we can classify the structure of the exponentials in $(\ftor{\CAct}, \epsilon)$-coalgebras. 
This requires the existence of an infinitesimal object.\footnote{The concept  ``infinitesimal object'' is borrowed from synthetic differential geometry \cite{Kock06}. 
However, there is nothing intrinsically ``infinitesimal'' about these objects here.}

\begin{defi}\rm
  If $\cat{C}$ is cartesian closed,
  an \emph{infinitesimal object} $D$ is an object of $\cat{C}$ such that the
  tangent bundle functor $\ftor{T}$ is represented by the covariant Hom-functor
  $D \Ra (\cdot)$, i.e. there is a natural isomorphism $\phi : (D \Ra (\cdot)) \xrightarrow{.} \ftor{T}$.
\end{defi}
\begin{lem}
  Whenever there is an infinitesimal object in $\cat{C}$, the tangent bundle
  $\ftor{T}(A \Ra B)$ is naturally isomorphic to $A \Ra \ftor{T}B$.
\end{lem}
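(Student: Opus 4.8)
The plan is to obtain the isomorphism by composing the representing isomorphism $\phi$ with the standard currying isomorphisms of the cartesian closed structure on $\cat{C}$. Since $D$ is infinitesimal, $\phi$ is a natural isomorphism whose component at any object $X$ is $\phi_X : (D \Ra X) \xrightarrow{\cong} \ftor{T}X$. Instantiating at $X = A \Ra B$ gives, as a first step,
\[
\ftor{T}(A \Ra B) \;\cong\; D \Ra (A \Ra B).
\]

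Next I would apply the exponential transpose twice, interposing the symmetry of the product. In any cartesian closed category currying and uncurrying furnish natural isomorphisms
\[
D \Ra (A \Ra B) \;\cong\; (D \times A) \Ra B \;\cong\; (A \times D) \Ra B \;\cong\; A \Ra (D \Ra B),
\]
where the middle isomorphism is precomposition with the symmetry $A \times D \cong D \times A$. These are the usual isomorphisms witnessing $(A \Ra B)^{D} \cong A \Ra (D \Ra B)$, and each is natural in all of its arguments. Finally, since $A \Ra (\cdot)$ is a functor it preserves isomorphisms, so applying it to $\phi_B : (D \Ra B) \xrightarrow{\cong} \ftor{T}B$ yields
\[
A \Ra (D \Ra B) \;\cong\; A \Ra \ftor{T}B.
\]
Composing the three displays produces the desired isomorphism $\ftor{T}(A \Ra B) \cong A \Ra \ftor{T}B$.

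It remains to check naturality, which I would verify componentwise. Naturality in $B$ (covariant) follows because every constituent is natural: for the first step, naturality of $\phi$ applied to the morphism $A \Ra g$ relates $\phi_{A \Ra B}$ and $\phi_{A \Ra B'}$ with $\ftor{T}(A \Ra g)$; the currying and symmetry isomorphisms are natural; and for the last step, naturality of $\phi$ in $B$ together with functoriality of $A \Ra (\cdot)$ handles $A \Ra \phi_B$ versus $A \Ra \phi_{B'}$. Naturality in $A$ (contravariant) is symmetric, using naturality of $\phi$ at the morphism $h \Ra B$ and the fact that postcomposition with the fixed map $\phi_B$ induces a natural transformation $(\cdot) \Ra (D \Ra B) \to (\cdot) \Ra \ftor{T}B$.

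The only point demanding care is the naturality bookkeeping itself: one must track the direction of the symmetry isomorphism and confirm that the two instantiations of $\phi$ — at $A \Ra B$ and at $B$ — interact correctly through the currying isomorphisms. Because $\phi$ is asserted natural only in its single slot, this amounts to chasing the relevant naturality squares, which commute by functoriality; there is no genuinely hard step, the substance of the result being entirely contained in the hypothesis that $\ftor{T}$ is represented by $D \Ra (\cdot)$.
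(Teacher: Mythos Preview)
Your argument is correct and is exactly the standard one: represent $\ftor{T}$ by $D \Ra (\cdot)$, swap the exponents via currying and symmetry, and then reapply $\phi$ under $A \Ra (\cdot)$; naturality is inherited from that of $\phi$ and of the currying isomorphisms. The paper does not actually supply a proof for this lemma (it is stated without proof or appendix reference), so there is nothing to compare against beyond noting that your proof is the intended routine verification.
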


We would like the tangent bundle functor to preserve the exponential
structure; in particular we would expect a result of the form $\der{(\lambda y . t)}{x} = \lambda y . \der{t}{x}$,
which is true in differential $\lambda$-calculus \cite{DBLP:journals/tcs/EhrhardR03}. 
Unfortunately it seems impossible to prove in general that this equation holds, although weaker results are available.
If the tangent bundle functor is representable, however, additional structure is preserved.

\begin{rep-theorem}{oplus-lambda-infinitesimal}
  The isomorphism between the functors $\ftor{T}(A \Ra (\cdot))$ and $A \Ra \ftor{T}(\cdot)$ respects the structure of $\ftor{T}$, in the sense that the following diagram commutes:
  \end{rep-theorem}
\begin{center}
    \begin{tikzcd}[ampersand replacement=\&]
      \ftor{T}(A \Ra B)
      \arrow[d, "\oplus_{A \Ra B}", swap]
      \arrow[r, "\cong"]
      \& A \Ra \ftor{T}(B)
      \arrow[ld, "\Id_A \Ra \oplus_B"]
      \\
      A \Ra B
    \end{tikzcd}
\end{center}
\proofref
\todo{
  Can one differentiate under a lambda in these conditions?
}


\section{Examples of change action models}

\label{sec:examples}

\subsection{Generalised cartesian differential categories}

\emph{Generalised cartesian differential categories} (GCDC) \cite{cruttwell2017cartesian}---a recent generalisation of cartesian differential categories \cite{blute2009cartesian}---are models of differential calculi.
We show that change action models generalise GCDC in that GCDCs give rise to change action models in \lochanged{three}\footnote{\lochanged{The third, the Eilenberg-Moore model, is presented in Appendix~\ref{sec:em-model}}.} different (non-trivial) ways.
In this subsection let $\cat{C}$ be a GCDC (we assume familiarity with the definitions and notations in \cite{cruttwell2017cartesian}).


\paragraph{1.~The Flat Model.}
\label{sec:flat-model}
Define the functor $\alpha : \cat{C} \ra \ftor{\CAct}(\cat{C})$ as follows.
Let $f : A \ra B$ be a $\cat{C}$-morphism.
Then $\alpha(A) \defeq (A, L_0(A), \pi_1, +_A, 0_A)$ and $\alpha(f) \defeq (f, \DF{f})$.

\begin{rep-theorem}{GCDC-change-action-model}
  The functor $\alpha$ is a change action model.
\end{rep-theorem}

\proofref

\paragraph{2.~The Kleisli Model.} GCDCs admit a tangent bundle functor, defined analogously to the standard notion in differential geometry. 
Let $f : A \to B$ be a $\cat{C}$-morphism.
Define the \emph{tangent bundle functor} $\ftor{T} : \cat{C} \ra \cat{C}$ as: $\ftor{T}A \defeq A \times L_0(A)$, and $\ftor{T}f \defeq \pair{f \circ \pi_1}{\DF{f}}$.
The functor $\ftor{T}$ is in fact a monad, with unit $\eta = \pair{\Id}{0_A} : A \ra A \times L_0(A)$ and multiplication $\mu : (A \times L_0(A)) \times L_0(A)^2 \ra A \times L_0(A)$ defined by the composite:
\[
(A \times L_0(A)) \times L_0(A)^2 
      \xrightarrow{\pair{\pi_1 \circ \pi_1}{\pair{\pi_2 \circ \pi_1}{\pi_1 \circ \pi_2}}}
      A \times L_0(A)^2
      \xrightarrow{\Id \times +_A}
      A \times L_0(A)
\]
Thus we can define the Kleisli category of this functor by $\cat{C}_{\ftor{T}}$ which has geometric significance as a category of generalised vector fields. 

We define the functor $\alpha_{\ftor{T}} : \cat{C}_{\ftor{T}} \ra \ftor{\CAct}(\cat{C}_{\ftor{T}})$: 
given a $\cat{C}_{\ftor{T}}$-morphism $f : A \to B$, set
  \(
  \alpha_{\ftor{T}}(A) \defeq (A, L_0(A), \eta \circ \pi_1, \eta \circ +_A, \eta \circ 0_A)
  \) and
  \(
  \alpha_{\ftor{T}}(f) \defeq (f, \DF{f})
  \)\footnote{
    An earlier version of this work defined the action $\oplus_A$ as $\Id_{A} \times \Id_{L_0(A)}$. This
    is in fact incorrect as the resulting derivatives fail to satisfy regularity.
  }.
\begin{rep-lemma}{lem:GCDC-change-action-model}
$\alpha_{\ftor{T}}$ is a change action model.
\end{rep-lemma}
\proofref

\begin{rem}\label{rem:GCDC-axioms}
The converse is not true: in general the existence of a change action model on $\cat{C}$ does not imply that $\cat{C}$ satisfies the GCDC axioms. 
However, if one requires, additionally, $(\changes A, +_A, 0_A)$ to be commutative, with $\changes (\changes A) = \changes A$ and $\oplus_{\Delta A} = +_A$ for all objects $A$,
and some technical conditions (stability and uniqueness of derivatives), then it can be shown that $\cat{C}$ is indeed a GCDC.
\end{rem}

\map{Not sure whether to mention that we are convinced that these are sufficient but not necessary.}
\lo{No need. BTW which assumptions might not be necessary?}
\lo{SELF-NOTE.
- [CD.1]: differential compatible with sum. 
  
- [CD.2]: derivative additive in 2nd argument.
  
- [CD.3,4]: differential compatible with products.

- [CD.5]: chain rule

- [CD.6]: derivative linear in 2nd argument. 

- [CD.7]: symmetry of second partial derivatives.
}

\lo{General question. 
Every cartesian differential category gives rise to canonical change action models.
What can we say about the other direction?
Given a change action model, what additional conditions (in addition to ``strictness'') will guarantee satisfaction of [CD.1-7]?
We already know that if all the $\d f$ morphisms are stable then [CD.2] is satisfied.}
\todo{
  Conjecture:
}
\map{\begin{itemize}
  \item First of all, we need $\Delta \Delta A = \Delta A$ for the types to check.
  \item Second, we need all monoids to be commutative.
  \item For [CD.1]: the second part is true for free. The first part is implied by requiring
    $\oplus_{\Delta A} = +_A$ (plus commutativity and thinness), but it's not equivalent to it.
  \item For [CD.2]: stability implies CD.2 - not sure if it's equivalent.
  \item For [CD.3]: we get it for free
  \item For [CD.4]: we get it for free
  \item For [CD.5]: we get it for free
  \item For [CD.6]: Implied by thinness + $\oplus_{\Delta A} = +_A$.
  \item For [CD.7]: Implied by thinness + commutativity + $\oplus_{\Delta A} = +_A$
\end{itemize}
Mostly, I think the reasonable requirements are stability, commutativity and $\oplus_{\Delta A} = +_A$.
The only one that bothers me is thinness, since I feel like it should not be necessary - but it
seems to be. The issue boils down to: without thinness, there is ``a'' derivative with the required
property, but no guarantee that it's the one selected by the change action model.}
\lo{This is useful.}
\map{Another note: thinness + additivity implies stability, i.e. for derivatives between thin change actions additive=stable}
\todo{Actually add a section about thinness}

\subsection{Difference calculus and Boolean differential calculus}

Consider the full subcategory $\CGrpS$ of $\cat{Set}$ whose objects are all the groups\footnote{
  We consider arbitrary functions, rather than group homomorphisms, since, according to this change action structure, every function between groups is differentiable.
}.
This is a cartesian closed category which can be endowed with the structure of a $(\ftor{\CAct}, \epsilon)$-coalgebra $\alpha$ in a straightforward way.

  Given a group $A = (A, +, 0, -)$, define change action
  \(
    \alpha(A) \defeq  (A, A, +, +, 0).
  \)
  Given a function $f : A \ra B$,  define differential map $\alpha(f) \defeq  (f, \d f)$ where
  \(
  \d f(x, \change{x}) \defeq -f(x) + f(x \oplus \change{x}).
  \)
  Notice 
  \(
  f(x) \oplus \d f(x, \change x) 
  = f(x) + (-f(x) + f(x + \change{x}))
  = f(x + \change{x})
  = f(x \oplus \change{x});
  \)
  hence $\d f$ is a (regular) derivative\footnote{
    Note that $\d f$ need not be additive in its second argument, and so derivatives in $\CGrpS$ do not satisfy all the axioms of a cartesian differential category.
  } for $f$, and $\alpha(f)$ a map in $\ftor{\CAct}(\CGrpS)$. The following result is then immediate.

\begin{lem}
$\alpha : \CGrpS \to \ftor{\CAct}(\CGrpS)$ defines a change action model.
\end{lem}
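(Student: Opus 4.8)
The plan is to verify the three conditions that make $\alpha$ a change action model: that $\alpha$ is a well-defined functor $\CGrpS \to \ftor{\CAct}(\CGrpS)$, that it preserves finite products (so that it is a genuine morphism in $\cart$), and that it satisfies the copointed coalgebra equation $\epsilon_{\CGrpS} \circ \alpha = \Id_{\CGrpS}$. The bulk of the object- and morphism-level well-definedness has already been discharged in the preceding discussion: for each group $A$, $\alpha(A) = (A, A, +, +, 0)$ is exactly the monoidal change action associated to the group $A$, and for each function $f : A \ra B$ the map $\d f(x, \change x) = -f(x) + f(x + \change x)$ was shown to be a regular derivative of $f$, so $\alpha(f) = (f, \d f)$ is a bona fide morphism of $\ftor{\CAct}(\CGrpS)$. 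It therefore remains to check functoriality, product preservation, and the coalgebra equation.

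For functoriality I would first check preservation of identities: $\d \Id_A(x, \change x) = -x + (x + \change x) = \change x$, which is precisely the derivative of the $\ftor{\CAct}$-identity on $\alpha(A)$ (the projection onto the change coordinate), so $\alpha(\Id_A) = \Id_{\alpha(A)}$. For composition I would compare $\alpha(g \circ f)$ with $\alpha(g) \circ \alpha(f)$. Their underlying maps plainly agree; for the derivatives, the composite in $\ftor{\CAct}$ is $\d g \circ \pair{f \circ \pi_1}{\d f}$, so evaluating at $(x, \change x)$ gives $\d g(f(x), \d f(x, \change x)) = -g(f(x)) + g(f(x) + (-f(x) + f(x + \change x)))$, which telescopes, using the group inverse, to $-g(f(x)) + g(f(x + \change x))$. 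This is exactly $\d(g \circ f)(x, \change x)$ computed directly, so $\alpha(g \circ f) = \alpha(g) \circ \alpha(f)$.

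Product preservation follows by comparing $\alpha(A \times B)$ with the product $\alpha(A) \times \alpha(B)$ computed via Theorem~\ref{thm:cact-products}. The latter has underlying object $A \times B$ and change object $A \times B$, with both $\oplus$ and $+$ given by the componentwise group operation; since the product group structure on $A \times B$ is exactly componentwise, this coincides with $\alpha(A \times B) = (A \times B, A \times B, +, +, 0)$. The analogous comparison with Theorem~\ref{thm:cact-terminal} shows that $\alpha$ sends the trivial (terminal) group to the terminal object of $\ftor{\CAct}(\CGrpS)$. Hence $\alpha$ preserves finite products and is a morphism of $\cart$. Finally, the coalgebra equation is immediate: $\epsilon_{\CGrpS}$ is the forgetful functor sending each change action to its underlying object and each differential map to its underlying function, and by construction $\alpha(A)$ has underlying object $A$ and $\alpha(f)$ underlying function $f$, so $\epsilon_{\CGrpS} \circ \alpha = \Id_{\CGrpS}$.

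The only step requiring any real computation is the preservation of composition, where one must observe the telescoping cancellation afforded by the group inverses; everything else is an immediate assembly of facts established earlier (the monoidal change action, regularity of $\d f$, and the explicit products of Theorem~\ref{thm:cact-products}). Consequently I expect no genuine obstacle: the content of the lemma is simply that the already-verified assignments cohere into a product-preserving functor splitting $\epsilon_{\CGrpS}$, and this coherence is exactly the telescoping chain-rule computation above.
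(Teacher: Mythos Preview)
Your proposal is correct and consistent with the paper's treatment: the paper declares the result ``immediate'' after verifying that $\d f$ is a regular derivative, and you have simply written out the routine checks (functoriality via the telescoping chain-rule computation, product preservation, and the coalgebra equation) that the paper leaves implicit.
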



This result is interesting. 
In the calculus of finite differences \cite{jordan1965calculus}, 
the \emph{discrete derivative} (or \emph{discrete difference operator})
of a function $f : \ZZ \ra \ZZ$ is defined as
\(
  \delta f(x) \defeq f(x + 1) - f(x)
\).
In fact the discrete derivative $\delta f$ is (an instance of) the derivative of $f$ \emph{qua} morphism in $\CGrpS$, i.e.~$\delta f(x) = \d f(x, 1)$.

Finite difference calculus \cite{gleich2005finite,jordan1965calculus} has found applications in combinatorics and numerical computation. 
Our formulation via change action model over $\CGrpS$ has several advantages. 
First it justifies the chain rule, which seems new.
Secondly, it generalises the calculus  to arbitrary groups.
To illustrate this, consider the \emph{Boolean differential calculus} \cite{steinbach2017boolean,thayse1981boolean},
a theory that applies methods from calculus to the space $\BB^n$ of vectors of elements of some Boolean algebra $\BB$.


\begin{defi}\rm
  Given a Boolean algebra $\BB$ and function $f : \BB^n \ra \BB^m$, the \emph{$i$-th Boolean derivative of $f$ at $(u_1, \ldots, u_n) \in \BB^n$} is the value
 \(   
    \frac{\d f}{\d x_i}(u_1, \ldots, u_n) \defeq 
    f(u_1, \ldots, u_n) \nlra f(u_1, \ldots, \neg u_i, \ldots, u_n),
\)
writing $u \nleftrightarrow v \defeq (u \wedge \neg v) \vee (\neg u \wedge v)$ for exclusive-or.
\end{defi}

Now $\BB^n$ is a $\CGrpS$-object.
Set 
$
  \top_i \defeq (\bot, \overset{i-1}{\ldots}, \bot, \top, \bot, \overset{n - i}{\ldots}, \bot) \in \BB^n
$.
\begin{rep-lemma}{lem:boolean-derivative}
The Boolean derivative of $f : \BB^n \ra \BB^m$ coincides with its
derivative qua morphism in $\CGrpS$: $\frac{\d f}{\d x_i}(u_1, \ldots, u_n) = \d f((u_1, \ldots, u_n), \top_i)$.
\end{rep-lemma}

\proofref


\subsection{Polynomials over commutative Kleene algebras}

The algebra of polynomials over a commutative Kleene algebra \cite{DBLP:conf/lics/HopkinsK99,Kleene56} (see \cite{DBLP:conf/latin/LombardyS04,DBLP:journals/jacm/EsparzaKL10} for work of a similar vein) is a change action model.
Recall that Kleene algebra is the algebra of regular expressions \cite{DBLP:journals/jacm/Brzozowski64,Conway71}.
Formally a \emph{Kleene algebra} $\KK$ is a tuple $(K, +, \cdot, {}^\star, 0, 1)$ such that
$(K, +, \cdot, 0, 1)$ is an idempotent semiring under $+$ satisfying, for all $a, b, c \in K$: 
\begin{align*}
    1 + a \, a^\star &= a^\star\quad
    1 + a^\star a = a^\star\quad
    b + a \, c \leq c \ra a^\star \, b \leq c
    \quad
    b + c \, a \leq c \ra b \, a^\star \leq c
\end{align*}
where ${a \leq b} \defeq a + b = b$.
A Kleene algebra is \emph{commutative} whenever $\cdot$ is.

Henceforth fix a commutative Kleene algebra $\KK$. 
Define the \emph{algebra of polynomials} $\KK[\overline{x}]$ as the free extension of the algebra $\KK$ with elements $\overline{x} = x_1, \ldots, x_n$. 
We write $p(\overline{a})$ for the value of $p(\overline{x})$ evaluated at $\overline x \mapsto \overline a$. 
Polynomials, viewed as functions, are closed under composition: 
when $p \in \KK[\overline{x}], q_1, \ldots, q_n \in \KK[\overline{y}]$ are  polynomials, so is the composite $p(q_1(\overline{y}), \ldots, q_n(\overline{y}))$.

Given a polynomial $p = p(\overline x)$, we define its \emph{$i$-th derivative} 
$\der{p}{x_i}(\overline x) \in \KK[\overline x]$:
\begin{align*}
  \der{a}{x_i} (\overline x) &= 0
  \qquad
  \der{p^\star}{x_i} (\overline x) = p^\star(\overline x) \, \der{p}{x_i} (\overline x)
  \qquad
  \der{x_j}{x_i} (\overline x) = 
  \begin{cases}
  1 \text{ if $i = j$}\\
  0 \text{ otherwise}
  \end{cases}
  \\
  \der{(p+q)}{x_i}(\overline x) &= \der{p}{x_i} (\overline x) + \der{q}{x_i} (\overline x)\qquad
  \der{(p \, q)}{x_i} (\overline x) = p(\overline x) \der{q}{x_i} (\overline x)+ q (\overline x)\der{p}{x_i}(\overline x)
\end{align*}
Write $\der{p}{x_i}(\overline e)$ to mean the result of evaluating the polynomial $\der{p}{x_i}(\overline x)$ at $\overline x \mapsto \overline e$.
\begin{thm}[Taylor's formula \cite{DBLP:conf/lics/HopkinsK99}]\label{thm:taylor}
  Let $p(x) \in \KK[x]$. 
  For all $a, b \in \KK[x]$, we have
  $p(a + b) = p(a) + b \cdot \der{p}{x}(a + b)$. 
\end{thm}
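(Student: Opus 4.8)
The plan is to prove the identity by structural induction on the polynomial $p \in \KK[x]$, following the inductive definitions of $p$ and of its formal derivative $\der{p}{x}$ in parallel. The two facts about a \emph{commutative} Kleene algebra that will do essentially all the work are commutativity of $\cdot$ and idempotency of $+$ (so that $c + c = c$), together with two consequences of the Kleene axioms: the \emph{denesting} identity $(u + w)^\star = u^\star (w\,u^\star)^\star$ and the absorption $u^\star u^\star = u^\star$.

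\textbf{Base cases and sum.} If $p = c$ is a constant then $p(a+b) = c = c + b \cdot 0$ (using the semiring annihilation $b \cdot 0 = 0$) and $\der{c}{x} = 0$; if $p = x$ then $p(a+b) = a + b = a + b\cdot 1$ and $\der{x}{x} = 1$. The sum case $p = q + r$ is immediate from the induction hypotheses for $q$ and $r$: one rearranges $q(a+b) + r(a+b)$ using commutativity of $+$ and distributivity, together with $\der{(q+r)}{x} = \der{q}{x} + \der{r}{x}$.

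\textbf{Product.} For $p = q\,r$, write $q' = \der{q}{x}(a+b)$ and $r' = \der{r}{x}(a+b)$. I would substitute the hypotheses $q(a+b) = q(a) + b\,q'$ and $r(a+b) = r(a) + b\,r'$ into $p(a+b) = q(a+b)\,r(a+b)$ and expand. After moving all occurrences of $b$ to the front with commutativity, the cross terms reassemble into $q(a)r(a) + b\big(q(a+b)r' + r(a+b)q'\big)$, which is exactly $p(a) + b\,\der{(q r)}{x}(a+b)$. The only subtlety is the quadratic term $b\,b\,q'r'$: matching the two sides requires identifying $b\,b\,q'r'$ with $b\,b\,q'r' + b\,b\,q'r'$, which holds precisely because $+$ is idempotent. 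This is the first place where genuinely Kleene-algebraic (rather than merely commutative-ring) structure is used.

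\textbf{Star (the main obstacle).} For $p = q^\star$ we have $\der{q^\star}{x}(a+b) = q^\star(a+b)\,q'$, so the goal is $q^\star(a+b) = q^\star(a) + b\,q^\star(a+b)\,q'$. Setting $v = q(a)$ and using $q(a+b) = v + b\,q'$, I would apply denesting to obtain $q^\star(a+b) = (v + b q')^\star = v^\star (b q'\, v^\star)^\star$. Unfolding the outer star once via $w^\star = 1 + w\,w^\star$ and simplifying with commutativity and $v^\star v^\star = v^\star$ yields $q^\star(a+b) = v^\star + b\,q'\,v^\star (b q' v^\star)^\star$; recognising the trailing factor $v^\star (b q' v^\star)^\star$ as $q^\star(a+b)$ (by the same denesting identity) and commuting $q'$ past $q^\star(a+b)$ gives $v^\star + b\,q^\star(a+b)\,q'$, which closes the induction since $v^\star = q^\star(a)$. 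Choosing the right star identities and checking that each simplification is legitimate in an arbitrary (not necessarily $\star$-continuous) Kleene algebra is where the real difficulty lies; everything else is bookkeeping with commutativity and idempotency.
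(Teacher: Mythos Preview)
The paper does not supply its own proof of this theorem: it is quoted verbatim from Hopkins and Kozen and used as a black box in the construction of the change action model on $\cat{\KK_\times}$. So there is no ``paper's proof'' to compare against; your proposal stands on its own.

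That said, your argument is correct and is essentially the standard structural-induction proof from the cited source. The base and sum cases are routine; your handling of the product case correctly identifies idempotency of $+$ as the reason the duplicated cross term $b\,b\,q'r'$ causes no trouble; and your star case is the heart of the matter. The chain
\[
(v + b q')^\star \;=\; v^\star (b q' v^\star)^\star \;=\; v^\star + v^\star b q' v^\star (b q' v^\star)^\star \;=\; v^\star + b q'\, v^\star (b q' v^\star)^\star \;=\; v^\star + b\, q^\star(a+b)\, q'
\]
uses exactly the ingredients you name: denesting $(u+w)^\star = u^\star (w u^\star)^\star$, one unfolding $w^\star = 1 + w w^\star$, the absorption $v^\star v^\star = v^\star$, and commutativity of $\cdot$ to shuffle $b$ and $q'$ into place. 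All of these hold in an arbitrary (not $\star$-continuous) commutative Kleene algebra, so there is no hidden gap.
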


  The category of finite powers of $\KK$, $\cat{\KK_\times}$, has all natural numbers $n$ as objects.
  The morphisms $\KK_\times[m, n]$ are $n$-tuples of polynomials $(p_1, \ldots, p_n)$ where
  $p_1, \ldots, p_n \in \KK[x_1, \ldots, x_m]$.
  Composition of morphisms is the usual composition of polynomials.

\begin{rep-lemma}{kleene-change-actions-model}
  The category $\cat{\KK_\times}$ is a cartesian category, endowed with a change action model $\alpha : \cat{\KK_\times} \to \ftor{\CAct}(\cat{\KK_\times})$ whereby $\alpha(\KK) \defeq (\KK, \KK, +, +, 0)$, $\alpha(\KK^i) \defeq \alpha(\KK)^i$; for $\overline p = (p_1(\overline{x}), \ldots, p_n(\overline{x})) : \KK^m \ra \KK^n$,
      $\alpha(\overline p) \defeq (\overline p, (p_1', \ldots, p_n'))$, where 
      \(  p_i' = p_i' (x_1, \ldots, x_m, y_1, \ldots, y_m) \defeq 
        \sum_{j=1}^n y_j \cdot \der{p_i}{x_j}(x_1 + y_1, \ldots, x_m + y_m).
      \)
\end{rep-lemma}
\proofref

\begin{rem}
Interestingly derivatives are not \lochanged{additive} in the second argument.
Take $p(x) = x^2$. 
Then $\d p(a, b+c) > \d p (a, b) + \d p (a, c)$.
It follows that $\KK[\overline x]$ cannot be modelled by GCDC (\lochanged{because of axiom [CD.2]}).
\end{rem}

\map{
  I think there is another, more general construction which understands a ``polynomial'' as a sum of monomials $p(x) = p_0() + p_1(x) + p_2(x, x) + p_3(x, x, x) \ldots$, where one can find a change action model, but this is trickier: 
  the types are more general (a polynomial may go between different spaces) and the behaviours are stranger 
  (if one considers the free monoid on $a, b$, there is a ``generalised polynomial'' that maps $a$ to $b$ and $b$ to $a$, for example).
}

\section{$\omega$-change actions and $\omega$-differential maps}\label{sec:omega-change-actions}

\newcommand{\seq}[1]{[ {#1} ]} 
\newcommand{\changeseq}[1]{\seq{{#1}_i}}
\newcommand{\diffseq}[1]{\seq{{#1}_i}}

A change action model $\alpha : \cat{C} \to \ftor{\CAct}(\cat{C})$ is a category that supports higher-order differentials:
each $\cat{C}$-object $A$ is associated with an $\omega$-sequence of change actions---\lochanged{\(
\alpha(A), \alpha(\changes A), \alpha(\changes^2 A), \ldots
\)}---in which every change action is compatible with the neighbouring change actions.
We introduce \emph{$\omega$-change actions} as a means of constructing change action models ``freely'':
given a cartesian category $\cat{C}$, the objects of the category $\ftor{\CAct_\omega}(\cat{C})$ are all $\omega$-sequences of ``contiguously compatible'' change actions.

\medskip

We work with $\omega$-sequences $\seq{A_i}_{i \in \omega}$ and $\seq{f_i}_{i \in \omega}$ of objects and morphisms in $\cat{C}$.
We write $\pr_k(\seq{A_i}_{i \in \omega}) \defeq  A_k$ for the $k$-th element of the $\omega$-sequence (similarly for $\pr_k(\seq{f_i}_{i \in \omega})$),
and omit the subscript `$i \in \omega$' from $\seq{A_i}_{i \in \omega}$ to reduce clutter.
Given $\omega$-sequences $\seq{A_i}$ and $\seq{B_i}$ of objects of a cartesian category $\cat{C}$, define $\omega$-sequences, \emph{product} $\seq{A_i} \times \seq{B_i}$, \emph{left shift} $\Pi \seq{A_i}$ and \emph{derivative space} $\A{D}\seq{A_i}$, by:
  \begin{align*}
     \pr_j (\seq{A_i} \times \seq{B_i}) 
    & \defeq  
    A_j \times B_j
    \qquad
    \pr_j (\Pi \seq{A_i} ) \defeq  
    A_{j+1}\\
    \pr_0 (\A{D} \seq{A_i}) & \defeq  
    A_0
    \qquad
    \pr_{j+1}\A{D} \seq{A_i} \defeq
    \pr_j \A{D}\seq{A_i} \times \pr_j \A{D}(\Pi \seq{A_i})
  \end{align*}


\begin{exa}
  Given an $\omega$-sequence $\seq{A_i}$, the first few terms of $\A{D}\seq{A_i}$ are:
  \begin{align*}
    \pr_0 \A{D}\seq{A_i} &= A_0 \quad
    \pr_1 \A{D}\seq{A_i} = A_0 \times A_1\quad
    \pr_2 \A{D}\seq{A_i} = (A_0 \times A_1) \times (A_1 \times A_2)\\
    \pr_3 \A{D}\seq{A_i} &= \big((A_0 \times A_1) \times (A_1 \times A_2)\big)
                           \times \big((A_1 \times A_2) \times (A_2 \times A_3)\big)
  \end{align*}
\end{exa}

\begin{defi}\rm
  Given $\omega$-sequences $\seq{A_i}$ and $\seq{B_i}$,
  a \emph{pre-$\omega$-differential map} between them, written $\seq{f_i} : \seq{A_i} \ra \seq{B_i}$,
  is an $\omega$-sequence $\seq{f_i}$ such that for each $j$, $f_j : \pr_j \A{D} \seq{A_i} \ra B_j$ is a $\cat{C}$-morphism.
\end{defi}

We explain the intuition behind the derivative space $\A{D} \seq{A_i}$.
Take a morphism $f : A \to B$, and set $A_i = \changes^i {A}$ (where $\changes^0 \defeq A$ and $\changes^{n+1} A \defeq \changes (\changes^n A)$).
Since $\changes$ distributes over product,
the domain of the $n$-th derivative of $f$ is $\pr_n \A{D} \seq{A_i}$.

\noindent \emph{Notation}. Define $\epow{0} \defeq  \pi_1$ and $\epow{j+1} \defeq  \epow{j} \times \epow{j}$; and define $\cpow{0} \defeq  \Id$ and $\cpow{j+1} \defeq  \pi_2 \circ \cpow{j}$.

\begin{defi}\rm\label{def:3-actions-omega-seq}
  Let $\seq{f_i} : \seq{A_i} \ra \seq{B_i}$ and $\seq{g_i} : \seq{B_i} \ra \seq{C_i}$ be pre-$\omega$-differential maps. 
  The \emph{derivative sequence} $\A{D}\seq{f_i}$ is the $\omega$-sequence defined by:
  \[
    \pr_j \A{D}\seq{f_i} \defeq  \pair{f_j \circ \epow{j}}{f_{j+1}} : \pr_{j+1}\A{D}\seq{A_i} \ra B_j \times B_{j+1}
  \]
  Using the shorthand $\A{D}^n \seq{f_i} \defeq \underbrace{\A{D}(\ldots (\A{D}}_{n\text{ times}}\seq{f_i}))$, the \emph{composite} $\seq{g_i} \circ \seq{f_i} : \seq{A_i} \ra \seq{C_i}$ is the pre-$\omega$-differential map given by
  \(
    \pr_j (\seq{g_i} \circ \seq{f_i}) = g_j \circ \pr_0 (\A{D}^j\seq{f_i}).
  \)
  The \emph{identity} pre-$\omega$-differential map $\Id : \changeseq A \to \changeseq A$ is defined as: 
  \(
    \pr_j \Id \defeq  
    \cpow{j} : \pr_j \A{D} \changeseq A \to A_j.
  \)
\end{defi}

\begin{exa}
Consider $\omega$-sequences $\seq{f_i}$ and $\seq{g_i}$ as above. Then:
  \begin{align*}
    \pr_0 \A{D}\seq{f_i} &= \pair{f_0 \circ \epow{0}}{f_1}
\qquad 
    \pr_1 \A{D}\seq{f_i} = \pair{f_1 \circ \epow{1}}{f_2} 
    \\
    \pr_0 \A{D}^2 \seq{f_i} &= \pair{\pair{f_0 \circ \epow{0}}{f_1} \circ \pi_1}{\pair{f_1 \circ \epow{1}}{f_2}}\\
    \pr_1 \A{D}^2\seq{f_i} &= 
    \pair{
      \pair{f_1 \circ \epow{1}
       }{
       f_2} \circ \epow{1}
    }{
      \pair{f_2 \circ \epow{2} }{f_3}
    }\\
  \pr_0 \A{D}^3 \seq{f_i}
  &= 
\pair{
\pr_0 \A{D}^2\seq{f_i} \circ \epow{0}
  }{
\pair{
  \pair{f_1 \circ \epow{1}}{f_2} \circ \epow{1}
}{
  \pair{f_2 \circ \epow{2}}{f_3}
}  
  }
  \end{align*}
  It follows that the first few terms of the composite $\seq{g_i} \circ \seq{f_i}$ are:
  \begin{align*}
    \pr_0 (\seq{g_i} \circ \seq{f_i}) &= g_0 \circ f_0 \qquad
    \pr_1 (\seq{g_i} \circ \seq{f_i}) = g_1 \circ \pair{f_0 \circ \epow{0}}{f_1}\\
    \pr_2 (\seq{g_i} \circ \seq{f_i}) &=
    g_2 \circ \pair{\pair{f_0 \circ \pi_1}{f_1} \circ \epow{0}}{\pair{f_1 \circ \epow{1}}{f_2}}
  \end{align*}
  Notice that these correspond to iterations of the chain rule, assuming $f_{i+1} = \d f_i$ and $g_{i+1} = \d g_i$.
\end{exa}

\begin{rep-proposition}{id-law}
  For any pre-$\omega$-differential map $\seq{f_i}$, $\Id \circ \seq{f_i} = \seq{f_i} \circ \Id = \seq{f_i}$.
\end{rep-proposition}
\proofref
\begin{rep-proposition}{comp-assoc}
  Composition of pre-$\omega$-differential maps is associative: given pre-$\omega$-differential maps $\seq{f_i} : \changeseq A \to \changeseq B$,
  $\seq{g_i} : \changeseq B \to \changeseq C$ and $\seq{h_i} : \changeseq C \to \changeseq D$,
  then for all $n \geq 0$,
  \(
    h_n \circ \pr_0 \A{D}^n (\seq{g_i} \circ \seq{f_i}) = (h_n \circ \pr_0 \A{D}^n \seq{g_i}) \circ \pr_0 \A{D}^n \seq{f_i}.
  \)
\end{rep-proposition}
\proofref

\begin{defi}\rm
  Given pre-$\omega$-differential maps $\seq{f_i} : \seq{A_i} \ra \seq{B_i}, \seq{g_i} : \seq{A_i} \ra \seq{C_i}$,
  the \emph{pairing} $\pair{\seq{f_i}}{\seq{g_i}} : \seq{A_i} \ra \seq{B_i} \times \seq{C_i} $ is the pre-$\omega$-differential map defined by:
  \(
    \pr_j \pair{\seq{f_i}}{\seq{g_i}} = \pair{f_j}{g_j}.
  \)
  Define pre-$\omega$-differential maps $\pi_{\mathbf 1} \defeq \seq{{\pi_{\mathbf 1}}_i} : \seq{A_i} \times \seq{B_i} \ra \seq{A_i}$ by $\pr_j \seq{{\pi_{\mathbf 1}}_i} \defeq \pi_1 \circ \cpow{j}$, and $\pi_{\mathbf 2} \defeq \seq{{\pi_{\mathbf 2}}_i} : \seq{A_i} \times \seq{B_i} \ra \seq{B_i}$ by $\pr_j \seq{{\pi_{\mathbf 2}}_i} \defeq \pi_2 \circ \cpow{j}$.
\end{defi}

\begin{defi}\rm
  \label{def:pre-omega-change-action}
  A \emph{pre-$\omega$-change action} on a cartesian category $\cat{C}$ is a quadruple 
  \(
    \ds{A} = (\seq{A_i}, \seq{\ds{\oplus^A}_i}, \seq{\ds{+^A}_i}, \seq{0^A_i})
  \)
  where $\changeseq A$ is an $\omega$-sequence of $\cat C$-objects, and for each $j \geq 0$, $\ds{\oplus^A}_j$ and $\ds{+^A}_j$ are $\omega$-sequences, satisfying
  \begin{enumerate}
    \item 
    $\ds{\oplus^A}_j : \Pi^j \changeseq{A} \times \Pi^{j+1}\changeseq{A} \to \Pi^j\changeseq{A}$ is a pre-$\omega$-differential map.
    \item 
    $\ds{+^A}_j : \Pi^{j + 1} \changeseq{A} \times \Pi^{j+1}\changeseq{A} \to \Pi^{j+1}\changeseq{A}$ is a pre-$\omega$-differential map.
    \item $0^A_j : \top \to A_{j+1}$ is a $\cat{C}$-morphism.
    \item $\changes(\ds{A}, j)\defeq  (A_j, A_{j+1}, \pr_0 \ds{\oplus^A}_j, \pr_0 \ds{+^A}_j, {0^A_j})$ is a change action in $\cat{C}$.
  \end{enumerate}
  We extend the left-shift operation to pre-$\omega$-change actions by defining
  $\Pi \ds{A} \defeq ( \Pi\seq{A_i}, \Pi \seq{\ds{\oplus^A}_i}, \Pi \seq{\ds{+^A}_i}, \seq{0_i^A})$.
  Then we define the \lochanged{change actions} $\A{D}(\ds{A}, j)$ inductively by:
  $\A{D}(\ds{A}, 0) \defeq \changes(\ds{A}, 0)$ and
  $\A{D}(\ds{A}, j+1) \defeq \changes(\ds{A}, j) \times \changes(\Pi\ds{A}, j)$. 
  Notice that the carrier object of $\A{D}(\ds{A}, j)$ is the $j$-th element of the $\omega$-sequence $\A{D}\seq{A_i}$.
\end{defi}

\begin{defi}\rm
  Given pre-$\omega$-change actions $\ds A$ and $\ds B$ (using the preceding notation), a pre-$\omega$-differential map $\seq{f_i} : \changeseq{A} \to \changeseq{B}$ is \emph{$\omega$-differential} if, for each $j \geq 0$, $(f_j, f_{j+1})$ is a differential map from the change action $\A{D}(\ds{A}, j)$ to $\changes(\ds{B}, j)$. 
  Whenever $\seq{f_i}$ is an $\omega$-differential map, we write $\ds{f} : \ds{A} \ra \ds{B}$.
  
  We say that a pre-$\omega$-change action $\ds{A}$ is an \emph{$\omega$-change action} if, for each $i \geq 0$, $\ds{\oplus^A}_i$ and $\ds{+^A}_i$ are $\omega$-differential maps.
\end{defi}

\begin{rem}
It is important to sequence the definitions appropriately.
Notice that we only define $\omega$-differential maps once there is a notion of pre-$\omega$-change action, 
but pre-$\omega$-change actions need pre-$\omega$-differential maps to make sense of the monoidal sum $\ds{\plus}_j$ and action $\ds{\oplus}_j$.
\item \lochanged{The reason for requiring each $\ds{\oplus^A}_i$ and $\ds{+^A}_i$ in an $\omega$-change object 
\(
    \ds{A} = (\seq{A_i}, \seq{\ds{\oplus^A}_i}, \seq{\ds{+^A}_i}, \seq{0^A_i})
\)
to be $\omega$-differential is so that $\ds A$ is \emph{internally} a change action in $\ftor{\CAct_\omega}(\cat{C})$ (see Def.~\ref{def:cact-omega-cat}).}
\end{rem}

\begin{rep-lemma}{composite-omega-differential}
  Let $\ds{f} : \ds{A} \ra \ds{B}$ and $\ds{g} : \ds{B} \ra \ds{C}$ be $\omega$-differential maps.
  {Qua} pre-$\omega$-differential maps, their composite $\seq{g_i} \circ \seq{f_i}$ is $\omega$-differential.
  Setting $\ds{g} \circ \ds{f} \defeq \seq{g_i} \circ \seq{f_i} : \ds A \to \ds C$, it follows that composition of $\omega$-differential maps is associative.
\end{rep-lemma}

\proofref

\begin{rep-lemma}{id-omega-differential}
  For any $\omega$-change action $\ds{A}$, the pre-$\omega$-differential map 
  ${\Id} : \changeseq{A} \ra \changeseq{A}$
  is $\omega$-differential. 
  Hence $\ds{\Id} := \Id : \ds{A} \to \ds{A}$ satisfies the identity laws.
\end{rep-lemma}
\proofref

\lochanged{
\begin{defi}\rm
  \label{def:cact-omega-products}
  Given $\omega$-change actions $\ds{A}$ and $\ds{B}$, we define the \emph{product $\omega$-change action} by:
  \(
    \ds{A} \times \ds{B} \defeq ( 
    \seq{A_i \times B_i}, 
    \seq{\ds{\oplus'}_i}, 
    \seq{\ds{+'}_i}, 
    \seq{0'_i}
    )
  \)
where
\map{Intuitively: same as in $\CAct$, except $\omega$-differential maps}
\begin{enumerate}
  \item $\ds{\oplus'}_j \defeq \pair{\ds{\oplus^A}_j}{\ds{\oplus^B}_j} 
    \circ \pair{\pair{\ds{\pi_{11}}}{\ds{\pi_{12}}}}
               {\pair{\ds{\pi_{21}}}{\ds{\pi_{22}}}}$
  \item $\ds{+'}_j \defeq \pair{\ds{+^A}_j}{\ds{+^B}_j} 
    \circ \pair{\pair{\ds{\pi_{11}}}{\ds{\pi_{12}}}}
               {\pair{\ds{\pi_{21}}}{\ds{\pi_{22}}}}$
  \item $0_j' \defeq \pair{0^A_j}{0^B_j}$
\end{enumerate}
Notice that $\changes(\ds{A} \times \ds{B}, j)\defeq  (A_j \times B_j, A_{j+1} \times B_{j+1}, \pr_0 \ds{\oplus'}_j, \pr_0 \ds{+'}_j, {0_j'})$ is a change action in $\cat{C}$ by construction.
\end{defi}
}

\begin{rep-lemma}{cact-omega-products}
  The pre-$\omega$-differential maps $\pi_{\mathbf 1}, \pi_{\mathbf 2}$ are $\omega$-differential.
  Moreover, for any $\omega$-differential maps $\ds f : \ds{A} \ra \ds{B}$ and $\ds{g} : \ds{A} \ra \ds{C}$,
  the map $\pair{\ds f}{\ds g} \defeq \pair{\seq{f_i}}{\seq{g_i}}$ is $\omega$-differential, \lochanged{satisfying}
    \(
    \ds{\pi_{\mathbf 1}} \circ \pair{\ds f}{\ds g} = \ds f
    \) and
    \(
    \ds{\pi_{\mathbf 2}} \circ \pair{\ds f}{\ds g} = \ds g.
    \)
\end{rep-lemma}
\proofref

\begin{defi}\rm\label{def:cact-omega-cat}
  Define the functor $\ftor{\CAct_\omega} : \cart \ra \cart$ as follows.
\begin{itemize}
  \item $\ftor{\CAct_\omega}(\cat{C})$ is the category whose objects are the $\omega$-change
    actions over $\cat{C}$ and whose morphisms are the $\omega$-differential maps.
  \item If $\ftor{F} : \cat{C} \ra \cat{D}$ is a (product-preserving) functor, then
    $\ftor{\CAct_\omega}(\ftor{F}) : \ftor{\CAct_\omega}(\cat{C}) \ra \ftor{\CAct_\omega}(\cat{C})$
    is the functor mapping the $\omega$-change action 
        \(
          \left(\changeseq{A}, \seq{\diffseq{\oplus}_j}, \seq{\diffseq{+}_j}, \seq{0_j}\right)
        \)
        to 
        \(
          \left(\changeseq{\ftor{F}A}, \seq{\diffseq{\ftor{F}\oplus}_j},
          \seq{\diffseq{\ftor{F}+}_j}, \seq{\ftor{F}0_j}\right)
        \);
      and the $\omega$-differential map $\diffseq{f}$ to 
        $\diffseq{\ftor{F}f}$.
\end{itemize}
\end{defi}

\begin{rep-theorem}{cact-omega-ccc}
  The category $\ftor{\CAct_\omega}(\cat{C})$ is cartesian, with product given in Def.~\ref{def:cact-omega-products}.
  Moreover if $\cat{C}$ \lochanged{is closed and} has countable limits, $\ftor{\CAct_\omega}(\cat{C})$ is cartesian closed.
\end{rep-theorem}
\proofref

\newcommand\cseq[1]{\Delta^\omega_{#1}}

\begin{rep-theorem}{cact-omega-canonical}
  The category $\ftor{\CAct_{\omega}}(\cat{C})$ is equipped with a canonical change action model:
  $\gamma : \ftor{\CAct_\omega}(\cat{C}) \ra \ftor{\CAct}(\ftor{\CAct_\omega}(\cat{C}))$.
\end{rep-theorem}
\begin{thm}[Relativised final coalgebra]
Let $\cat{C}$ be a change action model.
The canonical change action model $\gamma : \ftor{\CAct_\omega}(\cat{C}) \ra \ftor{\CAct}(\ftor{\CAct_\omega}(\cat{C}))$ is a 
relativised\footnote{Here $\ftor{\CAct}$ is restricted to the full subcategory of $\cat{Cat}_\times$ with $\cat{C}$ as the only object.} 
final coalgebra of $(\ftor{\CAct}, \epsilon)$. 

I.e.~for all change action models on $\cat{C}$, $\alpha : \cat{C} \to \ftor{\CAct}(\cat{C})$, there is a unique coalgebra homomorphism $\alpha_\omega : \cat{C} \to \ftor{\CAct_\omega}(\cat{C})$, as witnessed by the commuting diagram:
\begin{center}
  \begin{tikzcd}
      \cat{C}
      \arrow[d, "\exists \, ! \, \alpha_\omega", swap]
      \arrow[r, "\alpha"]
      &\ftor{\CAct}(\cat{C})
      \arrow[d, "{\ftor{\CAct}(\alpha_\omega)}"]
      \\
      \ftor{\CAct_\omega}(\cat{C})
      \arrow[r, "\gamma"]
      & \ftor{\CAct}(\ftor{\CAct_\omega}(\cat{C}))
  \end{tikzcd}
\end{center}
\end{thm}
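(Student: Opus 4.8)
The plan is to construct the comparison functor $\alpha_\omega$ by iterating $\alpha$, verify it is a product-preserving coalgebra homomorphism, and then argue its uniqueness by induction up the $\omega$-sequence. On objects, I would define $\alpha_\omega(A)$ by repeatedly applying $\alpha$: the carrier sequence is $A_i \defeq \changes^i A$ (with $\changes^0 A = A$), and the action, sum and unit at level $j$ are those read off from the change action $\alpha(\changes^j A)$. On a morphism $f : A \ra B$, iterating $\alpha(f) = (f, \d f)$, then $\alpha(\d f) = (\d f, \d^2 f)$, and so on, yields $\alpha_\omega(f) \defeq \seq{\d^i f}$ with $\d^0 f = f$. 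Copointedness, $\epsilon_{\cat{C}} \circ \alpha = \Id$, guarantees $\pr_0 \alpha_\omega(f) = f$ and that each $\d^{i+1}f$ is a genuine derivative of $\d^i f$, so the domains match the derivative space $\A{D}\seq{A_i}$ of Def.~\ref{def:3-actions-omega-seq}.

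Next I would check that $\alpha_\omega(A)$ is an $\omega$-change action and that $\alpha_\omega$ is functorial. Since $\changes^{j+1}A = \changes(\changes^j A)$, the carrier sequence is contiguously compatible; applying $\alpha$ to the structure maps $\oplus_{\changes^j A}$ and $+_{\changes^j A}$ produces their iterated derivatives, which exhibits $\ds{\oplus^A}_j$ and $\ds{+^A}_j$ as $\omega$-differential maps, as required. Functoriality reduces to two points. Identities are immediate, since $\d^i \Id$ is a projection and so $\alpha_\omega(\Id)$ is the identity pre-$\omega$-differential map. Preservation of composition is precisely the higher-order chain rule: I would prove $\d^n(g \circ f) = \pr_0 \A{D}^n\big(\alpha_\omega(g) \circ \alpha_\omega(f)\big)$ by induction on $n$, using the single chain rule and the definition of the composite of pre-$\omega$-differential maps, and appealing to Lemma~\ref{lem:composite-omega-differential} to know the composite is again $\omega$-differential. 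Product preservation follows because both $\alpha$ and $\changes$ commute with products (Thm.~\ref{thm:cact-products}), so $\alpha_\omega(A \times B)$ agrees with the product $\omega$-change action of Def.~\ref{def:cact-omega-products}.

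I would then verify the coalgebra square $\gamma \circ \alpha_\omega = \ftor{\CAct}(\alpha_\omega) \circ \alpha$, which is essentially definitional once $\gamma$ is unfolded. The canonical model $\gamma$ (Thm.~\ref{thm:cact-omega-canonical}) sends an $\omega$-change action $\ds{A}$ to the change action with carrier $\ds{A}$, change object $\Pi \ds{A}$, and structure given by the level-$0$ data $\ds{\oplus^A}_0 : \ds{A} \times \Pi \ds{A} \ra \ds{A}$, $\ds{+^A}_0$ and $0^A_0$. On the other side, $\ftor{\CAct}(\alpha_\omega)(\alpha(A))$ has carrier $\alpha_\omega(A) = \ds{A}$ and change object $\alpha_\omega(\changes A)$. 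The key identity is $\alpha_\omega(\changes A) = \Pi\,\alpha_\omega(A)$, which holds because left-shifting $\seq{\changes^i A}$ gives $\seq{\changes^{i+1}A} = \seq{\changes^i(\changes A)}$; the structure maps $\alpha_\omega(\oplus_A) = \seq{\d^i \oplus_A}$ then coincide with $\ds{\oplus^A}_0$ by construction, and similarly for $+_A$ and $0_A$. The same bookkeeping on morphisms yields commutativity.

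Finally, for uniqueness, suppose $\beta : \cat{C} \ra \ftor{\CAct_\omega}(\cat{C})$ is another coalgebra homomorphism. Copointedness fixes $\pr_0 \beta(A) = A$, and commutativity with $\gamma$ forces the change object of $\gamma(\beta(A))$, namely $\Pi \beta(A)$, to equal $\beta(\changes A)$, together with the $\alpha$-induced structure maps; an induction on $j$ then shows $\pr_j \beta(A) = \changes^j A$ with exactly the iterated $\alpha$-structure, and likewise on morphisms, so $\beta = \alpha_\omega$. I expect the main obstacle to be the higher-order chain rule underlying functoriality---making the iterated $\A{D}^n$ bookkeeping match $\d^n(g \circ f)$ on the nose---together with the inductive step in the uniqueness argument, where one must check that the homomorphism condition genuinely transmits the $\alpha$-structure all the way up the $\omega$-sequence rather than pinning it down only at the bottom level.
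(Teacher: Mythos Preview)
Your proposal is correct and follows essentially the same route as the paper: build $\alpha_\omega$ by iterating $\alpha$ to produce the $\omega$-sequence $\seq{\changes^i A}$ with morphism components $\seq{\d^i f}$, reduce the coalgebra square to the identity $\alpha_\omega(\changes A) = \Pi\,\alpha_\omega(A)$ (equivalently $\Pi\ds f = \ds{\d f}$), and establish uniqueness by induction along the sequence via $\Pi^n\beta(f) = \beta(\d^n f)$. If anything you are more careful than the paper, which does not spell out functoriality or product preservation of $\alpha_\omega$; your higher-order chain rule concern is exactly the content of functoriality, and you rightly locate it as the main bookkeeping step.
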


\lochanged{
\begin{proof}
We first exhibit the functor $\alpha_\omega : \cat{C} \to \ftor{\CAct_\omega}(\cat{C})$. 

Take a $\cat{C}$-morphism $f : A \to B$.
We define the $\omega$-differential map $\alpha_\omega(f) \defeq \ds f : \ds A \to \ds B$, where $\ds{A} \defeq \left(\seq{A_i}, \seq{\ds{\oplus}_i}, \seq{\ds{+}_i}, \seq{0_i}\right)$ is the $\omega$-change action determined by $A$ under \emph{iterative actions of $\alpha$}.
I.e.~for each $i \geq 0$: $A_i \defeq  \changes^i{A}$ (by abuse of notation, we write $\changes A'$ to mean the carrier object of the monoid of the internal change action $\alpha(A')$, for any $\cat{C}$-object $A'$); 
$\ds{\oplus}_j : \Pi^j \changeseq{A} \times \Pi^{j+1}\changeseq{A} \to \Pi^j\changeseq{A}$ is specified by: $\pr_k \ds{\oplus}_j$ is the monoid action morphism of $\alpha(A_{j+k})$; 
$\ds{+}_j : \Pi^{j+1}\changeseq{A} \times \Pi^{j+1}\changeseq{A} \to \Pi^{j+1}\changeseq{A}$ is specified by: $\pr_k \ds{\oplus}_j$ is the monoid sum morphism of $\alpha(A_{j+k})$;
$0_i$ is the zero object of $\alpha(A_i)$.

The $\omega$-sequence $\ds f \defeq \seq{f_i}$ is defined by induction: 
$f_0 \defeq  f$; assume $f_n : (\A{D} \ds A)_n \to B_n$ is defined and suppose $\alpha(f_n) = (f_n, \d f_n)$ then define $f_{n+1} \defeq  \d f_n$.

To see that the diagram commutes, notice that $\gamma(\ds f) = (\ds f, \Pi \ds f)$ and $\ftor{\CAct}(\alpha_\omega)$ maps $\alpha(f) = (f, \d f)$ to $(\ds f, \ds {\d f})$; then observe that $\Pi \ds f = \ds {\d f}$ follows from the construction of $\ds f$.

Finally to see that the functor $\alpha_\omega$ is unique, consider the $\cat{C}$-morphisms $\d^n f$ $(n = 0, 1, 2, \cdots)$ where $\alpha(\d^n f) = (\d^n f, \d^{n+1} f)$. 
Suppose $\beta : \cat{C} \to \ftor{\CAct_\omega}(\cat{C})$ is another homomorphism.
Thanks to the commuting diagram, we must have $\Pi^n \beta(f) = \beta(\d^n f)$, and so, in particular $(\beta(f))_n = (\Pi^n \beta(f))_0 = (\beta(\d^n f))_0 = \d^n f$, for each $n \geq 0$. 
Thus~$\ds f = \beta(f)$ as desired.
\qed
\end{proof}
}

Intuitively any change action model on $\cat{C}$ is always a ``subset'' of the change action model on
$\ftor{\CAct_\omega}(\cat{C})$.

\begin{rep-theorem}{cact-as-limit}
  The category $\ftor{\CAct_\omega}(\cat{C})$ is the limit in $\cart$ of the diagram
  \begin{center}
    \begin{tikzcd}[ampersand replacement=\&]
      \cat{D}
      \arrow[dd]
      \arrow[ddr]
      \arrow[ddrr]
      \\\\
      \ftor{\CAct}(\cat{C})
      \& \ftor{\CAct}(\ftor{\CAct}(\cat{C}))
      \arrow[l, shift left, "\xi"] 
      \arrow[l, shift right, swap, "\epsilon"]
      \& \ftor{\CAct}(\ftor{\CAct}(\ftor{\CAct}(\cat{C})))
      \arrow[l, shift left, "\xi"] 
      \arrow[l, shift right, swap, "\epsilon"]
      \& \ldots \arrow[l, shift left, "\xi"]
      \arrow[l, shift right, swap, "\epsilon"]
    \end{tikzcd}
  \end{center}
\end{rep-theorem}
\proofref

\lo{The intuition is that $\ftor{\CAct_\omega}(\cat{C})$ is a model of higher-order derivatives. 

- What ``axioms'' of (higher-order) derivatives are satisified by $\ftor{\CAct_\omega}(\cat{C})$?

- If we assume the strictness conditions of generalised cartesian differential categories, does $\ftor{\CAct_\omega}(\cat{C})$ satisfy any of the axioms [CD.1-7]?

- Is it reasonable to expect they do?}

\map{
  Not sure what strictness means in this context.}
\lo{By \emph{strictness conditions} in the definition of generalised differential cartesian categories, I mean the following: commutativity of the monoid $L(A)$, $L(A \times B) = L(A) \times L(B)$, and $L(L_0(A)) = L(A)$.}
\map{
  $\ftor{\CAct_\omega}(\cat{C})$ satisfies CD.3-5, for the rest additional axioms are necessary.
  There should be `smaller' versions of $\ftor{\CAct_\omega}$ (only tight change actions, only commutative change actions (the original definition actually
  was restricted to tight, commutative change actions), only stable derivatives...) that allow for more or less the same results but satisfy more
  axioms.
}

\lo{Comment on (any) connections between the $\ftor{\CAct_\omega}(\cat{C})$ construction and the Fa\`a di Bruno comonad.}

\section{Related work, future directions and conclusions}

Firstly, the present work directly expands upon work by the authors and others in \cite{mario2019fixing}, where
the notion of change action was developed in the context of the incremental evaluation of Datalog programs.
This work generalizes some results in \cite{mario2019fixing} and addresses two significant questions that had
been left open, namely: how to construct cartesian closed categories of change actions and how to formalize
higher-order derivatives.

Our work is also closely related to Cockett, Seely and Cruttwell's work on cartesian differential categories
\cite{blute2009cartesian,blute2010convenient,cockett2014differential} 
and Cruttwell's more recent work on generalised cartesian differential categories \cite{cruttwell2017cartesian}.
Both cartesian differential categories and change action models aim to provide a setting for differentiation, and the construction of $\omega$-change actions resembles the Fa\`a di Bruno construction \cite{cruttwell2017cartesian,CockettS11} 
(especially its recent reformulation by Lemay\footnote{In a notable case of parallel evolution, we developed our results independently of Lemay.} \cite{Lemay18})
which, given an arbitrary category $\cat{C}$, builds a cofree cartesian differential category for it).
The main differences between these two settings lie in the specific axioms required (change action models are significantly weaker: see Remark~\ref{rem:GCDC-axioms}) and the approach taken to define derivatives. 

This last point is of particular interest: cartesian differential categories 
assume there is a notion of differentiation in place that satisfies certain coherence conditions, 
whereas change actions give an algebraic definition of what it means for a function to have a derivative in terms of the change action structure.
In this sense, the derivative condition is close to the Kock-Lawvere axiom from synthetic differential geometry \cite{Kock06,lavendhomme2013basic}, which has provided much of the driving intuition behind this work, and making this connection precise is the subject of ongoing research.

In a different direction, the nice interplay between derivatives, products and exponentials in
closed change action models (see Theorem~\ref{thm:oplus-lambda-infinitesimal}) suggests that there should be a reasonable calculus for change action models. 
It would be interesting to formulate such a calculus and explore its connections to the differential $\lambda$-calculus \cite{DBLP:journals/tcs/EhrhardR03}. 
This could also lead to practical applications to languages for incremental computation, or even higher-order automatic differentiation \cite{kelly2016evolving}.

In conclusion, change actions and change action models constitute a new setting for reasoning about differentiation that is able to unify ``discrete'' and ``continuous'' models, as well as higher-order functions.
Change actions are remarkably well-behaved and show tantalising connections with geometry and 2-categories.
We believe that most ad hoc notions of derivatives found in disparate subjects can be elegantly integrated into the framework of change action models.
We therefore expect any further work in this area to have the potential of benfitting these notions of derivatives.

\lo{It is important to justify the notion of change action and change action model, and give convincing examples of change action models (e.g.~models of incremental computation, and discrete calculus) which are \emph{not} (generalised) cartesian differential categories.}

\map{
  This is a shame. I wonder if Cruttwell might be more open to the idea, especially since he was the one to generalise cartesian differential categories.
  (I wonder whether this is the reason why he wrote that paper without Cockett)
}

\lo{You are right that the connections with Cruttwell's work (generalised cartesian different categories) are more the issue here. 
I don't know Cruttwell. I hope he is a reasonable person.}

\todo{Discuss:
- Lemay's latest work \cite{Lemay18}.

- Tangent categories}

\lo{22 Oct 2018.
What are the higher-order derivatives of the identity morphism?
This is related to the definition of the identity morphism and morphism composition of the category $\ftor{\CAct_\omega}(\cat{C})$.

Assuming thinness of the change structure $A$, we must have $\d \Id (a, \change{a}) = \change a$.

Similarly, assuming thiness of the change structure on $A \times \changes A$, and $\changes{(A \times \changes{A})} = \changes{A} \times \changes{\changes{A}}$, we must have 
$\d^2 \Id ((a, \change{a}), (\change{x}, \change{\change{x}})) = \change{\change{x}}$. 
Note that $\d^2 \Id$ is not $0$.
\map{This is actually as in cartesian differential categories.
A CDC in principle does not say anything about $\DF{\Id}$, but when the object is of the form $A \times B$ then
$\Id = \pair{\pi_1}{\pi_2}$ and then, according to the rules of a CDC, we must have $\DF{\Id_{A \times B}} = \pi_2$, just
as is the case for change actions. Since $\DF{\pi_1} = \pi_1 \circ \pi_2$, the second derivative also coincides.}

This contrasts with the Fa\`a di Bruno construction in \cite{cruttwell2017cartesian}.
In Cruttwell's treatment, given a morphism $f : A \to B$, its $n$-th derivative is a morphism 
\[
f_n : A \times \underbrace{\changes{A} \times \cdots \times \changes{A}}_n \to \changes{B}
\] 
as opposed to $A \times (\changes{A})^{n+1} \to \changes{B}$.
Recall that they assume, crucially: (i) $\changes$ distributes over product, i.e., $\changes{(A \times B)} = \changes{A} \times \changes{B}$, 
and (ii) idempotence of $\changes$, i.e., $\changes{\changes{A}} = \changes{A}$.
Note that in their setting second- and higher-order derivatives of the identity morphism is the 0-morphism.

\map{This isn't entirely true. 
It's true that in the Fa\`a di Bruno construction the $n$-th term of the sequence $(f_0, f_1, \ldots)$ that defines the identity map has $f_{n+2} = 0$, 
but this is because the Fa\`a di Bruno construction only keeps track of the non-linear part of the derivative.

JS Lemay has a much simpler treatment of the Fa\`a di Bruno construction (``A Tangent Category Alternative to the Fa\`a di Bruno Construction'' in arXiv) 
given in terms of what he calls D-sequences, which is almost identical to my own $\omega$-differential maps (he even has a notion of pre-D-sequence for a sequence of morphisms ``of the right type'' that don't necessarily satisfy the axioms, just like I do. 
Interestingly, we seem to have developed these notions in parallel, but completely independently).
}

Morphisms of the category are infinite sequences $(f_\ast, f_1, f_2, \cdots)$ where $f_\ast : A \to B$, and for each $i$, $f_i$ is assumed to be additive and symmetric (it is intuitively the $i$-th order derivative of $f_\ast$).
The key idea behind the Fa\`a di Bruno construction is a definition of composition of morphisms.

Questions: are there examples / situations of differentiation whereby (i) and (ii) above do not hold?

}

\bibliographystyle{plainnat}
\bibliography{paper}

\clearpage
\appendix


\section{Supplementary materials for Section~\ref{sec:change-actions}}

\begin{proof-for-prop}{unique-derivative-regular}
  Suppose $a \in \us{A}$ (if $\us{A}$ is empty the property follows trivially), and $\change{a}, \change{b} \in \changes{A}$.
  Then
  \begin{align*}
    f(a \oplus (\change{a} + \change{b}))
    &= f(a \oplus \change{a} \oplus \change{b})
    \\
    &= f(a \oplus \change{a}) \oplus \d f(a \oplus \change{a}, \change{b})
    \\
    &= \big(f(a) \oplus \d f(a, \change{a})\big) \oplus \d f(a \oplus \change{a}, \change{b})
    \\
    &= f(a) \oplus \big(\d f(a, \change{a}) + \d f(a \oplus \change{a}, \change{b})\big)
  \end{align*}
  Thus we can define the following derivative for $f$
  $$
  \d f_a(x, \change{x}) \defeq
  \begin{cases}
    \d f(a, \change{a}) + \d f(a \oplus \change{a}, \change{b})
    &\text{ when $x = a, \change{x} = \change{a} + \change{b}$}\\
    \d f(x, \change{x})
    &\text{ otherwise}
  \end{cases}
  $$
  Since the derivative is unique, it must be the case that $\d f = \d f_a$ and therefore
  $\d f(a, \change{a} + \change{b}) = \d f(a, \change{a}) + \d f(a \oplus \change{a}, \change{b})$.
  By a similar argument, $\d f(a, 0) = 0$ and thus $\d f$ is regular.
  
\end{proof-for-prop}

\begin{rem}\label{rem:reg-der-not-nec}
  One may wonder whether every differentiable function admits a regular derivative: the answer is no. 
  Consider the change actions:
  \begin{align*}
    \da{A_1} &= (\ZZ_2, \ZZ_2, \plus, \plus, 0)\qquad
    \da{A_2} = (\ZZ_2, \NN, \left[ \plus \right], \plus, 0)
  \end{align*}
  where $\left[ m \right]\left[ \plus \right] n = \left[ m \plus n \right]$. 
  The
  identity function $\Id : \da{A_1} \ra \da{A_2}$ admits infinitely many
  derivatives, none of which are regular. The condition under which a 
  (differentiable) function admits a regular derivative is an open question.
\end{rem}

\begin{proof-for-prop}{chain-rule-regular}
\[
  \begin{array}{ll}
    & (\d g \circ \pair{f \circ \pi_1}{\d f})(a, \change{a} \plus \change{b})\\ 
    = \ & \d g(f(a), \d f(a, \change{a} \plus \change{b}))\\
    =& \d g(f(a), \d f(a, \change{a}) \plus \d f(a \oplus\change{a}, \change{b}))\\
    =& \d g(f(a), \d f(a, \change{a}))
     \plus \d g(f(a) \oplus \d f(a, \change{a}), \d f(a\oplus\change{a}, \change{b}))\\
    =& (\d g \circ \pair{f \circ \pi_1}{\d f})(a, \change{a})
     \plus (\d f \circ \pair{f\circ \pi_1}{\d f})(a \oplus \change{a}, \change{b})
  \end{array}
\]
  \begin{align*}
    (\d g \circ \pair{f \circ \pi_1}{\d f})(a, 0)
    &= \d g(f(a), \d f(a, 0))\\
    &= \d g(f(a), 0)\\
    &= 0
  \end{align*}
  
\end{proof-for-prop}

\begin{proof-for-theorem}{cact-preord}

  Consider an arbitrary change action $A = (\us A, \changes A, \oplus, +, 0)$. Its structure as a change action
  induces a natural preorder on the base set $\us A$.
  \begin{defi}[Reachability preorder]\rm
    For $a, b \in \us A$, we define $a \reachOrder b$ iff there is a $\change{a} \in \changes{A}$ such that
    $a \oplus \change{a} = b$. Then $\reachOrder$ defines a preorder on $\us A$.
  \end{defi}

  The intuitive significance of the reachability preorder induced by $A$ is that it contains all the information about differentiability of functions from or into $A$.
  This is made precise in the following result:

  \begin{lem}
    A function $f : \us A \ra \us B$ is differentiable as a function from $A$ to $B$ iff it is monotone with respect to the reachability preorders on $A, B$.
  \end{lem}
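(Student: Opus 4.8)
The plan is to prove the two implications separately; both follow by unwinding the definitions, and the only genuine content lies in the backward direction, where a choice principle intervenes.

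First I would treat the forward direction (\emph{differentiable implies monotone}). Suppose $f$ has a derivative $\d f : \us A \times \changes A \ra \changes B$, and suppose $a \reachOrder b$ in $A$, witnessed by some $\change a \in \changes A$ with $a \oplus_A \change a = b$. Applying the derivative condition gives $f(b) = f(a \oplus_A \change a) = f(a) \oplus_B \d f(a, \change a)$, so the change $\d f(a, \change a) \in \changes B$ witnesses $f(a) \reachOrder f(b)$ in $B$. Hence $f$ is monotone.

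For the backward direction (\emph{monotone implies differentiable}) I would construct a derivative pointwise. Fix any $(a, \change a) \in \us A \times \changes A$ and set $b \defeq a \oplus_A \change a$; then $a \reachOrder b$ is witnessed by $\change a$, so by monotonicity $f(a) \reachOrder f(b)$, i.e.\ the set $S_{(a, \change a)} \defeq \{ \change c \in \changes B : f(a) \oplus_B \change c = f(a \oplus_A \change a) \}$ is nonempty. Using the Axiom of Choice to select one element from each $S_{(a, \change a)}$ yields a function $\d f : \us A \times \changes A \ra \changes B$ that, by construction, satisfies $f(a \oplus_A \change a) = f(a) \oplus_B \d f(a, \change a)$ for all $a, \change a$; this is precisely a derivative for $f$, so $f$ is differentiable.

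The main obstacle—indeed the only subtlety—is the appeal to choice in the backward direction: monotonicity guarantees only the \emph{existence} of a suitable change $\change c$ for each argument pair, and assembling these into a single function $\d f$ requires selecting one witness uniformly across the (possibly large) index set $\us A \times \changes A$. I would flag this reliance on the Axiom of Choice explicitly, since it is not needed for the forward direction. Everything else is a mechanical application of the derivative condition and the definition of the reachability preorder.
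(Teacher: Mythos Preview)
Your proof is correct and essentially identical to the paper's own argument: both directions unwind the definitions, and the backward direction constructs a derivative by choosing a witness for each pair $(a,\change a)$, with the Axiom of Choice flagged explicitly.
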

  \begin{proof}
    Let $f$ be a differentiable function, with $\d f$ an arbitrary derivative, and suppose $a \reachOrder_A a'$. Hence there is some $\change a$ such that $a \oplus \change a = a'$.
    Then, by the derivative property, we have $f(a') = f(a \oplus \change a) = f(a) \oplus \d f (a, \change a)$, hence $f(a) \reachOrder_B f(a')$.

    Conversely, suppose $f$ is monotone, and pick arbitrary $a \in A, \change a \in \changes A$. Since $a \reachOrder_A a \oplus \change a$ and $f$ is monotone,
    we have $f(a) \reachOrder_B f(a \oplus \change a)$ and therefore there exists a $\change b \in \changes B$ such that $f(a) \oplus \change b = f(a \oplus \change a)$.
    We define $\d f(a, \changes a)$ to be precisely such a change $\change b$
    (note that the process of arbitrarily picking a $\change b$ for every pair $a, \change a$ makes use, in general, of the Axiom of Choice).
    
  \end{proof}

  The correspondence between a change action and its reachability preorder gives rise to a (full and faithful) functor $\ftor{Reach} : \cat{CAct} \ra \cat{PreOrd}$ that acts as the
  identity on morphisms.

  Conversely, any preorder $\leq$ on some set $\us A$ induces a change action
  \[ A_\leq \defeq (\us A, \us A_\bot, \sqcup, \sqcup, \bot) \]
  where $\us A_\bot$ is the set $\us A$ extended with a bottom element $\bot$, and $\sqcup$ denotes the least upper bound according to the preorder $\leq$.
  Note that the reachability preorder of the change action $A_\leq$ is precisely $\leq$.

  This defines another full and faithful functor $\ftor{Act} : \cat{PreOrd} \ra \cat{CAct}$ that is the identity on morphisms.

  It remains to check that there are natural isomorphisms $\vf{U} : \ftor{Act} \circ \ftor{Reach} \ra \Id_{\cat{CAct}}$
  and $\vf{V} : \ftor{Reach} \circ \ftor{Act} \ra \Id_{\cat{PreOrd}}$. But these are trivial: it suffices to set $\vf{U}_A = \Id_A$
  and $\vf{V}_{(\us A, \leq)} = \Id_{(\us A, \leq)}$. Hence $\ftor{Act}, \ftor{Reach}$ establish an equivalence of categories between $\cat{PreOrd}$ and $\cat{CAct}$.
  
  \map{Not sure if I should explain why the identities are isomorphisms - it was obvious in the POPL paper due to the background material} 
  \lo{Yes we should explain in the arXiv version, post FoSSaCS19 submission.}
\end{proof-for-theorem}

\subsection*{Adjunctions with $\cat{Set}$}

\begin{proof-for-lemma}{forgetful-functor-is-right-adjoint}
  \begin{align*}
    \epsilon\mathcal{D} \circ \mathcal{D}\eta
    &= (\A{Id}, 0) \circ (\eta, !)\\
    &= (\A{Id} \circ \eta, 0)\\
    &= (\A{Id} \circ \A{Id}, 0)\\
    &= (\A{Id}, 0)\\
    &= \A{Id}
  \end{align*}
  Furthermore:
  \begin{align*}
    \mathcal{F}(\epsilon) \circ \eta\mathcal{F}
    &= \mathcal{F}(\A{Id}, 0) \circ \A{Id}\\
    &= \A{Id} \circ \A{Id}\\
    &= \A{Id}
  \end{align*}
\end{proof-for-lemma}

\section{Supplementary materials for Section~\ref{sec:change-actions-arbitrary}}

\subsection*{The category $\ftor{\CAct}(\cat{C})$}

\begin{thm} \label{cact-preserves-products}
  The functor $\ftor{\CAct}$ preserves all products.
\end{thm}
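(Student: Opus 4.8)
The plan is to show that the comparison functor $\ftor{\CAct}\big(\prod_{i} \cat{C}_i\big) \to \prod_{i} \ftor{\CAct}(\cat{C}_i)$ induced by the projections is an isomorphism of categories, by exhibiting the two sides as literally the same data. First I would recall that the product $\prod_i \cat{C}_i$ in $\cart$ is the product category: its objects and morphisms are $I$-indexed tuples, composition and identities are computed componentwise, and---crucially---its chosen finite products are the componentwise chosen products of the $\cat{C}_i$. Writing $P_i : \prod_j \cat{C}_j \to \cat{C}_i$ for the projection functors, each $P_i$ is product-preserving, so $\ftor{\CAct}(P_i)$ is defined and acts as the $i$-th projection on the underlying tuples.

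The first substantive step is to decompose objects. An object of $\ftor{\CAct}(\prod_i \cat{C}_i)$ is a tuple $(\us A, \changes A, \oplus_A, +_A, 0_A)$ of objects and morphisms of $\prod_i \cat{C}_i$ subject to the monoid and action diagrams. Since every object and morphism of $\prod_i \cat{C}_i$ is a tuple, and since the structural maps appearing in those diagrams (projections, pairings, associators) are computed componentwise, a diagram in $\prod_i \cat{C}_i$ commutes iff each of its $I$-indexed components commutes in the corresponding $\cat{C}_i$. Hence $(\changes A, +_A, 0_A)$ is a monoid object acting on $\us A$ exactly when each component $(\changes A_i, (+_A)_i, (0_A)_i)$ is a monoid object acting on $\us A_i$ in $\cat{C}_i$. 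This yields a bijection between objects of $\ftor{\CAct}(\prod_i \cat{C}_i)$ and tuples $(\da{A}_i)_i$ with $\da{A}_i$ an object of $\ftor{\CAct}(\cat{C}_i)$.

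The second step repeats the argument for morphisms. A differential map $\da{f} = (\us f, \d f)$ in $\ftor{\CAct}(\prod_i \cat{C}_i)$ is a pair of morphisms of $\prod_i \cat{C}_i$ satisfying the derivative condition and the two regularity diagrams. Each of those three diagrams again involves only componentwise structural maps, so it commutes iff its components do; thus $\da{f}$ decomposes uniquely as a tuple $(\da{f}_i)_i$ of differential maps $\da{f}_i : \da{A}_i \to \da{B}_i$ in $\ftor{\CAct}(\cat{C}_i)$. Since identities are the componentwise $(\Id, \pi_1)$ and composition in $\ftor{\CAct}$ is given by the same formula $\da{g}\circ\da{f} = \pair{(\us g \circ \us f)\circ\pi_1}{\d g \circ \pair{\us f \circ \pi_1}{\d f}}$ in every component, this bijection is functorial and strictly respects identities and composition.

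Combining the two steps gives an isomorphism of categories $\ftor{\CAct}(\prod_i \cat{C}_i) \cong \prod_i \ftor{\CAct}(\cat{C}_i)$, and by construction the projection out of the right-hand side corresponds to $\ftor{\CAct}(P_i)$; hence the comparison functor is this isomorphism and products are preserved. The empty case is immediate: the terminal object of $\cart$ is the one-object one-morphism category $\A{1}$, on which the only internal change action is the unique tuple of identities and the only differential map is $(\Id, \Id)$, so $\ftor{\CAct}(\A{1}) \cong \A{1}$. I expect the only real work to be the bookkeeping in the object and morphism decomposition---namely the observation, used repeatedly, that a diagram in a product category commutes precisely when all its components do and that all the ambient structural morphisms are computed componentwise; everything else is formal.
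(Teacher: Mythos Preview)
Your proposal is correct and follows essentially the same approach as the paper: both argue that a change action (respectively, differential map) in the product category $\prod_i \cat{C}_i$ is exactly an $I$-indexed family of change actions (respectively, differential maps) in the $\cat{C}_i$, because all the defining diagrams are built from componentwise structural maps and hence commute iff they commute componentwise. Your write-up is in fact more careful than the paper's---you explicitly identify the comparison functor, verify functoriality, and treat the empty product---but the underlying idea is identical.
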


\begin{proof}
  Consider an $I$-indexed product of categories $\prod_{i \in I} \cat{C}_i$. 
  An object $A$ of $\ftor{\CAct}(\prod_{i \in I} \cat{C}_i)$ is a change action $(\us A, \changes A, \oplus_A, +_A, 0_A)$ where:
  \begin{itemize}
    \item $\us A, \changes A$ are $I$-indexed families of objects $\us A_i, \changes A_i$ of $\cat{C}_i$
    \item $\oplus_A$ is an $I$-indexed family of $\cat{C}_i$-morphisms $\oplus_i : \us A_i \times \changes A_i \ra \us A_i$
    \item $+_A$ is an $I$-indexed family of $\cat{C}_i$-morphisms $+_i : \changes A_i \times \changes A_i \ra \us \changes A_i$
    \item $0_A$ is an $I$-indexed family of $\cat{C}_i$-morphisms $0_i : \top_i \ra \changes A_i$
  \end{itemize}
  satisfying the relevant conditions ($(\changes A, +_A, 0_A)$ is a monoid, $\oplus_A$ is an action).
  But this entails that, for every $i$, the triple $(\changes A_i, +_i, 0_i)$ defines a monoid in $\cat{C}_i$, and $\oplus_i$ is an action of this monid on $\ds A_i$. 
  Hence we obtain an $I$-indexed family $A_i$ of change actions in $\ftor{\CAct}(\cat{C}_i)$ respectively.
  Conversely, given any such family, we can always construct the corresponding change action in $\ftor{\CAct}(\prod_{i \in I} \cat{C}_i)$.

  A similar argument applies to differential maps: every differential map $f : A \ra B$ in $\ftor{\CAct}(\prod_{i \in I} \cat{C}_i)$ corresponds to a family
  of differential maps $f_i : A_i \ra B_i$ in $\ftor{\CAct}(\cat{C}_i)$ and vice versa. Hence the functor $\ftor{\CAct}$ preserves all products.
  
\end{proof}

\subsection*{Products and coproducts in $\ftor{\CAct}(\cat{C})$}

\begin{proof-for-theorem}{cact-products}
  Given any pair of differential maps $f_1 : C \ra A, f_2 : C \ra B$, define 
  \begin{gather*}
    \pair{f_1}{f_2} \defeq (\pair{\us{f_1}}{\us{f_2}},
    \pair
    {\d f_1 \circ \pair{\pi_1 \circ \pi_1}{\pi_1 \circ \pi_2}}
    {\d f_2 \circ \pair{\pi_2 \circ \pi_1}{\pi_2 \circ \pi_2}})
  \end{gather*}
  Then $\overline{\pi_i} \circ \pair{f_1}{f_2} = f_i$.
  Furthermore, given any map $h : C \ra A \times B$ whose projections coincide with
  $\pair{f_1}{f_2}$, by applying the universal property of the product in $\cat{C}$, we obtain $h = \pair{f_1}{f_2}$.
  
\end{proof-for-theorem}

\begin{proof-for-theorem}{cact-terminal}
  It is straightforward to check that, given a change action $A$, there is exactly one differential map $\overline{!} \defeq (!, !) : A \ra \overline{\top}$.
  Now given a differential map $(\us f, \d f) : \overline{\top} \ra A$, applying regularity we obtain:
  \begin{align*}
    \d f &= \d f \circ \pair{\Id_\top}{\Id_\top}\\
         &= \d f \circ \pair{\Id_\top}{0_\top}\\
         &= 0_A
  \end{align*}
  
\end{proof-for-theorem}

\begin{proof-for-lemma}{partial-derivatives}
  Abusing the notation again, the lemma is a direct consequence of regularity:
  \begin{align*}
    \d f((a, b), (\change a, \change b)) 
    &= \d f ((a, b), (\change a, 0_B) +_{A \times B} (0_A, \change b))\\
    &= \d f ((a, b), (\change a, 0_B)) 
      +_C \d f ((a, b) \oplus_{A \times B} (\change a, 0_B), (0_A, \change b))\\
    &= \d f ((a, b), (\change a, 0_B))
      +_C \d f ((a \oplus_A \change a, b), (0_A, \change b))
  \end{align*}
  
\end{proof-for-lemma}

\begin{proof-for-theorem}{cact-coproducts}
  Given any pair of differential maps $f_1 : A \ra C, f_2 : B \ra C$ define 
  \begin{align*}
    \left[ \da {f_1}, \da{f_2} \right] &\defeq (\left[ \us{f_1}, \us{f_2} \right], \d h)\\
    \d h &\defeq \left[ \d f_1 \circ (\Id_A \times \pi_1), \d f_2 \circ (\Id_B \times \pi_2) \right] \circ \delta_{A, B, C}
  \end{align*}
  where $\delta_{A, B, C} : (A \sqcup B) \times C \ra (A \times C) \sqcup (B \times C)$ is the distributive law of $\cat{C}$.

  We check that, indeed, the relevant diagram commutes since:
  \[
  \begin{array}{ll}
    &\left[ f_1, f_2 \right] \circ \overline{\iota_1}\\
    =& (\left[ \us{f_1}, \us{f_2} \right], \d h) \circ (\iota_1, \pair{\pi_2}{0_B})\\
    =& (\left[ \us{f_1}, \us{f_2} \right] \circ \iota_1,
      \d h \circ \pair{\iota_1 \circ \pi_1}{\pair{\pi_2}{0_B}})\\
    =& (\us{f_1},
      \left[ \d f_1 \circ (\Id_A \times \pi_1), \d f_2 \circ (\Id_B \times \pi_2) \right]
      \circ \delta_{A, B, C} \circ \pair{\iota_1 \circ \pi_1}{\pair{\pi_2}{0_B}})\\
    =& (\us{f_1},
      \d f_1 \circ (\Id_A \times \pi_1) \circ \pair{\pi_1}{\pair{\pi_2}{0_B}})\\
    =& (\us{f_1},
      \d f_1 \circ \pair{\pi_1}{\pi_2}) \\
    =& (\us{f_1}, \d f_1)\\
    =& f_1
  \end{array}
  \]
  The universal property of the coproduct in $\cat{C}$ entails that if $h = (\us{h}, \d h)$ is such that $h \circ \overline{\iota_i} = f_i$, then $h = \left[ {f_1}, {f_2} \right]$. 
  Furthermore, since 
  \[
  (\us A \sqcup \us B) \times \changes A \times \changes B \ \cong \ (\us A \times \changes A \times \changes B) \sqcup (\us B \times \changes A \times \changes B),
  \]
  the universal property of the coproduct also shows that necessarily $\d h = \d f$.
  
\end{proof-for-theorem}

\subsection*{Change actions as Lawvere theories}

\begin{proof-for-lemma}{delta-algebras-are-t-algebras}
  Consider a natural transformation $\phi : F \circ \mathbb{T} \ra G \circ
  \mathbb{T}$, and define
  \begin{gather*}
    f = \phi_{X} : F(X) \ra G(X)\\
    f' = \pi_2 \circ \phi_{\A{T}X} : F(X) \times F(\changes{X}) \ra G(\changes{X})
  \end{gather*}
  Then, by naturality of $\phi$, it follows that:
  \begin{align*}
    \pi_1 \circ \phi_{\A{T}X}
    &= (G \circ \mathbb{T})(\Pi) \circ \phi_{\A{T}X}\\
    &= \phi_{X} \circ (F \circ \mathbb{T})(\Pi)\\
    &= f \circ \pi_1
  \end{align*}
  Hence $\phi_{\A{T}X} = \pair{f \circ \pi_1}{f'}$.
  Additionally, we also have:
  \begin{align*}
    G(\oplus) \circ \pair{f \circ \pi_1}{f'}
    &=G(\oplus) \circ \phi_{\A{T}X}\\
    &= \phi_{X} \circ F(\oplus)\\
    &= f \circ F(\oplus)
  \end{align*}
  which states precisely that $f'$ is a derivative for $f$.
\end{proof-for-lemma}

\subsection*{Stable derivatives and linearity}

\begin{proof-for-lemma}{stable-derivatives-additive}
  Since $\d f$ is regular, the following diagram commutes:
  \begin{center}
    \begin{tikzcd}[ampersand replacement=\&]
      \us A \times (\changes{A} \times \changes{A})
      \arrow[rr, "\pair{\pair{\pi_1}{\pi_1 \circ \pi_2}}{\arr{a}}"]
      \arrow[dd, "\d f \circ (\Id \times +_A)", swap]
      \&\& (\us A \times \changes{A}) \times ((\us A \times \changes{A}) \times \changes{A})
      \arrow[dd, "\d f \times (\d f \circ (\oplus_A \times \Id))"]
      \\\\
      \changes{B}
      \&\&
      \changes{B} \times \changes{B}
      \arrow[ll, "+_B"]
    \end{tikzcd}
  \end{center}
  But since $\d f$ is stable, we have $\d f \circ (\oplus \times \Id) =
  \d f \circ (\pi_1 \times \Id)$, and substituting in the previous diagram
  gives the desired result.
  
\end{proof-for-lemma}

\begin{proof-for-lemma}{composition-stable}
  \begin{align*}
    \d (\da{g} \circ \da{f}) \circ (\oplus \times \Id)
    &= \d g \circ \pair{\us{f} \circ \pi_1}{\d f} \circ (\oplus \times \Id)\\
    &= \d g \circ \pair{\us{f} \circ \pi_1 \circ (\oplus \times \Id)}{\d f \circ (\oplus \times \Id)}\\
    &= \d g \circ \pair{\us{f} \circ \oplus \pi_1}{\d f \circ (\oplus \times \Id)}\\
    &= \d g 
      \circ \pair{\oplus \circ \pair{f \circ \pi_1}{\d f} \circ \pi_1}
      {\d f \circ (\pi_1 \times \Id)}\\
    &= \d g 
      \circ (\oplus \times \Id)
      \circ \pair{\pair{\us{f} \circ \pi_1}{\d f} \circ \pi_1}{\d f \circ (\pi_1 \times \Id)}\\
    &= \d g \circ (\pi_1 \times \Id)
      \circ \pair{\pair{\us{f} \circ \pi_1}{\d f} \circ \pi_1}{\d f \circ (\pi_1 \times \Id)}\\
    &= \d g \circ \pair{\us{f} \circ \pi_1 \circ \pi_1}{\d f \circ (\pi_1 \times \Id)}\\
    &= \d g \circ \pair{\us{f} \circ \pi_1}{\d f} \circ (\pi_1 \times \Id)\\
    &= \d (\da g \circ \da f) \circ (\pi_1 \times \Id)
  \end{align*}
  
\end{proof-for-lemma}

\section{Supplementary materials for Section~\ref{sec:extrinsic}} 

\subsection*{Tangent bundles in change action models}
{}
\begin{proof-for-lemma}{tangent-bundle-monad}
  First, we verify that $\oplus \circ \ftor{T}\arr{z} = \oplus \circ \arr{z}$.
  This is easy to do since (omitting some obvious isomorphisms):
  \begin{align*}
    \ftor{T}(\arr{z}) &= \ftor{T}\pair{\Id}{0}\\
                    &= \pair{\ftor{T}\Id}{\ftor{T}0}\\
                    &= \pair{\Id}{0}\\
                    &= \arr{z}
  \end{align*}
  The equation $\oplus \circ \ftor{T}\oplus = \oplus \circ \oplus$ 
  is merely an instance of the naturality of $\oplus$ and thus it is satisfied
  trivially.
  
\end{proof-for-lemma}

\begin{proof-for-lemma}{internalisation}
  Whenever $\cat{C}$ is cartesian closed, there is a morphism $\arr{d}_{A, B} : (A \Ra B) \ra (A \times \changes{A}) \Ra \changes{B}$ such that, 
  for any morphism $f : 1 \times A \ra B$, $\arr{d}_{A, B} \circ \Lambda f = \Lambda (\d f \circ \pair{\pair{\pi_1}{\pi_{12}}}{\pair{\pi_1}{\pi_{22}}})$.

  Consider the evaluation map $\arr{ev}_{A, B} : (A \Ra B) \times A \ra B$ in $\cat{C}$.
  Its derivative $\d\arr{ev}_{A, B}$ has type
  \begin{gather*}
    \d\arr{ev}_{A, B} : ((A \Ra B) \times A) \times (\changes(A \Ra B) \times \changes A) \ra \changes B
  \end{gather*}
  (note that we make no assumptions about the structure of $\changes (A \Ra B)$).

  Then consider the following composite:
  \begin{gather*}
    \d\arr{ev}_{A, B} \circ \pair{\pair{\pi_1}{\pi_{12}}}{\pair{0_{A \Ra B}}{\pi_{22}}} : (A \Ra B) \times (A \times \changes A) \ra \changes B
  \end{gather*}

  By the universal property of the exponential, we have $\arr{ev}_{A, B} \circ (\Lambda f \times \Id_A) = f$ and therefore
  $\alpha(\arr{ev}_{A, B} \circ (\Lambda f \times \Id_A)) = \alpha(f)$. Thus:
  \begin{align*}
    \d f &= \d (\arr{ev}_{A, B} \circ (\Lambda f \times \Id_A))\\
         &= \d (\arr{ev}_{A, B} \circ \pair{\Lambda f}{\Id_A}\\
         &= \d\arr{ev}_{A, B} \circ \pair{\pair{\Lambda f}{\Id_A} \circ \pi_1}
                                        {\pair{0_{A \Ra B}}{\pi_2}}\\
         &= \d\arr{ev}_{A, B}
           \circ \pair{\pair{\pi_1}{\pi_{12}}}{\pair{0_{A \Ra B}}{\pi_{22}}} 
           \circ \pair{\Lambda f}{\pair{\pi_1}{\pi_2}}\\
         &= \d\arr{ev}_{A, B}
           \circ \pair{\pair{\pi_1}{\pi_{12}}}{\pair{0_{A \Ra B}}{\pi_{22}}} 
           \circ \pair{\Lambda f}{\Id_{A \times \changes A}}
  \end{align*}
  from which it follows trivially that
  \[
    \arr{d}_{A, B} = \Lambda(\d\arr{ev}_{A, B}
    \circ \pair{\pair{\pi_1}{\pi_{12}}}{\pair{0_{A \Ra B}}{\pi_{22}}}
  \]
  is the desired morphism.

\end{proof-for-lemma}

\begin{proof-for-theorem}{oplus-lambda-infinitesimal}
  By the Yoneda lemma, the natural transformation
  \begin{gather*}
    \oplus \circ \phi^{-1} : U \Ra A \ra A
  \end{gather*}
  is precisely evaluation at some fixed element $\mathbbm{1} : 1 \ra U$ (and,
  conversely, evaluating $\phi(t)$ at $\mathbbm{1}$ is precisely $\oplus(t)$).

  Commutativity of the above diagram then can be shown by equational reasoning
  in the internal logic of the CCC $\cat{C}$:
  \begin{align*}
    f : \ftor{T}(A \Ra B) \ts \lambda a . \oplus (\phi^{-1}(\lambda u . \phi(f)(u)(a)))
    &= \lambda a . \phi(f)(\mathbbm{1})(a)\\
    &= \phi(f)(\mathbbm{1})\\
    &= \oplus(f)
  \end{align*}
  
\end{proof-for-theorem}

\section{Supplementary materials for Section~\ref{sec:examples}}

\subsection*{Generalised cartesian differential categories}

\begin{proof-for-theorem}{GCDC-change-action-model}
  We need to check that $\alpha$ is well-defined and a right-inverse to the forgetful
  functor.
  
  First, note that $\alpha(f)$ trivially satisfies the derivative property:
  \begin{gather*}
  f \circ \pi_1 = \pi_1 \circ \pair{f \circ \pi_1}{\DF{f}}
  \end{gather*}
  Furthermore, by the axiom \ax{2} of generalised cartesian differential categories, we have:
  \begin{align*}
    \DF{f} \circ \pair{\Id}{0_A \circ !} &= 0_B\\
    \DF{f} \circ \pair{a}{+ \circ \pair{u}{v}} 
    &= + \circ \pair{\DF{f} \circ \pair{a}{u}}{\DF{f} \circ \pair{a}{v}}
  \end{align*}
  This entails that the map $\alpha(f) = (f, \DF{f})$ is indeed a differential map.
  Functoriality of $\alpha$ is a direct consequence of axioms \ax{3} and \ax{5}.
  
  Furthermore, $\alpha$ preserves products (up to isomorphism) since, by definition,
  $L(A \times B) = L(A) \times L(B)$ and by axioms \ax{3} and \ax{4}, 
  and is trivially a right-inverse to the forgetful functor. 
  Therefore $\alpha$ is a change action model.
  
\end{proof-for-theorem}

If $f : A \ra \ftor{T}B, g : B \ra \ftor{T}C$ are $\cat{C}$-morphisms, we denote their Kleisli composite $\mu \circ \ftor{T}g \circ f$ by $g \star f$. 

\begin{proof-for-lemma}{lem:GCDC-change-action-model}

  First, since $\ftor{T}$ preserves products (up to isomorphism), it follows that $\cat{C}_{\ftor{T}}$ is cartesian, with the product of objects $A, B$ in $\cat{C}_{\ftor{T}}$ being precisely the
  product $A \times B$ in $\cat{C}$. For brevity, we write $\pi_i \circ \pi_j$ as $\pi_{ij}$.

  Given $\cat{C}$-morphisms $f : A \ra \ftor{T}B$ and $g : A \ra \ftor{T}C$ in $\cat{C}$, write 
  \begin{align*}
    \phi &\defeq \pair{\pair{\pi_{11}}{\pi_{12}}}{\pair{\pi_{21}}{\pi_{22}}} : \ftor{T}A \times \ftor{T}B \ra \ftor{T}(A \times B)\\
    \kpair{f}{g} &\defeq \phi \circ \pair{f}{g} \equiv \pair{\pair{\pi_1 \circ f}{\pi_1 \circ g}}{\pair{\pi_2 \circ f}{\pi_2 \circ g}}
  \end{align*}
  with $\phi$ being the isomorphism between $\ftor{T}A \times \ftor{T}B$ and $\ftor{T}(A \times B)$ and $\kpair{f}{g}$ the universal
  morphism for the product $A \times B$ in $\cat{C}_{\ftor{T}}$.

  It is immediate that $\alpha_{\ftor{T}}$ is functorial and preserves products, and it is trivially
  a section of the forgetful functor.

  We need to prove that $\alpha_{\ftor{T}}(A)$ is a change action internal to the category $\cat{C}_{\ftor{T}}$.
  First note that since $(L_0(A), +_A, 0_A)$ is a commutative monoid in $\cat{C}$,
  the triple $(L_0(A), \eta \circ +_A, \eta \circ 0_A)$ is a commutative monoid in $\cat{C}_{\ftor{T}}$.
  
  We need to check that $\eta \circ \pi_1$ is a monoid action. First, note that $(\eta \circ g) \star f = \ftor{T}g \circ f$.
  Then:
  \begin{align*}
    \eta \circ \pi_1 \star \kpair{f}{\eta \circ 0_A}
    &= \ftor{T}\pi_1 \circ \kpair{f}{\eta \circ 0_A}\\
    &= (\pi_1 \times \pi_1) \circ \kpair{f}{\eta \circ 0)A}\\
    &= (\pi_1 \times \pi_1) \circ \pair{\pair{\pi_1 \circ f}{\pi_1 \circ \eta \circ 0}}{\pair{\pi_2 \circ f}{\pi_2 \circ \eta \circ 0}}\\
    &= \pair{\pi_1 \circ f}{\pi_2 \circ f}\\
    &= f
  \end{align*}
  That $\Id_{A \times {L_0(A)}}$ respects the associativity of $\eta \circ +_A$ follows by a similar argument.

  We write in detail the proof that the derivative condition holds. 
  In particular,
  what we seek to prove is 
  \[
  f \star \oplus = \oplus \star \kpair{f \star (\eta \circ \pi_1)}{\DF{f}}.
  \]
  Then, given that $\oplus = \eta \circ \pi_1$, and noting that $f \star \eta = f$, we obtain:
  \[
  \begin{array}{ll}
    (\eta \circ \pi_1) \star \kpair{f \star (\eta \circ \pi_1)}{\DF{f}}
    &= (\eta \circ \pi_1) \star \kpair{f \circ \pi_1}{\DF{f}}\\
    &= \ftor{T}\pi_1 \circ \kpair{f \circ \pi_1}{\DF{f}}\\
    &= (\pi_1 \times \pi_1) \circ \kpair{f \circ \pi_1}{\DF{f}}\\
    &= f \circ \pi_1
  \end{array}
  \]
  On the other hand:
  \begin{align*}
    f \star \oplus
    & = f \star (\eta \circ \pi_1)\\
    & = (f \star \eta) \circ \pi_1\\
    & = f \circ \pi_1
  \end{align*}
  Hence the derivative condition trivially holds.
  
  Regularity is a straightforward consequence of the additivity of derivatives in generalised cartesian differential categories.
  
\end{proof-for-lemma}

\subsubsection{The Eilenberg-Moore model}
\label{sec:em-model}

\begin{defi}\rm
  Given a category $\cat{C}$ and a monad $(\ftor{T}, \eta, \mu)$, a \emph{$\ftor{T}$-algebra} is
  a pair $(A, \nu)$ where $A$ is an object in $\cat{C}$ and $\nu : \ftor{T}A \ra A$ is a 
  $\cat{C}$-morphism such that:
  \begin{itemize}
  \item $\nu \circ \ftor{T}\nu = \nu \circ \mu$ 
  \item $\nu \circ \eta = \Id$
  \end{itemize}
  
  A morphism of $\ftor{T}$-algebras between $\ftor{T}$-algebras $(A, \nu_A), (B, \nu_B)$ is
  a $\cat{C}$-morphism $f : A \ra B$ such that $f \circ \nu_A = \nu_B \circ \ftor{T}f$.
\end{defi}

Both of the previous change action models are in fact categories of algebras for the tangent
bundle monad $\ftor{T}$ on a generalised cartesian differential category $\cat{C}$ (the flat model
considers algebras of the form $(A, \pi_1)$, whereas the Kleisli category for $\ftor{T}$ can be
understood as the category of freely generated $\ftor{T}$-algebras). This is a consequence of the
following result:

\begin{thm}
\label{thm:t-algebras-are-change-actions}
  Let $(A, \nu_A)$ be a $\ftor{T}$-algebra such that $\DF{\nu_A} \circ \pair{f}{\pair{g}{0_A}} = g$.
  Then the tuple $(A, L_0(A), \nu_A, +_A, 0_A)$ is a change 
  action. Furthermore, given a $\cat{C}$-morphism $f : A \ra B$, it is a $\ftor{T}$-algebra morphism
  between $(A, \nu_A)$ and $(B, \nu_B)$ if and only if $\ftor{T}f$ is a differential morphism between 
  the corresponding change actions.
\end{thm}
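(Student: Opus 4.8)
The plan is to establish the theorem in two largely independent stages: first checking the change-action axioms for the tuple $(A, L_0(A), \nu_A, +_A, 0_A)$, and then identifying the $\ftor{T}$-algebra morphism equation with the derivative condition of the flat model.

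For the first stage, the monoid $(L_0(A), +_A, 0_A)$ is supplied directly by the GCDC structure $L(A)$, so only the two action laws remain to be verified. The unit law $\nu_A \circ \pair{\Id}{0_A \circ\, !} = \Id$ is literally the algebra unit axiom $\nu_A \circ \eta = \Id$, since the monad unit is $\eta = \pair{\Id}{0_A}$. The associativity law is where the work lies and where the extra hypothesis is needed. I would begin from the remaining algebra axiom $\nu_A \circ \ftor{T}\nu_A = \nu_A \circ \mu$, expand $\ftor{T}\nu_A = \pair{\nu_A \circ \pi_1}{\DF{\nu_A}}$, and unfold $\mu = (\Id \times +_A) \circ \pair{\pi_{11}}{\pair{\pi_{21}}{\pi_{12}}}$, so that on points the axiom reads $\nu_A\big(\nu_A(a, da),\, \DF{\nu_A}((a, da),(da', dda))\big) = \nu_A(a,\, da +_A da')$. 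Precomposing with the morphism that inserts $0_A$ into the second-order slot (equivalently, setting $dda = 0_A$) and then applying the hypothesis $\DF{\nu_A} \circ \pair{f}{\pair{g}{0_A}} = g$, i.e.\ $\DF{\nu_A}((a, da),(da', 0_A)) = da'$, collapses the left-hand side to $\nu_A(\nu_A(a, da), da')$. This is exactly the action-associativity equation $\nu_A \circ (\nu_A \times \Id) = \nu_A \circ (\Id \times +_A)$ up to the associator, completing the first stage. Here I rely on the identification $L_0(L_0(A)) = L_0(A)$ used implicitly in the very definition of $\mu$, so that both $\mu$ and the hypothesis typecheck as written.

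For the second stage, I would write $\ftor{T}f = \pair{f \circ \pi_1}{\DF{f}}$ and observe that the algebra morphism equation $f \circ \nu_A = \nu_B \circ \ftor{T}f$ becomes, after this expansion, $f \circ \nu_A = \nu_B \circ \pair{f \circ \pi_1}{\DF{f}}$, which is verbatim the derivative condition for the pair $(f, \DF{f})$ with respect to the change actions $(A, L_0(A), \nu_A, +_A, 0_A)$ and $(B, L_0(B), \nu_B, +_B, 0_B)$. Thus the two conditions are the same equation, and the equivalence follows in both directions at once. That $(f, \DF{f})$ is moreover a bona fide differential map requires $\DF{f}$ to be a \emph{regular} derivative, but this is automatic: it is precisely the consequence of axiom $\ax{2}$ already exploited in the proof of Theorem~\ref{thm:GCDC-change-action-model}.

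I expect the associativity computation of the first stage to be the main obstacle. It is the only place where the hypothesis on $\DF{\nu_A}$ is genuinely used, and its role is exactly to eliminate the second-order change that the monad multiplication introduces through $\DF{\nu_A}$, thereby converting the algebra associativity into the plain monoid-action law; correctly unfolding $\mu$, the second tangent bundle $\ftor{T}\ftor{T}A$, and the intervening associators is the delicate part. Everything else reduces to pattern-matching the $\ftor{T}$-algebra axioms against the change-action axioms.
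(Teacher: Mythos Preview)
Your proposal is correct and follows essentially the same route as the paper. In particular, the paper also derives the action associativity by precomposing the algebra axiom $\nu_A \circ \ftor{T}\nu_A = \nu_A \circ \mu$ with the morphism that sets the second-order change to $0_A$ (it first records the diagrammatic identity $\mu \circ \pair{\pair{\pi_1}{\pi_{12}}}{\pair{\pi_{22}}{0_A}} = \Id \times +_A$, then chases exactly as you do), and handles the second part by identifying $\nu_B \circ \ftor{T}f = f \circ \nu_A$ with the derivative condition, with regularity coming from axiom \ax{2}.
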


\begin{proof}
  For the first part, it suffices to check that $\nu_A$ is a monoid action.
  
  First, note that the monad unit $\epsilon_A$ is precisely the map $\pair{\Id_A}{0_A}$, hence since
  $\nu_A$ is a $\ftor{T}$-algebra morphism we have $\nu_A \circ \pair{\Id_A}{0_A} = \Id_A$.

  For the second part, note that the monoid addition can be written in terms of the monad 
  multiplication as follows:
  \begin{align*}
    \mu \circ \pair{\pair{\pi_1}{\pi_{12}}}{\pair{\pi_{22}}{0_A}}
    &=
      (\Id \times +_A)
      \circ \pair{\pi_{11}}{\pair{\pi_{21}}{\pi_{12}}}
      \circ \pair{\pair{\pi_1}{\pi_{12}}}{\pair{\pi_{22}}{0_A}}\\
    &= (\Id \times +_A)
      \circ \pair{\pi_1}{\pair{\pi_{12}}{\pi_{22}}}\\
    &= (\Id \times +_A)
      \circ \Id\\
    &= (\Id \times +_A)
  \end{align*}
  Then because $\nu_A$ is a $\ftor{T}$-algebra homomorphism, we have:
  \begin{align*}
    \nu_A \circ (\Id \times +_A)
    &= \nu_A \circ \mu \circ \pair{\pair{\pi_1}{\pi_{12}}}{\pair{\pi_{22}}{0_A}}\\
    &= \nu_A \circ \ftor{T}\nu_A \circ \pair{\pair{\pi_1}{\pi_{12}}}{\pair{\pi_{22}}{0_A}}\\
    &= \nu_A 
      \circ \pair{\nu_A \circ \pi_1}{\DF{\nu_A}} 
      \circ \pair{\pair{\pi_1}{\pi_{12}}}{\pair{\pi_{22}}{0_A}}\\
    &= \nu_A \circ 
      \pair{\nu_A \circ \pair{\pi_1}{\pi_{12}}}
           {\DF{\nu_A} \circ \pair{\pair{\pi_1}{\pi_{12}}}{\pair{\pi_{22}}{0_A}}}\\
    &= \nu_A \circ 
      \pair{\nu_A \circ \pair{\pi_1}{\pi_{12}}}
           {\DF{\nu_A} \circ \pair{\pair{\pi_1}{\pi_{12}}}{\pair{\pi_{22}}{0_A}}}\\
    &= \nu_A \circ
      \pair{\nu_A \circ \pair{\pi_1}{\pi_{12}}}
           {\pi_{22}}\\
    &= \nu_A \circ (\nu_A \times \Id) \circ \pair{\pair{\pi_1}{\pi_{12}}}{\pi_{22}}
  \end{align*}

  \map{
    It's extremely unsatisfying that I need to require the condition on $\DF{\nu}$ to get this, but I
    see no other way to prove it. I have some intuitive ideas of why this is.
  }
  
  Now consider a $\cat{C}$-morphism $f : A \ra B$. Its derivative $\DF{f}$ satisfies regularity
  trivially, since in every generalised cartesian differential category derivatives are additive in
  their second argument. Then the property that $f$ is a $\ftor{T}$-algebra morphism between
  $(A, \nu_A)$ and $(B, \nu_B)$ states that $\nu_B \circ \ftor{T}f = f \circ \nu_A$, which
  is equivalent to stating that $(f, \DF{f})$ is a differential morphism between the corresponding
  change actions.
  
\end{proof}

One might be tempted to generalise this result in the ``obvious'' direction and try to construct a change action model directly on the Eilenberg-Moore category $\cat{C}^{\ftor{T}}$. This is, however, not possible:
while an algebra on some object $A$ does define a change action, it does not give any information on 
what should be the change action structure on $\Delta A$.

Instead, consider the obvious projection functor $\pi:\cat{C}^{\ftor{T}} \ra \cat{C}$.
Now let $\sigma : \cat{C} \ra \cat{C}^{\ftor{T}}$ be a section of $\pi$, i.e. $\pi \circ \sigma =
\ftor{Id}$. For every object $A$ of $\cat{C}$, the section $\sigma$ picks a particular 
$\ftor{T}$-algebra $\sigma(A) = (A, \nu_A)$. Similarly, $\sigma$ maps every morphism $f : A \ra B$ in 
$\cat{C}$ onto a $\ftor{T}$-algebra homomorphism.

An immediate corollary of Theorem~\ref{thm:t-algebras-are-change-actions} is that any section of
$\pi$ that maps objects of $\cat{C}$ to ``well-behaved'' $\ftor{T}$-algebras defines a model
structure on $\cat{C}$.

\begin{lem}
  Let $\sigma : \cat{C} \ra \cat{C}^{\ftor{T}}$ be a section of $\pi : \cat{C}^{\ftor{T}} \ra \cat{C}$
  such that for every object $A$ of $\cat{C}$, the corresponding algebra $(A, \nu_A)$ satisfies
  $\DF{\nu_A} \circ \pair{f}{\pair{g}{0_A}} = g$. Then there is a change action model $\alpha : \cat{C}
  \ra \ftor{\CAct}(\cat{C})$ defined by:
  \begin{itemize}
    \item $\alpha(A) = (A, L_0(A), \nu_A, +_A, 0_A)$
    \item $\alpha(f) = \ftor{T}f = \pair{f \circ \pi_1}{\DF{f}}$
  \end{itemize}

\end{lem}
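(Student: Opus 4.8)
The plan is to obtain this lemma as a direct corollary of Theorem~\ref{thm:t-algebras-are-change-actions}, verifying in turn that $\alpha$ is well-defined on objects and on morphisms, that it is a product-preserving functor, and that it is a section of the forgetful functor $\epsilon_{\cat{C}}$ (so that $\epsilon_{\cat{C}} \circ \alpha = \Id_{\cat{C}}$, which is exactly the coalgebra condition for $(\ftor{\CAct}, \epsilon)$). For the object part, I would simply observe that by hypothesis every algebra $\sigma(A) = (A, \nu_A)$ satisfies $\DF{\nu_A} \circ \pair{f}{\pair{g}{0_A}} = g$, so Theorem~\ref{thm:t-algebras-are-change-actions} applies verbatim and shows that $\alpha(A) = (A, L_0(A), \nu_A, +_A, 0_A)$ is a change action in $\cat{C}$.

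For the morphism part, the point is that $\sigma$ is a \emph{functor} and a section of $\pi$: hence every $\cat{C}$-morphism $f : A \to B$ is carried to a $\ftor{T}$-algebra morphism $\sigma(f) : (A, \nu_A) \to (B, \nu_B)$ whose underlying $\cat{C}$-map is $f$, since $\pi \circ \sigma = \ftor{Id}$. Thus $f$ itself satisfies the algebra-homomorphism equation $f \circ \nu_A = \nu_B \circ \ftor{T}f$, and the second half of Theorem~\ref{thm:t-algebras-are-change-actions} immediately gives that $\alpha(f) = \ftor{T}f = \pair{f \circ \pi_1}{\DF{f}}$ is a differential map $\alpha(A) \to \alpha(B)$. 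Functoriality of $\alpha$ is then inherited from functoriality of the tangent bundle functor $\ftor{T}$: composition of differential maps in $\ftor{\CAct}(\cat{C})$ is precisely $\ftor{T}(g \circ f) = \ftor{T}g \circ \ftor{T}f$, which is the chain rule \ax{5}, and $\ftor{T}\Id$ is the identity. This is the same computation as in the flat model (Theorem~\ref{thm:GCDC-change-action-model}), and it depends only on the underlying maps and derivatives, not on the chosen action.

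The step needing genuine care, and the one I expect to be the main obstacle, is product preservation, because — unlike the flat model, where the action $\pi_1$ visibly distributes over products — the action $\nu_{A \times B}$ selected by $\sigma$ is a priori unrelated to $\nu_A$ and $\nu_B$. The key observation is that functoriality of $\sigma$ forces it to respect products: applying $\sigma$ to the projections $\pi_1, \pi_2$ of $A \times B$ yields algebra morphisms with underlying maps $\pi_1, \pi_2$, so both $\pi_1 \circ \nu_{A \times B} = \nu_A \circ \ftor{T}\pi_1$ and $\pi_2 \circ \nu_{A \times B} = \nu_B \circ \ftor{T}\pi_2$ hold. Using $\DF{\pi_1} = \pi_1 \circ \pi_2$ and $\DF{\pi_2} = \pi_2 \circ \pi_2$ (axiom \ax{3}), these rewrite to $\pi_1 \circ \nu_{A \times B} = \nu_A \circ (\pi_1 \times \pi_1)$ and $\pi_2 \circ \nu_{A \times B} = \nu_B \circ (\pi_2 \times \pi_2)$, which jointly pin down $\nu_{A \times B} = \pair{\nu_A \circ (\pi_1 \times \pi_1)}{\nu_B \circ (\pi_2 \times \pi_2)}$. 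This is exactly the action $\oplus_{A \times B}$ of the product change action in Theorem~\ref{thm:cact-products}, so together with $L_0(A \times B) = L_0(A) \times L_0(B)$ and axiom \ax{4} for the monoid structure we get $\alpha(A \times B) = \alpha(A) \times \alpha(B)$ on the nose, with canonical comparison $\pair{\alpha(\pi_1)}{\alpha(\pi_2)}$ an isomorphism.

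Finally the coalgebra condition is immediate: $\epsilon_{\cat{C}}$ forgets a change action to its carrier and a differential map to its underlying $\cat{C}$-morphism, so $\epsilon_{\cat{C}}(\alpha(A)) = A$ and $\epsilon_{\cat{C}}(\alpha(f)) = f$, whence $\epsilon_{\cat{C}} \circ \alpha = \Id_{\cat{C}}$. Thus $\alpha$ is a coalgebra of the copointed endofunctor $(\ftor{\CAct}, \epsilon)$, i.e. a change action model, as required.
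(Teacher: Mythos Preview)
Your proposal is correct and follows the same approach the paper indicates: the lemma is stated there as an ``immediate corollary of Theorem~\ref{thm:t-algebras-are-change-actions}'' with no further proof, and you have simply spelled out the details. Your treatment of product preservation---using functoriality of $\sigma$ on the projections to force $\nu_{A\times B}$ into the product form of Theorem~\ref{thm:cact-products}---is a point the paper leaves entirely implicit, so your argument is strictly more complete than what appears there.
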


The flat model described in Section~\ref{sec:flat-model} is an immediate corollary of this, as it
is the change action model obtained from picking the section that maps every object $A$ to the
$\ftor{T}$-algebra $\pi_2 : A \times L_0(A) \ra A$.

\map{Moved these so they follow the structure of the body}
\subsection*{Groups, calculus of differences, and Boolean differential calculus}

\begin{proof-for-lemma}{lem:boolean-derivative}
Note first that in any Boolean algebra $\BB$, we have $\neg u = u \nlra \top$. 
Moreover
\begin{gather*}
  (u_1, \ldots, \neg u_i, \ldots, u_n) = (u_1, \ldots, u_n) \oplus (\bot, \ldots, \top, \ldots, \bot)
\end{gather*}
Furthermore:
\[
\begin{array}{ll}
  &\displaystyle f(u_1, \ldots, u_n) \oplus \frac{\d f}{\d x_i}(u_1, \ldots, u_n)\\
  =& f(u_1, \ldots, u_n) \oplus (f(u_1, \ldots, u_n) \nlra f(u_1, \ldots, \neg u_i, \ldots, u_n))\\
  =& f(u_1, \ldots, u_n) \nlra (f(u_1, \ldots, u_n) \nlra f(u_1, \ldots, \neg u_i, \ldots, u_n))\\
  =& (f(u_1, \ldots, u_n) \nlra f(u_1, \ldots, u_n)) \nlra f(u_1, \ldots, \neg u_i, \ldots, u_n)\\
  =& \bot \nlra f(u_1, \ldots, \neg u_i, \ldots, u_n)\\
  =& f(u_1, \ldots, \neg u_i, \ldots, u_n)\\
  =& f((u_1, \ldots, u_n) \nlra \top_i)\\
  =& f((u_1, \ldots, u_n) \oplus \top_i)
\end{array}
\]
Thus, since derivatives in $\CGrpS$ are unique, the Boolean derivative 
\[
\displaystyle \frac{\d f}{\d x_i}(u_1, \ldots, u_n)
\] 
is precisely the
derivative $\d f((u_1, \ldots, u_n), \top_i)$.

\end{proof-for-lemma}

\subsection*{Polynomials over commutative Kleene algebras}

\begin{proof-for-lemma}{kleene-change-actions-model}

We consider the essential case of $m = n = 1$; the proof of the lemma is then a straightforward generalisation.

We shall make use of the following properties of \emph{commutative} Kleene algebras.
\begin{enumerate}
\item $(a_1 + \cdots + a_m)^n = \sum \{a_1^{i_1} \cdots a_m^{i_m} \mid i_1 + \cdots + i_m = n; i_1, \cdots, i_m \geq 0\}$.

Since $(a + b)^\star = a^\star \, b^\star$, we have
\[
\big((a_1 + \cdots + a_m)^n\big)^\star = \prod \{(a_1^{i_1} \cdots a_m^{i_m})^\star \mid i_1 + \cdots + i_m = n; i_1, \cdots, i_m \geq 0\}.
\] 
For example $((a + b + c)^2)^\star = (a \, a)^\star \, (a \, b)^\star \, (a \, c)^\star \, (b \, b)^\star \, (b \, c)^\star \, (c \, c)^\star$.

\item Pilling's Normal Form Theorem \cite{Pilling73,DBLP:conf/lics/HopkinsK99}: every (regular) expression is equivalent to a sum $y_1 + \cdots + y_n$ where each $y_i$ is a product of atomic symbols and expressions of the form $(a_1 \cdots a_k)^\star$, where the $a_i$ are atomic symbols. 
For example $(((a \, b)^\star c)^\star + d)^\star = d^\star + (a \, b)^\star c^\star c \, d^\star$.
\end{enumerate}

Take $p(x) \in \KK[x]$, viewed as a function from change action $(\KK, \KK, +, +, 0)$ to itself.
For $a, b \in \KK$, we have 
\[
\d p(a, b) \defeq \der{p}{x}(a + b) \cdot b.
\]
That this defines a derivative of $p(x)$ is an immediate consequence of Theorem~\ref{thm:taylor}.

We need to prove that the derivative is regular.
Trivially $\d p (a, 0) = 0$.
It remains to prove: for $u, a, b \in \KK$
\begin{equation}
\der{p}{x}(u + a + b) \cdot (a + b) =
\der{p}{x}(u + a) \cdot a + \der{p}{x}(u + a + b) \cdot b
\label{eq:kleene-regular}
\end{equation}
which we argue by structural induction,
presenting the cases of $p = q^\star$ and $p = q \, r$ explicitly.

Let $p = q^\star$. 
Thanks to Pilling's Normal Form Theorem, WLOG we assume $q = x^{n+1} \, c$.
Now $\displaystyle \der{x^{n+1} \, c}{x}(x) = x^n \, c$.
Then $\displaystyle \der{p}{x}(x) = q^\star (x) \, \der{q}{x}(x) = (x^{n+1} \, c)^\star (x^n \, c)$.
Clearly $\mathrm{RHS}(\ref{eq:kleene-regular}) \leq \mathrm{LHS}(\ref{eq:kleene-regular})$.
For the opposite containment, it suffices to show 
\[
\der{p}{x}(u + a + b) \cdot a \ \leq \ \der{p}{x}(u + a) \cdot a + \der{p}{x}(u + a + b) \cdot b
\]
I.e.
\begin{equation}
(\theta^{n+1} \, c)^\star \, (\theta^n \, c) \, a 
\ \leq \
((u+a)^{n+1} \, c)^\star \, ((u+a)^n \, c) \, a + (\theta^{n+1} \, c)^\star \, (\theta^n \, c) \, b
\label{eq:kleene-regular-2}
\end{equation} 
using the shorthand $\theta = u + a + b$.

A typical element that matches LHS(\ref{eq:kleene-regular-2}) has shape 
\[
\Xi \defeq (u^{i'} \, a^{j'} \, b^{k'} \, c)^l \, (u^i \, a^j \, b^k \, c) \, a
\] 
satisfying 
\[
l \geq 0, \quad i' + j' + k' = n+1, \quad i + j +k = n.
\]
It suffices to consider two cases: $l = 0$ and $l = 1$, for if $l > 1$ and $(u^{i'} \, a^{j'} \, b^{k'} \, c) \, (u^i \, a^j \, b^k \, c) \, a$ matches RHS(\ref{eq:kleene-regular-2}) then so does $(u^{i'} \, a^{j'} \, b^{k'} \, c)^l \, (u^i \, a^j \, b^k \, c) \, a$. 
\begin{itemize}
\item Now suppose $l = 0$. 
If $k = 0$ then $\Xi$ matches the first summand of RHS(\ref{eq:kleene-regular-2});
otherwise note that $\Xi = (u^i \, a^{j+1} \, b^{k-1} \, c) \, b$ matches the second summand of RHS(\ref{eq:kleene-regular-2}).
\item Next suppose $l = 1$.
If $k = k' = 0$ then $\Xi$ matches the first summand of RHS(\ref{eq:kleene-regular-2});
otherwise suppose $k' > 0$ then $\Xi = (u^{i'} \, a^{j'+1} \, b^{k'-1} \, c) \, (u^i \, a^j \, b^k \, c) \, b$ matches the second summand of RHS(\ref{eq:kleene-regular-2}).
\end{itemize}

Let $p (x) = q (x) \, r (x)$.
Applying the product rule of partial derivatives, equation (\ref{eq:kleene-regular}) is equivalent to $L = R$ where
\[
\begin{array}{rll}
L &\defeq & \displaystyle
\left[
r(\theta) \, \der{q}{x} (\theta) + 
q(\theta) \, \der{r}{x} (\theta)
\right] \cdot (a + b)
\\
R & \defeq & \displaystyle
\left[
r(u + a) \, \der{q}{x} (u + a) + 
q(u + a) \, \der{r}{x} (u + a)
\right] \cdot a \\
& & \displaystyle + \ \left[
r(\theta) \, \der{q}{x} (\theta) + 
q(\theta) \, \der{r}{x} (\theta)
\right] \cdot b
\end{array}
\]
using the shorthand $\theta = u + a + b$ as before.
Similar to the preceding case, clearly $R \leq L$.
To show $L \leq R$, it suffices to show:
\begin{align*}
r(\theta) \, \der{q}{x} (\theta) \, a &\leq R\\ 
q(\theta) \, \der{r}{x} (\theta) \, a &\leq R
\end{align*}
We consider the first; the same reasoning applies to the second.
As before, thanks to Pilling's Normal Form Theorem, we may assume that $r(x) = (x^{m+1} \, c)^\star$ and $q(x) = (x^{n+1} \, d)^\star$.
Then 
$\displaystyle r(\theta) \, \der{q}{x} (\theta) \, a = (\theta^{m+1} \, c)^\star \, (\theta^{n+1} \, d)^\star \, (\theta^{n} \, d) \, a$.
By considering a typical element $\Xi$ that matches the preceding expression, 
and using the same reasoning as the preceding case, we can then show that $\Xi$ matches $R$, as desired. 

\end{proof-for-lemma}

\section{Supplementary materials for Section~\ref{sec:omega-change-actions}}

\noindent\emph{Notation}. Let $\alpha$ be an $\omega$-sequence. 
We use shorthand $(\alpha)_j = \pr_j \alpha$.

\begin{proof-for-prop}{id-law}
  To see that $\Id \circ \seq{f_i} = \seq{f_i}$, 
  we show
  \[
    (\Id \circ \seq{f_i})_{n}
    =
    \cpow{n} \circ \pr_0 \A{D}^n \seq{f_i}
    =
    f_n = (\seq{f_i})_n
  \]
  by a straightforward induction on $n$.

  For the other direction of the identity law, $\seq{g_i} \circ \Id = \seq{g_i}$, 
  it suffices to prove: for all $n \geq 0$
  \begin{equation}
    \pr_0 \A{D}^n \Id \ = \ \Id : \pr_{n} \A{D} \changeseq A \to \pr_{n} \A{D} \changeseq A
    \label{eq:id-law}
  \end{equation}
  For instance we have
  \begin{align*}
    \pr_0 \A{D} \Id&= \pair{\pi_1}{\pi_2} : \pr_1 \A{D} \changeseq A \to  \pr_1 \A{D} \changeseq A\\
    \pr_0 \A{D}^2 \Id &= \pair{\pair{\pi_1}{\pi_2} \circ \pi_1}{\pair{\pi_2 \circ \epow{1}}{\cpow{2}}}
                           : \pr_2 \A{D} \changeseq A \to  \pr_2 \A{D} \changeseq A
    \\
    \pr_0 \A{D}^3 \Id
                      &= 
                        \pair{
                        \pi_1
                        }{
                        \pair{
                        \pair{\pi_2 \circ \epow{1}}{\cpow{2}} \circ \epow{1}
                        }{
                        \pair{\cpow{2} \circ \epow{2}}{\cpow{3}}
                        }  
                        }  :
                        \pr_3 \A{D} \changeseq A \to  \pr_3 \A{D} \changeseq A
  \end{align*}

  To establish (\ref{eq:id-law}), we need to prove a stronger result: 
  \begin{lem}
    For all $n, j \geq 0$,
    \(
    \pr_j \A{D}^n \Id = \cpow{j} : \pr_{n+j} \A{D} \changeseq A \to \pr_{n} \A{D} \Pi^j \changeseq A.
    \)
  \end{lem}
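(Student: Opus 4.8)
The plan is to prove this stronger statement by induction on $n$, with $j$ universally quantified, after isolating one key auxiliary identity relating $\cpow{j}$ and $\epow{j}$. Recall that for any pre-$\omega$-differential map $\seq{g_i}$ one has $\pr_j \A{D}\seq{g_i} = \pair{g_j \circ \epow{j}}{g_{j+1}}$, that $\pr_j \Id = \cpow{j}$, and that by definition $\cpow{j+1} = \pi_2 \circ \cpow{j}$ and $\epow{j+1} = \epow{j}\times\epow{j}$. The base case $n=0$ is immediate: $\pr_j \A{D}^0\Id = \pr_j \Id = \cpow{j}$. For the inductive step I would write $\seq{h_i} = \A{D}^n\Id$ and feed the induction hypothesis $h_j = \pr_j\A{D}^n\Id = \cpow{j}$ (for every $j$) into the formula for the derivative sequence, obtaining
\[
  \pr_j \A{D}^{n+1}\Id = \pr_j \A{D}\seq{h_i} = \pair{\cpow{j}\circ\epow{j}}{\cpow{j+1}}.
\]
Since the target codomain $\pr_{n+1}\A{D}\Pi^j\changeseq A$ is the product $\pr_n\A{D}\Pi^j\changeseq A \times \pr_n\A{D}\Pi^{j+1}\changeseq A$, and since $\pi_2 \circ \cpow{j} = \cpow{j+1}$ holds by definition, the universal property of that product reduces the desired equation $\pair{\cpow{j}\circ\epow{j}}{\cpow{j+1}} = \cpow{j}$ to the single component identity $\pi_1 \circ \cpow{j} = \cpow{j}\circ\epow{j}$.

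The heart of the proof is therefore this auxiliary identity $\cpow{j}\circ\epow{j} = \pi_1\circ\cpow{j}$, which I would state and prove uniformly for every $\omega$-sequence, so that its own induction may recurse on shifted sequences. First I would record the easy fact $\cpow{j+1} = \cpow{j}\circ\pi_2$ (a one-line induction on $j$ from $\cpow{j+1}=\pi_2\circ\cpow{j}$). Then I would prove the auxiliary identity by induction on $j$: the base $j=0$ is just $\Id\circ\pi_1 = \pi_1\circ\Id$, and the step is the chain
\[
  \cpow{j+1}\circ\epow{j+1} = \cpow{j}\circ\pi_2\circ(\epow{j}\times\epow{j}) = \cpow{j}\circ\epow{j}\circ\pi_2 = \pi_1\circ\cpow{j}\circ\pi_2 = \pi_1\circ\cpow{j+1},
\]
which uses $\cpow{j+1}=\cpow{j}\circ\pi_2$, the bifunctoriality law $\pi_2\circ(\epow{j}\times\epow{j}) = \epow{j}\circ\pi_2$, and the induction hypothesis applied to the shifted sequence $\Pi\seq{S_i}$ (the second factor of $\epow{j}\times\epow{j}$ lives over $\Pi\seq{S_i}$, which is exactly why the identity must be established for all sequences simultaneously).

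The genuine obstacle here is type bookkeeping rather than any deep calculation. I must make sure that the occurrences of $\epow{j}$, $\cpow{j}$, $\pi_1$, $\pi_2$ appearing inside $\pr_j\A{D}\seq{h_i}$ really are the morphisms named in the auxiliary identity at the appropriate sequence, namely the domain sequence of $\A{D}^n\Id$. Concretely this rests on verifying, for the ``tangent'' operation $E\seq{B_i} \defeq \seq{B_i}\times\Pi\seq{B_i}$ on sequences, the two commutations $\pr_k\A{D}(E\seq{B_i}) = \pr_{k+1}\A{D}\seq{B_i}$ and $\Pi E = E\Pi$, each by a short induction. These imply that the domain sequence of $\A{D}^n\Id$ satisfies $\pr_k\A{D}(\mathrm{dom}) = \pr_{n+k}\A{D}\changeseq A$ and, crucially, $\pr_0\A{D}\Pi^j(\mathrm{dom}) = \pr_n\A{D}\Pi^j\changeseq A$, so that the collapsed morphism $\cpow{j}$ produced by the reduction genuinely has the source $\pr_{n+1+j}\A{D}\changeseq A$ and target $\pr_{n+1}\A{D}\Pi^j\changeseq A$ demanded by the lemma. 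Once this indexing is pinned down, the universal-property reduction together with the auxiliary identity yields $\pr_j\A{D}^{n+1}\Id = \cpow{j}$, closing the induction on $n$.
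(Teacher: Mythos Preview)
Your proposal is correct and follows essentially the same route as the paper: induction on $n$ (with $j$ universally quantified) reducing the inductive step to the auxiliary identity $\cpow{j}\circ\epow{j} = \pi_1\circ\cpow{j}$, which you then establish by a separate induction on $j$ exactly as the paper does. Your additional type-bookkeeping discussion (the commutations for the tangent operation $E$) is more explicit than the paper, which glosses over these points, but the core argument is the same.
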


  \begin{proof}
    We use lexicographical induction on $(n, j)$.
    The base case is straightforward.
    Our induction hypothesis is
    \[
      \forall j \geq 0 \, . \, 
      \pr_j\A{D}^n\Id \ = \ \cpow{j} : \pr_{n+j} \A{D} \changeseq A \to \pr_{n} \A{D} \Pi^j \changeseq A
    \]
    Then
    \[\begin{array}{ll}
        & \pr_{j} \A{D}^{n+1}\Id \\
        = & \pair{
            \pr_j \A{D}^n \Id \circ \epow{j}
            }{
            \pr_{j+1} \A{D}^n \Id
            } \\
        = & \pair{
            \cpow{j} \circ \epow{j}
            }{
            \cpow{j+1}
            } \ : \ \pr_{n+1 + j}\A{D}\seq{A_i} \ra 
            \pr_{n} \A{D} \Pi^{j} \changeseq A
            \times 
            \pr_{n} \A{D} \Pi^{j+1} \changeseq A\\
        = & \cpow{j} \ : \ \pr_{n+1 + j}\A{D}\seq{A_i} \ra 
            \pr_{n+1} \A{D} \Pi^{j} \changeseq A
      \end{array}
    \]
    as desired.
    The third equality uses the fact:
    for $j \geq 0$, $\cpow{j} \circ \epow{j} = \pi_1 \circ \cpow{j}$, 
    which is proved by a straightforward induction on $j$.
    The base case is trivial.
    For the inductive case: 
    \[
    \cpow{j+1} \circ \epow{j+1} = \cpow{j+1} \circ \pair{\epow{j}}{\epow{j}} = \cpow{j} \circ \epow{j} \circ \pi_2 = \pi_1 \circ \cpow{j+1}.
    \]
    
  \end{proof}
\end{proof-for-prop}

\begin{proof-for-prop}{comp-assoc}
    For convenience and to save space, we write $\pr_n \seq{f_i}$ as $f_n$.
    Next we first prove a couple of useful technical lemmas.

    First observe that for each $n \geq 0$, and for each $0 \leq j \leq n$, there exists a morphism:
\[
  \epow{j} : \pr_{n+1} \A{D} \changeseq A \to \pr_n \A{D} \changeseq A
\]
This leads to the following.
\begin{lem}
\label{lem:assoc-nj1}
For all $n \geq 0$ and $1 \leq j \leq n$,
\[
  (\A{D}^{n} \seq{f_i})_0 \circ \epow{j} = 
  \epow{j} \circ (\A{D}^{n+1} \seq{f_i})_0 \ : \ \pr_{n+1} \A{D} \changeseq A \to \pr_{n} \A{D} \changeseq B
\]
\end{lem}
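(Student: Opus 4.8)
The plan is to establish a single stronger identity by induction and recover the lemma as a special case. Writing $\tmap{f}{n}{j} \defeq \pr_j \A{D}^n \seq{f_i}$, so that applying the definition of $\A{D}$ termwise gives the recursion $\tmap{f}{m+1}{j} = \pair{\tmap{f}{m}{j} \circ \epow{j}}{\tmap{f}{m}{j+1}}$, I would prove
\[
  \tmap{f}{n}{k} \circ \epow{j+k} \ = \ \epow{j} \circ \tmap{f}{n+1}{k}
  \qquad (k \geq 0,\ 0 \leq j \leq n),
\]
of which the lemma is the instance $k = 0$ (with $1 \leq j \leq n$). The generalisation over the shift $k$ is forced: unwinding the lemma in isolation produces a ``second component'' that is itself a shifted instance, so the induction cannot close without carrying $k$ along. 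The asymmetric indices $j+k$ versus $j$ are exactly what keeps the two sides type-compatible, since $\tmap{f}{n}{k} : \pr_{n+k}\A{D}\changeseq{A} \to \pr_n \A{D} \Pi^k \changeseq{B}$ and each $\epow{\cdot}$ drops the ambient derivative space by one level.

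The induction is on $n$ (equivalently on $j$). The base case $j = 0$ is immediate: it asserts $\tmap{f}{n}{k} \circ \epow{k} = \epow{0} \circ \tmap{f}{n+1}{k} = \pi_1 \circ \tmap{f}{n+1}{k}$, which is just the first projection of the recursion defining $\tmap{f}{n+1}{k}$. For the step, with $j \geq 1$, I would expand both sides using the recursion together with $\epow{j} = \epow{j-1} \times \epow{j-1}$, which yields $\epow{j} \circ \pair{u}{v} = \pair{\epow{j-1}\circ u}{\epow{j-1}\circ v}$; this reduces the goal to two componentwise equations. The second component is precisely the induction hypothesis at parameters $(n-1,\,k+1,\,j-1)$. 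The first component, after applying the induction hypothesis at $(n-1,\,k,\,j-1)$ to rewrite $\epow{j-1}\circ\tmap{f}{n}{k}$ as $\tmap{f}{n-1}{k}\circ\epow{j-1+k}$, reduces to the purely structural commutation $\epow{k} \circ \epow{j+k} = \epow{j+k-1} \circ \epow{k}$.

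That commutation is the last ingredient, and I would isolate it as an auxiliary identity: $\epow{a} \circ \epow{b} = \epow{b-1} \circ \epow{a}$ whenever $a < b$, proved by an easy induction on $a$ (the base case $\pi_1 \circ \epow{b} = \epow{b-1} \circ \pi_1$ is immediate from $\epow{b} = \epow{b-1}\times\epow{b-1}$, and the step uses $\epow{a+1} = \epow{a}\times\epow{a}$ with the interchange law for products). I expect the genuine difficulty to lie not in any single calculation but in spotting the correct strengthening: the ``obvious'' induction that fixes $k=0$ and pushes on $j$ fails because the recursion mixes the levels $k$ and $k+1$ and shifts the $\epow{\cdot}$-indices, so one must first discover the invariant linking the two projection indices (here, that the $\seq{A_i}$-side index always exceeds the $\seq{B_i}$-side index by exactly $k$) before the bookkeeping closes.
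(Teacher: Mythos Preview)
Your proposal is correct and follows essentially the same route as the paper: both prove the strengthened claim $\tmap{f}{n}{k} \circ \epow{k+j} = \epow{j} \circ \tmap{f}{n+1}{k}$ by induction, unwinding the recursion $\tmap{f}{m+1}{k} = \pair{\tmap{f}{m}{k}\circ\epow{k}}{\tmap{f}{m}{k+1}}$ componentwise and invoking the auxiliary commutation $\epow{a}\circ\epow{b} = \epow{b-1}\circ\epow{a}$ for $a<b$. The only cosmetic difference is the direction of the step (you go from $(n,k,j)$ down to $(n-1,k,j-1)$ and $(n-1,k+1,j-1)$, whereas the paper goes up), and the paper organises it as a lexicographic induction on $(n,j,i)$; the content is identical.
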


\noindent \emph{Notation}: We use the shorthand $\tmap{f}{n}{i} = \pr_i(\A{D}^n \seq{f_i})$. 

\begin{proof}
A stronger induction principle is needed.
We claim: for all $n, i \geq 0$ and $j \leq n$
\[\tmap{f}{n}{i} \circ \epow{i+j} =
\epow{j} \circ \tmap{f}{n+1}{i}\]
We prove by lexicographical induction on $(n, j, i)$.
The base case, which is 
\(\forall i \geq 0 \, . \,
f_i \circ \epow{i} = \pi_1 \circ \tmap{f}{1}{i}
\),
holds trivially.
For the inductive case:
\[
\begin{array}{ll}
& \tmap{f}{n+1}{i} \circ \epow{i+j+1}\\
= &
\pair{\tmap{f}{n}{i} \circ \epow{i} \circ \epow{i+j+1}}{\tmap{f}{n}{i+1} \circ \epow{i+j+1}}
\\
= &
\pair{\tmap{f}{n}{i} \circ \epow{i+j} \circ \epow{i}}{\tmap{f}{n}{i+1} \circ \epow{i+j+1}}
\\
= &
\pair{\epow{j} \circ \tmap{f}{n}{i} \circ \epow{i} \circ \epow{i}}{\epow{j} \circ \tmap{f}{n}{i+1}}
\\
= & 
\epow{j} \times \epow{j} \circ \pair{\tmap{f}{n+1}{i} \circ \epow{i}}{\tmap{f}{n+1}{i+1}}
\\
= &
\epow{j+1} \circ \tmap{f}{n+2}{i}
\end{array}
\]
The second equality appeals to the fact:
for $i, j \geq 0$, $n > i+j+1$
\[
\epow{i} \circ \epow{i+j+1} = \epow{i+j} \circ \epow{i} : \pr_{n} \A{D} \changeseq A \to \pr_{n-2} \A{D} \changeseq A
\]
which is easily proved by induction.

\end{proof}

\begin{lem}
\label{lem:assoc-nj2}
For all $n\geq 0$ and $0 \leq j \leq n$, 
\[
\big(\A{D}^n (\seq{g_i} \circ \seq{f_i})\big)_j = (\A{D}^n \seq{g_i})_j \circ (\A{D}^{n+j} \seq{f_i})_0 \ : \ \pr_{n+j} \A{D} \changeseq A \to \pr_{n} \A{D} \Pi^j \changeseq C.
\]
\end{lem}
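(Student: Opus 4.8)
The plan is to prove Lemma~\ref{lem:assoc-nj2} by induction on $n$, but first strengthening it so that $j$ ranges over \emph{all} natural numbers rather than just $0 \le j \le n$. This strengthening is essential: unfolding the derivative sequence at step $n+1$ produces a component at index $j+1$, so a hypothesis restricted to $j \le n$ would break at the boundary $j = n$, whereas allowing all $j$ closes the induction cleanly; the lemma is then recovered as the case $0 \le j \le n$. Writing $\seq{h_i} \defeq \seq{g_i} \circ \seq{f_i}$ and reusing the shorthand $\tmap{f}{n}{i} = \pr_i(\A{D}^n \seq{f_i})$ from the proof of Lemma~\ref{lem:assoc-nj1}, the statement I would establish is: for all $n, j \ge 0$, $\tmap{h}{n}{j} = \tmap{g}{n}{j} \circ \tmap{f}{n+j}{0}$.

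For the base case $n = 0$ I would simply unfold Def.~\ref{def:3-actions-omega-seq}: $\tmap{h}{0}{j} = h_j = g_j \circ \pr_0(\A{D}^j \seq{f_i}) = g_j \circ \tmap{f}{j}{0}$, which is exactly $\tmap{g}{0}{j} \circ \tmap{f}{j}{0}$. For the inductive step I would expand the derivative sequence on both sides, using $\tmap{h}{n+1}{j} = \pair{\tmap{h}{n}{j} \circ \epow{j}}{\tmap{h}{n}{j+1}}$ and $\tmap{g}{n+1}{j} = \pair{\tmap{g}{n}{j} \circ \epow{j}}{\tmap{g}{n}{j+1}}$, and then match the two components of the resulting pairs. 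The second component is immediate: the induction hypothesis at index $j+1$ gives $\tmap{h}{n}{j+1} = \tmap{g}{n}{j+1} \circ \tmap{f}{n+j+1}{0}$.

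The first component is where the real work sits. The induction hypothesis at index $j$ yields $\tmap{h}{n}{j} \circ \epow{j} = \tmap{g}{n}{j} \circ \tmap{f}{n+j}{0} \circ \epow{j}$, and I would then pull $\epow{j}$ to the left of $\tmap{f}{n+j}{0}$ using the commutation $\tmap{f}{n+j}{0} \circ \epow{j} = \epow{j} \circ \tmap{f}{n+j+1}{0}$. For $j \ge 1$ this is exactly Lemma~\ref{lem:assoc-nj1} (reading its statement with $n+j$ in place of its $n$); the case $j = 0$ is not literally covered by that statement but follows directly from the definition of the derivative sequence, since $\tmap{f}{n+1}{0} = \pair{\tmap{f}{n}{0} \circ \epow{0}}{\tmap{f}{n}{1}}$ gives $\epow{0} \circ \tmap{f}{n+1}{0} = \pi_1 \circ \tmap{f}{n+1}{0} = \tmap{f}{n}{0} \circ \epow{0}$. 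Once the commutation is applied, both components carry the common right factor $\tmap{f}{n+j+1}{0}$, so by $\pair{a \circ m}{b \circ m} = \pair{a}{b} \circ m$ they reassemble as $\pair{\tmap{g}{n}{j} \circ \epow{j}}{\tmap{g}{n}{j+1}} \circ \tmap{f}{n+j+1}{0} = \tmap{g}{n+1}{j} \circ \tmap{f}{n+1+j}{0}$, completing the induction.

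I expect the main obstacle to be getting the $\epow{j}$-commutation bookkeeping exactly right — in particular noticing that the $j = 0$ instance falls outside the hypotheses of Lemma~\ref{lem:assoc-nj1} and must be dispatched separately (or by appealing to the slightly stronger claim actually proved inside Lemma~\ref{lem:assoc-nj1}, which is valid for $0 \le j \le n$). With that identity in hand the remainder is a routine matching of the two components of a pairing, and the generalisation to all $j$ is precisely what removes the boundary concern that would otherwise obstruct the induction.
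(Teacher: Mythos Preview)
Your proposal is correct and follows essentially the same route as the paper's proof: both argue by induction on $n$ (the paper phrases it as lexicographical induction on $(n,j)$, but only ever uses hypotheses at level $n$, so this amounts to the same thing), both unfold the pairing in $\tmap{(g\circ f)}{n+1}{j}$, apply the induction hypothesis componentwise, and invoke the $\epow{j}$-commutation of Lemma~\ref{lem:assoc-nj1} on the first component.

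Your version is in fact slightly more careful than the paper's in two respects. First, your explicit strengthening to all $j\ge 0$ is exactly what is needed: the paper states the lemma only for $0\le j\le n$ but then uses the induction hypothesis at $(n,j+1)$ when proving the case $(n+1,j)$, which at the boundary $j=n$ or $j=n+1$ falls outside that range; your formulation removes this wrinkle. Second, you correctly flag that the statement of Lemma~\ref{lem:assoc-nj1} only covers $j\ge 1$, and you dispatch $j=0$ directly from the definition of $\tmap{f}{n+1}{0}$; the paper silently relies on the stronger claim established inside the proof of Lemma~\ref{lem:assoc-nj1} (which does hold for $j=0$), so either route works.
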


\begin{proof}
We prove by lexicographical induction on $(n, j)$.
The base case is straightforward. 
For the inductive case:
\[
\begin{array}{ll}
& \big(\A{D}^{n+1} (\seq{g_i} \circ \seq{f_i})\big)_{j}
\\
= & 
\pair{
  \big(\A{D}^{n} (\seq{g_i} \circ \seq{f_i})\big)_j
  \circ \epow{j}
}{
  \big(\A{D}^n (\seq{g_i} \circ \seq{f_i})\big)_{j+1}
}
\\
= & 
\pair{
  (\A{D}^{n} \seq{g_i})_j \circ (\A{D}^{n+j}\seq{f_i})_0
  \circ \epow{j}
}{
  \big(\A{D}^n (\seq{g_i} \circ \seq{f_i})\big)_{j+1}
}
\\
= & 
\pair{
  (\A{D}^{n} \seq{g_i})_j \circ \epow{j}\circ (\A{D}^{n+j+1}\seq{f_i})_0
}{
  (\A{D}^n \seq{g_i})_{j+1} \circ (\A{D}^{n+j+1}\seq{f_i})_{0}
}
\\
= & 
\pair{
  (\A{D}^{n} \seq{g_i})_j \circ \epow{j}}{
  (\A{D}^n \seq{g_i})_{j+1}
}
\circ (\A{D}^{n+j+1}\seq{f_i})_{0}
\\
= & 
(\A{D}^{n+1} \seq{g_i})_j
\circ (\A{D}^{n+j+1}\seq{f_i})_{0}
\end{array}
\]
The second equality appeals to the induction hypothesis.
The third equality uses Lemma~\ref{lem:assoc-nj1} for the first component, and the induction hypothesis for the second.

\end{proof}

We are now ready to prove the  associativity of composition, which boils down to: for all $n \geq 0$, 
\[
\big(\A{D}^n (\seq{g_i} \circ \seq{f_i})\big)_0 \ = \ (\A{D}^n \seq{g_i})_0 \circ (\A{D}^n \seq{f_i})_0.
\]
which is a special case of Lemma~\ref{lem:assoc-nj2}.
\end{proof-for-prop}

\begin{lem}
  Whenever $\ds{f} : \ds{A} \ra \ds{B}$ is an $\omega$-differential map, then so is $\A{D}^n \seq{f_i}$ for all $n \geq 0$.
\end{lem}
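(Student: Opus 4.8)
The plan is to prove, by induction on $n$, the following reformulation: writing $\tmap{f}{n}{j} \defeq \pr_j \A{D}^n\seq{f_i}$, for every $n$ and every $j$ the component $\tmap{f}{n}{j+1}$ is a regular derivative of $\tmap{f}{n}{j}$. Since the domain of $\tmap{f}{n}{j}$ is $\pr_{n+j}\A{D}\changeseq{A}$ and $\A{D}^n\seq{f_i}$ runs between the iterated derivative $\omega$-change actions, which I denote $\A{D}^n\ds{A}$ and $\A{D}^n\ds{B}$, this reformulation is exactly the assertion that $\A{D}^n\seq{f_i}$ is $\omega$-differential. The base case $n=0$ is the hypothesis that $\seq{f_i}$ is $\omega$-differential, so the whole content lies in showing that a single application of $\A{D}$ preserves the property.

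Before the induction I would pin down the $\omega$-change-action structure carried by the derivative spaces. The crucial point --- and the place where the obvious guess fails --- is that the derivative $\omega$-change action $\A{D}\ds{A}$ on the sequence $\changeseq{A}\times\Pi\changeseq{A}$ must be equipped with the \emph{tangent-bundle} action $(a_0,a_1)\oplus(a_1',a_2) = (a_0, a_1\oplus a_2)$, i.e.\ $\oplus = \pair{\pi_{11}}{\oplus\circ\pair{\pi_{21}}{\pi_{22}}}$, rather than with the fully componentwise action of the product $\omega$-change action $\ds{A}\times\Pi\ds{A}$; one checks directly that this is a monoid action, the redundant first change coordinate acting trivially so that the axioms collapse to those of $\oplus$ on $\changes A$. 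Two facts then follow, both resting on the componentwise description of products in $\ftor{\CAct}(\cat{C})$ (Theorem~\ref{thm:cact-products}): (a) the base-point projection annihilates changes, $\epow{j}\circ\oplus = \epow{j}\circ\pi_1$, by a short induction on $j$; and (b) the same semi-componentwise shape is inherited at every level of the iterated spaces $\A{D}^n\ds{A}$.

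Granting (a) and (b), the inductive step for the derivative condition is the chain
\[
\begin{array}{rl}
 & \tmap{f}{n+1}{j}(a \oplus \change a) \\
= & \left(\tmap{f}{n}{j}(\epow{j}(a\oplus\change a)),\ \tmap{f}{n}{j+1}(a\oplus\change a)\right) \\
= & \left(\tmap{f}{n}{j}(\epow{j}a),\ \tmap{f}{n}{j+1}(a)\oplus\tmap{f}{n}{j+2}(a,\change a)\right) \\
= & \left(\tmap{f}{n}{j}(\epow{j}a),\ \tmap{f}{n}{j+1}(a)\right)\oplus\left(\tmap{f}{n}{j+1}(\epow{j+1}(a,\change a)),\ \tmap{f}{n}{j+2}(a,\change a)\right) \\
= & \tmap{f}{n+1}{j}(a)\oplus\tmap{f}{n+1}{j+1}(a,\change a),
\end{array}
\]
whose first line unfolds $\tmap{f}{n+1}{j} = \pair{\tmap{f}{n}{j}\circ\epow{j}}{\tmap{f}{n}{j+1}}$, whose second line applies (a) to the first coordinate and the induction hypothesis (that $\tmap{f}{n}{j+2}$ differentiates $\tmap{f}{n}{j+1}$) to the second, and whose third line re-pairs using the tangent-bundle action (b). Regularity is carried along by the same lexicographic induction on $(n,j)$: the identity $\tmap{f}{n}{j+1}(a,0)=0$ propagates coordinatewise through the pairing from the regularity of the base components $\tmap{f}{0}{j}$, while the additivity clause reduces, after applying (a) and the induction hypothesis, to the regularity of the original $f_j$ (alternatively, each $\tmap{f}{n+1}{j}$ is built from $\omega$-differential data by pairings and the tangent operation, so regularity can be read off via Proposition~\ref{prop:chain-rule-regular}).

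The main obstacle I expect is precisely the bookkeeping behind the second paragraph: correctly identifying the tangent-bundle $\omega$-change-action structure on $\A{D}^n\ds{A}$ and proving that its level actions really do have the semi-componentwise shape, since this is what turns the identity $(a_0,a_1)\oplus(a_1',a_2)=(a_0,a_1\oplus a_2)$ into a theorem rather than an assumption. Everything downstream --- facts (a), (b), and the two inductions --- is then routine, but this structural step is the one that genuinely uses the concrete product structure of $\ftor{\CAct}(\cat{C})$ rather than the abstract change-action axioms alone.
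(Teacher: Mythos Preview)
Your overall plan --- induction on $n$, proving that each $\tmap{f}{n}{j+1}$ is a regular derivative of $\tmap{f}{n}{j}$ --- is right and matches the paper's strategy. The problem is your choice of change-action structure on the derivative spaces. You assert that the componentwise product action on $\A{D}\ds{A}$ ``fails'' and must be replaced by the tangent-bundle action $(a_0,a_1)\oplus(a_1',a_2)=(a_0,a_1\oplus a_2)$. This is mistaken: the paper's Definition~\ref{def:pre-omega-change-action} sets $\A{D}(\ds{A},j+1)\defeq\changes(\ds{A},j)\times\changes(\Pi\ds{A},j)$ as the \emph{product} change action, and the paper's proof works with precisely this structure. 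Moreover, the downstream use of the lemma (in the proof of Lemma~\ref{lem:composite-omega-differential}) applies the chain rule with respect to this product structure, so proving the derivative condition for your alternative action would not directly yield what is needed.

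With the product action your fact (a) is false, but something more useful holds: $\epow{j}$ is itself a differential map with regular derivative $\epow{j}\circ\pi_2$, which the paper establishes by a short induction on $j$. The paper then applies the chain rule to $f_j\circ\epow{j}$ to obtain $\d(f_j\circ\epow{j})=f_{j+1}\circ\epow{j+1}$, and regularity comes for free from Proposition~\ref{prop:chain-rule-regular}, avoiding the separate regularity bookkeeping you anticipate. Your direct element-level computation can also be salvaged with the product action: line~2 becomes $\bigl(\tmap{f}{n}{j}(\epow{j}a)\oplus\tmap{f}{n}{j+1}(\epow{j+1}(a,\change a)),\ \tmap{f}{n}{j+1}(a)\oplus\tmap{f}{n}{j+2}(a,\change a)\bigr)$, which is already the desired product-action pairing with no special re-pairing step. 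So the fix is easy, but as written your argument establishes differentiability with respect to the wrong structure.
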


\begin{proof}
  Since $f_{i + 1}$ is always a regular derivative for $f_i$, it suffices to show that $f_{i + 1} \circ \epow{i + 1}$ is a regular derivative for $f_i \circ \epow{i}$.

  We abuse the notation and write $\d f$ to denote some arbitrary derivative of $f$. 
  Then we show by induction on $i$ that $\epow{i} \circ \pi_2$ is a regular derivative of $\epow{i}$. The base case is trivial. 
  Then, by applying properties of the product of change actions (e.g. that the derivative of $\pi_i$ is $\pi_i \circ \pi_2$), we obtain:
  \begin{align*}
    \d (\epow{i + 1}) &= \d (\epow{i} \times \epow{i})\\
                        &= \d (\pair{\epow{i} \circ \pi_1}{\epow{i} \circ \pi_2})\\
                        &= \pair{\d (\epow{i} \circ \pi_1)}{\d (\epow{i} \circ \pi_2)}\\
                        &= \pair{\epow{i} \circ \pi_2 \circ \pair{\pi_1 \circ \pi_1}{\pi_1 \circ \pi_2}}{\epow{i} \circ \pi_2 \circ \pair{\pi_2 \circ \pi_1}{\pi_2 \circ \pi_2}}\\
                        &= \pair{\epow{i} \circ \pi_1 \circ \pi_2}{\epow{i} \circ \pi_2 \circ \pi_2}\\
                        &= \pair{\epow{i} \circ \pi_1}{\epow{i} \circ \pi_2} \circ \pi_2\\
                        &= (\epow{i} \times \epow{i}) \circ \pi_2\\
                        &= \epow{i + 1} \circ \pi_2
  \end{align*}
  Hence $\epow{i+1} \circ \pi_2$ is a derivative of $\epow{i+1}$. By the induction hypothesis, $\epow{i} \circ \pi_2$ is a regular derivative for $\epow{i}$, and since we have obtained a derivative for
  $\epow{i+1}$ by applying the chain rule to a composition of functions with regular derivatives, the resulting derivative is also regular.
  
  The desired result then follows from applying the chain rule again:
  \begin{align*}
    \d (f_i \circ \epow{i}) &= \d f_i \circ \pair{\epow{i} \circ \pi_1}{\d \epow{i}}\\
                            &= f_{i + 1} \circ \pair{\epow{i} \circ \pi_1}{\epow{i} \circ \pi_2}\\
                            &= f_{i + 1} \circ (\epow{i} \times \epow{i})\\
                            &= f_{i + 1} \circ \epow{i + 1}
  \end{align*}
  Hence $f_{i+1} \circ \epow{i+1}$ is a derivative for $f_i \circ \epow{i}$, and since regularity is preserved under applications of the chain rule, it is also a regular derivative.
  
\end{proof}

\begin{proof-for-lemma}{composite-omega-differential}
  As in the previous proof, we write $\d f$ for an arbitrary derivative of $f$. 
  By the previous lemma, the map $\A{D}^j\seq{f_i}$ is $\omega$-differential and, in particular, $\d \pr_0\A{D}^j\seq{f_i} = \pr_1\A{D}^j\seq{f_i}$. 
  Then, by applying the chain rule in the definition of composition, we obtain:
  \begin{align*}
    \d \pr_{j} (\seq{g_i} \circ \seq{f_i}) 
    &= \d (g_j \circ \pr_0\A{D}^j\seq{f_i})\\
    &= g_{j+1} \circ \pair{\pr_0\A{D}^j\seq{f_i} \circ \pi_1}{\d \pr_0\A{D}^j\seq{f_i}}\\
    &= g_{j+1} \circ \pair{\pr_0\A{D}^j\seq{f_i} \circ \pi_1}{\pr_1\A{D}^j\seq{f_i}}\\
    &= g_{j+1} \circ \pr_0\A{D}^{j+1}\seq{f_i}\\
    &= \pr_{j+1} (\seq{g_i} \circ \seq{f_i})
  \end{align*}

\end{proof-for-lemma}

\begin{proof-for-lemma}{id-omega-differential}
We show: for any $\omega$-differential maps $\ds{f} : \ds{A} \ra \ds{B}$ and $g : \ds{B} \ra \ds{A}$, 
  \(
    \ds{f} \circ \ds{\Id} = \ds{f}
  \) and
  \(    
  \ds{\Id} \circ \ds{g} = \ds{g}.
  \)

  $\pr_1 \ds{\Id}$ is trivially a derivative for $\pr_0 \ds{\Id}$. Furthermore, since
  $\ds{\pi_2}$ is an $\omega$-differential map, we have $\pr_{i + 2}\ds{\Id}$ is a derivative
  for $\pr_{i+1}\ds{\Id}$, therefore $\Id$ is $\omega$-differential.
  
  We prove identity by simultaneous induction:
  \begin{align*}
    \pr_0 (\ds{f} \circ \ds{\Id}) &= \pr_0 \ds{f} \circ \pr_0 \ds{\Id}\\
                                     &= \pr_0 \ds{f} \circ \Id\\
                                     &= \pr_0 \ds{f}\\
    \pr_0 (\ds{\Id} \circ \ds{g}) &= \pr_0 \ds{\Id} \circ \pr_0 \ds{g}\\
                                     &= \Id \circ \pr_0 \ds{g}\\
                                     &= \pr_0 \ds{g}
  \end{align*}
  \begin{align*}
    \pr_{i + 1} (\ds{f} \circ \ds{\Id}) &= \pr_i (\Pi \ds{f} \circ \pair{\ds{\Id} \circ \ds{\pi_1}}{\Pi \ds{\Id}})\\
                                           &= \pr_i (\Pi \ds{f} \circ \pair{\ds{\pi_1}}{\ds{\pi_2}})\\
                                           &= \pr_i (\Pi \ds{f})\\
                                           &= \pr_{i+1} \ds{f}\\
    \pr_{i + 1} (\ds{\Id} \circ \ds{g}) &= \pr_i (\Pi \ds{\Id} \circ \pair{\ds{g} \circ \ds{\pi_1}}{\Pi\ds{g}})\\
                                           &= \pr_i (\A{\pi_2} \circ \pair{\ds{g} \circ \ds{\pi_1}}{\Pi\ds{g}})\\
                                           &= \pr_i (\Pi\ds{g})\\
                                           &= \pr_{i+1} \ds{g}
  \end{align*}
  
\end{proof-for-lemma}

\begin{proof-for-lemma}{cact-omega-products}
  First we prove that the map $\pi_{\mathbf 1} : \ds{A} \times \ds{B} \ra \ds{A}$ is $\omega$-differential.
  For this we need to check that $\pr_{i+1}\pi_{\mathbf 1}$ is a regular derivative for $\pr_i\pi_{\mathbf 1}$ as a map
  from the change action $\A{D}(\ds{A} \times \ds{B}, i)$ to the change action $\changes(\ds{A}, i)$.
  
  The key insight is that $\A{D}(\ds{A} \times \ds{B}, i) = \A{D}(\ds{A}, i) \times \A{D}(\ds{B}, i)$.
  \map{Trivial to prove with induction, should I write it?} \lo{Yes we should, in the arXiv version.}
  Verifying the derivative property then boils down to applying the structure of products of change actions.

  Consider the map $\pi_1 \circ \cpow{i}$ from the carrier $\us{\A{D}(\ds{A}, i) \times \A{D}(\ds{B}, i)}$ into $\us{\changes(\ds{A}, i)}$.
  This is in fact the composition of $i+1$ differentiable maps, and hence we can apply the chain rule to compute a regular derivative for it in terms of the
  (regular) derivatives for $\pi_1, \pi_2$, which we know are $\pi_{21}, \pi_{22}$ respectively.
  We abuse the notation and write $\d f$ to denote some arbitrary derivative of $f$. 
  Then we show by induction on $i$ that $\pi_1 \circ \cpow{i + 1}$ is a derivative for $\pi_1 \circ \cpow{i}$. The base case is trivial. For the inductive case:
  \begin{align*}
    \d (\pi_1 \circ \cpow{i + 1}) 
    &= \d (\pi_1 \circ \cpow{i} \circ \pi_2)\\
    &= \d (\pi_1 \circ \cpow{i}) \circ \pair{\pi_2 \circ \pi_1}{\d \pi_2}\\
    &= \pi_1 \circ \cpow{i+1} \circ \pair{\pi_2 \circ \pi_1}{\pi_2 \circ \pi_2}\\
    &= \pi_1 \circ \cpow{i} \circ \pi_2 \circ \pair{\pi_2 \circ \pi_1}{\pi_2 \circ \pi_2}\\
    &= \pi_1 \circ \cpow{i} \circ \pi_2 \circ \pi_2\\
    &= \pi_1 \circ \cpow{i+2}
  \end{align*}
  Hence $\ds{\pi_{\mathbf 1}}$ is an $\omega$-differential map; similarly for $\ds{\pi_{\mathbf 2}}$.
  A similar argument also shows that $\pair{\ds{f}}{\ds{g}}$ is $\omega$-differential whenever $\ds{f}$ and $\ds{g}$ are.
  \map{Not sure whether to expand on that last point.}
  \lo{We should do it for the arXiv version.}

  We then check that $\ds{\pi_{\mathbf 1}} \circ \pair{\ds{f}}{\ds{g}} = \ds{f}$.
  For this, the following auxiliary lemma will be of use:
  \begin{lem}
    \label{lem:cpow-composition}
    For any 
    \lochanged{pre-$\omega$-differential map} $\seq{f_i}$, for all $j, k \geq 0$, 
    \[
    \cpow{j} \circ \pr_k \A{D}^j\seq{f_i} = f_{j + k}.
    \]
  \end{lem}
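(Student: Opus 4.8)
The plan is to prove the identity by induction on $j$, keeping the statement universally quantified over $k \geq 0$ and over the pre-$\omega$-differential map $\seq{f_i}$, so that the induction hypothesis can be re-applied to a derived sequence. The base case $j = 0$ is immediate: since $\cpow{0} = \Id$ and $\A{D}^0\seq{f_i} = \seq{f_i}$, we have $\cpow{0} \circ \pr_k \A{D}^0 \seq{f_i} = f_k = f_{0+k}$.

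For the inductive step I would exploit the fact that iterating $\A{D}$ lets us peel off an application from the \emph{inside}, namely $\A{D}^{j+1}\seq{f_i} = \A{D}^{j}(\A{D}\seq{f_i})$, rather than from the outside. Writing $\seq{g_i} \defeq \A{D}\seq{f_i}$ --- which is again a pre-$\omega$-differential map by Definition~\ref{def:3-actions-omega-seq}, with $g_m = \pr_m \A{D}\seq{f_i} = \pair{f_m \circ \epow{m}}{f_{m+1}}$ --- the induction hypothesis applied to $\seq{g_i}$ gives $\cpow{j} \circ \pr_k \A{D}^{j} \seq{g_i} = g_{j+k}$. Combining this with the defining recursion $\cpow{j+1} = \pi_2 \circ \cpow{j}$ yields
\[
\cpow{j+1} \circ \pr_k \A{D}^{j+1}\seq{f_i} = \pi_2 \circ \cpow{j} \circ \pr_k \A{D}^{j} \seq{g_i} = \pi_2 \circ g_{j+k} = \pi_2 \circ \pair{f_{j+k}\circ\epow{j+k}}{f_{j+k+1}} = f_{j+k+1},
\]
which is exactly $f_{(j+1)+k}$, closing the induction.

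The one point requiring care --- and the main obstacle --- is the choice of which copy of $\A{D}$ to strip in the inductive step. Peeling from the outside, i.e. writing $\A{D}^{j+1}\seq{f_i} = \A{D}(\A{D}^{j}\seq{f_i})$, forces one to evaluate $\cpow{j}$ on a pairing of the form $\pair{(\cdot)\circ\epow{k}}{(\cdot)}$, and since $\cpow{j}$ is $\pi_2$ composed $j$ times this does not simplify directly and would demand a further auxiliary lemma. Peeling from the inside instead lets the single outermost $\pi_2$ of $\cpow{j+1}$ act on the pairing produced by the inner $\A{D}$, collapsing it to $f_{j+k+1}$ in one step, while the remaining $\cpow{j}$ is absorbed by the induction hypothesis. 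The only background facts needed are the definitional unfolding of $\A{D}$ and $\cpow{}$, together with the routine observation that $\A{D}$ sends pre-$\omega$-differential maps to pre-$\omega$-differential maps, so that the hypothesis genuinely applies to $\seq{g_i}$.
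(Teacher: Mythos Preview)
Your inductive argument is correct, but it diverges from the paper's proof and your final paragraph mischaracterises the alternative. The paper peels $\A{D}$ from the \emph{outside}, writing $\pr_k \A{D}^{j+1}\seq{f_i} = \pair{(\pr_k \A{D}^j\seq{f_i})\circ\epow{k}}{\pr_{k+1}\A{D}^j\seq{f_i}}$, and then uses the factorisation $\cpow{j+1} = \cpow{j}\circ\pi_2$ (rather than the definitional $\pi_2\circ\cpow{j}$) so that the inner $\pi_2$ collapses the pairing to $\pr_{k+1}\A{D}^j\seq{f_i}$, after which the induction hypothesis applies to the \emph{same} sequence $\seq{f_i}$ with $k$ shifted to $k+1$. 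No auxiliary lemma is needed; your claim that this route ``does not simplify directly'' overlooks that $\cpow{j}$ is just $\pi_2$ iterated $j$ times and can be reassociated either way. The trade-off between the two approaches is that the paper's version keeps the induction over a fixed $\seq{f_i}$ and never has to ask whether $\A{D}\seq{f_i}$ is itself a pre-$\omega$-differential map, whereas your version fixes $k$ but must invoke the hypothesis on the derived sequence $\seq{g_i}=\A{D}\seq{f_i}$ --- a step that is sound (the computation is purely formal in the morphisms $f_i$) but which your reference to Definition~\ref{def:3-actions-omega-seq} does not actually justify, since that definition introduces $\A{D}\seq{f_i}$ as an $\omega$-sequence without asserting it is pre-$\omega$-differential between specified object sequences.
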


  \begin{proof}
    We proceed by induction on $j$. 
    For the case $j = 0$ we have
    $\cpow{0} \circ \pr_k \A{D}^0\seq{f_i} = \pr_k \seq{f_i}$

    For the inductive case:
    \begin{align*}
      \cpow{j+1} \circ \pr_k \A{D}^{j+1}\seq{f_i}
      &= \cpow{j} \circ \pi_2 \circ \pair{\ldots}{\pr_{k+1}\A{D}^j\seq{f_i}}\\
      &= \cpow{j} \circ \pr_{k+1}\A{D}^j\seq{f_i}\\
      &= f_{(j + 1) + k}
    \end{align*}
  \end{proof}
  
  The desired equation follows as a trivial corollary. Indeed, for any $i$
  \begin{align*}
    \pr_i (\ds{\pi_{\mathbf 1}} \circ \pair{\ds f}{\ds g})
    &= \pr_i\ds{\pi_{\mathbf 1}} \circ \pr_0 \A{D}^i\pair{\ds{f}}{\ds{g}}\\
    &= \pi_1 \circ \cpow{i} \circ \pr_0\A{D}^i\pair{\ds{f}}{\ds{g}}\\
    &= \pi_1 \circ \pr_i\pair{\ds{f}}{\ds{g}}\\
    &= \pr_i\ds{f}
  \end{align*}
  Hence $\ds{\pi_{\mathbf 1}} \circ \pair{\ds f}{\ds g} = \ds{f}$.
  
\end{proof-for-lemma}

\begin{proof-for-theorem}{cact-omega-ccc}
  Consider $\omega$-change actions $\ds{A}, \ds{B}, \ds{C}$ and let $\ds{f} : \ds{A} \ra \ds{B},
  \ds{g} : \ds{A} \ra \ds{C}$ be $\omega$-differential maps. 
  Lemma~\ref{lem:cact-omega-products} already shows that 
  $\ds{\pi_{\mathbf 1}} \circ \pair{\ds f}{\ds g} = \ds{f}$, and similarly for $\ds{\pi_{\mathbf 2}}$. 
  It remains to establish uniqueness. 

  Suppose there is an $\omega$-differential map $\ds h : \ds A \ra \ds B \times \ds C$ satisfying
  \[
    \ds{\pi_{\mathbf 1}} \circ \ds h = \ds{f}
    \qquad
    \ds{\pi_{\mathbf 2}} \circ \ds h = \ds{g}
  \]
  
  Then, for every $i$, applying Lemma~\ref{lem:cpow-composition} we obtain
  \begin{align*}
    \pr_i (\ds{\pi_{\mathbf 1}} \circ \ds h) &= \pr_i \ds{\pi_{\mathbf 1}} \circ \pr_0 \A{D}^i\ds{h}\\
                                             &= \pi_1 \circ \cpow{i} \circ \pr_0 \A{D}^i\ds{h}\\
                                             &= \pi_1 \circ h_i\\
                                             &= f_i
  \end{align*}
  Applying a similar reasoning to $g_i$, and by the universal property of $\pair{f_i}{g_i}$ in $\cat{C}$, we obtain that $h_i = \pair{f_i}{g_i} = \pr_i \pair{\ds f}{\ds g}$ and hence $\ds h = \pair{\ds f}{\ds g}$. Therefore, $\ds{A} \times \ds{B}$ is the categorical product in $\ftor{\CAct_\omega}(\cat{C})$.

  We can construct a terminal $\omega$-change action $\ds{\top}$ by picking the terminal object of $\cat{C}$ at every level. This uniquely determines the entire structure of $\ds{\top}$, since the only possible choice for every morphism is the universal morphism $!$ in $\cat{C}$.
  \[
    \ds{\top} \defeq (\seq{\top}, \seq{\seq{!}}, \seq{\seq{!}}, \seq{!})
  \]
  Note that the $\omega$-sequences $\seq{!}$ are all $\omega$-differential maps, and hence $\ds{\top}$ is an $\omega$-change action.
  
  Now take an arbitrary $\omega$-change action $\ds{A}$. It is straightforward to check that there is exactly one morphism $\ds{!} : \ds{A} \ra \ds{\top}$, namely the morphism given by $\pr_i \ds{!} = !$. Therefore $\ds{\top}$ is the terminal object in $\ftor{\CAct_\omega}(\cat{C})$.
  
  We now sketch a proof that $\ftor{\CAct_\omega}(\cat{C})$ has exponentials, provided that $\cat{C}$ is cartesian closed and has all countable limits.
  First, consider $\omega$-sequences of $\cat{C}$-objects $\seq{A_i}$ and $\seq{B_i}$. Since $\cat{C}$ has all countable products, one can construct the infinite product
  \[
    \pr_j (\seq{A_i} \Ra \seq{B_i}) \defeq \pr_j \A{D}\seq{A_i} \Ra \pr_j \seq{B_i}
  \]
  Intuitively, this object of $\cat{C}$ represents the pre-$\omega$-differential maps between $\seq{A_i}$ and $\seq{B_i}$. \lo{I.e.~this object is $\prod_{j \in \omega} (\pr_j \A{D} \changeseq A \Ra B_j)$. But you don't seem to make use of this infinite product in the rest of the argument.

  OK - you build a limit object $|\ds A \Ra \ds B|$, but in the diagrams you only make use of $\pr_{j}(\changeseq A \Ra \changeseq B)$, not the infinite product.}

  If $\ds{A}, \ds{B}$ are $\omega$-change actions on the $\omega$-sequences $\seq{A_i}, \seq{B_i}$, we can consider the subobject of 
  $\pr_j (\seq{A_i} \Ra \seq{B_i}) \times \pr_{j+1} (\seq{A_i} \Ra \seq{B_i})$
  where the second element is the derivative of the first (i.e. of differential maps) by taking the limit of the following diagram:
  \begin{center}
    \begin{tikzcd}[ampersand replacement=\&]
      \&[-50]\pr_j (\seq{A_i} \Ra \seq{B_i}) \times \pr_{j+1} (\seq{A_i} \Ra \seq{B_i})
      \arrow[ddr, "\ulcorner \times \urcorner"]
      \arrow[ddl, swap, "\pi_1"]
      \\\\
      \pr_j \A{D}\seq{A_i} \Ra \pr_j \seq{B_i}
      \arrow[ddr, swap, "\pr_0\ds{\oplus}_j \Ra \Id"]
      \&\&[-90pt]
      (\pr_j \A{D}\seq{A_i} \times \pr_{j+1} \A{D}\seq{A_i}) \Ra (\pr_j \seq{B_i} \times \pr_{j + 1} \seq{B_i})
      \arrow[ddl,"\Id \Ra \pr_0 \ds{\oplus}_j"]
      \\\\
      \&(\pr_j \A{D}\seq{A_i} \times \pr_{j+1}\A{D}\seq{A_i}) \Ra \pr_j \seq{B_i}
    \end{tikzcd}
  \end{center}
  We can further restrict the space to only regular derivatives by taking the limit of a similar diagram, requiring that 
  \begin{align*}
  \d f(a, 0) &= 0\\
   \d f(a, \change a + \change b) &= \d f(a, \change a) + \d f (a + \change a, \change b).
  \end{align*}
  Pasting all these diagrams together, we can define the space of $\omega$-differential maps between $\ds{A}$ and $\ds{B}$ as a limit object
  $\us{\ds{A} \Ra \ds{B}}$ internal to $\cat{C}$.

  The $\omega$-sequence $\seq{\us{\ds{A} \Ra \Pi^i\ds{B}}}$ is then a pre-$\omega$-change action that forms the basis for the exponential $\ds{A} \Ra \ds{B}$ in $\ftor{\CAct_\omega}(\cat{C})$.
  \lo{The preceding sentence is unclear. 
  I read you as: The $\omega$-sequence $\seq{\us{\ds{A} \Ra \Pi^i\ds{B}}}$ of $\cat{C}$-objects is the first component of the $\omega$-change action $\ds{A} \Ra \ds{B}$ you wish to construct. Yes?}
  The structure morphisms, $\ds{\oplus}, \ds{+}$ and $\ds{0}$, are obtaining by lifting the structure morphisms in $\Pi^i\ds{B}$ pointwise.
  \map{To finish the proof in detail would require a lot of space and I am not sure whether it is very enlightening - the main trick has already been explained.}  
  \lo{This result is important and interesting, and so, well-worth writing out clearly. 
  This last paragraph need filling out. We also need to verify that the construction is the exponential object, i.e.~checking the universal properties.}
  
\end{proof-for-theorem}

\begin{proof-for-theorem}{cact-omega-canonical}
Given an object $\ds{A} = \left(\seq{A_i}, \seq{\ds{\oplus}_i}, \seq{\ds{+}_i}, \seq{0_i}\right)$ of $\ftor{\CAct_\omega}(\cat{C})$,
the canonical coalgebra $\gamma : \ftor{\CAct_\omega}(\cat{C}) \ra \ftor{\CAct}(\ftor{\CAct_\omega}(\cat{C}))$ maps the $\omega$-change action to itself. 
That $\ds{A}$ is a internal change action of $\ftor{\CAct_\omega}(\cat{C})$ follows at once from the definition of $\omega$-change action:
$\changes \ds A = \Pi \ds A$ and $\pr_n (\A{D} (\ds A \times \Pi {\ds A})) = \pr_{n+1} (\A{D} \ds A)$; and
$\ds{\oplus}_0 : \ds A \times \changes{\ds A} \to \ds A$ and $\ds{+}_0 : \changes{\ds A} \times \changes{\ds A} \to \changes{\ds A}$ are $\omega$-differential.
The functor $\gamma$ maps an $\omega$-differential map $\ds{f} : \ds A \to \ds B$ to the differential map $\gamma(\ds{f}) \defeq  (\ds f, \d \ds f)$ where $\d \ds f : \ds A \times \changes{\ds A} \to \changes{\ds B}$ is just the $\omega$-differential map $\Pi \ds{f}$.

\end{proof-for-theorem}

\begin{proof-for-theorem}{cact-as-limit}
  First we construct by induction on $i \geq 1$ a family of forgetful functors 
  $\epsilon_i : \ftor{\CAct_\omega}(\cat{C}) \ra \ftor{\CAct}^i(\cat{C})$ that make the diagram commute.
  
  When $i = 1$, we use the operator $\changes(\ds{A}, i)$ from Definition~\ref{def:pre-omega-change-action} to define $\epsilon_1$:
  \begin{itemize}
    \item $\epsilon_1(\ds{A}) \defeq \changes(\ds{A}, 1)$
    \item $\epsilon_1(\ds{f}) \defeq (\pr_0 \ds{f}, \pr_1 \ds{f})$ 
  \end{itemize}
  
  For $i \geq 0$, the functor $\epsilon_{i + 1}$ is defined inductively by:
  \begin{itemize}
    \item $\epsilon_{i + 1}(\ds{A}) \defeq (\epsilon_i(\ds{A}), \epsilon_i(\Pi\ds{A}), \epsilon_{i + 1}(\ds{\oplus_0^A}), \epsilon_{i + 1}(\ds{+_0^A}), \pr_0 0^A)$
    \item $\epsilon_{i + 1}(\ds{f}) \defeq (\epsilon_i(\ds{f}), \epsilon_i(\Pi \ds{f}))$
  \end{itemize}

  It is straightforward to check that the required diagram does commute. For example, for $i = 1$, $\epsilon_{2}(\ds{A})$ is the change action
  \begin{align*}
    \epsilon_2(\ds{A}) &= (A_{01}, A_{12}, \oplus_{012}, \ldots)\\
    A_{01} &= (A_0, A_1, (\pr_0\oplus_0, \pr_1 \oplus_0), \ldots)\\
    A_{12} &= (A_1, A_2, (\pr_0\oplus_1, \pr_1 \oplus_1), \ldots)\\
    \oplus_{012} &= ((\pr_0 \oplus_0, \pr_1 \oplus_1), (\pr_1 \oplus_1, \pr_2 \oplus_2))
  \end{align*}
  hence the ``lower'' structure extracted by $\epsilon$ and the ``higher'' one extracted by $\xi$ coincide.

  To prove the universal property, consider a category $\cat{D}$ and functors $\epsilon'_i : \cat{D} \ra \ftor{\CAct}^i(\cat{C})$ making the diagram commute.
  Then there is a unique functor $\seq{\epsilon'_i} : \cat{D} \ra \ftor{\CAct_\omega}(\cat{C})$ satisfying $\epsilon'_i = \epsilon_i \circ \seq{\epsilon'_i}$. 
  To construct it, first consider an object $U$ of $\cat{D}$. We define the $\omega$-sequence $\seq{U_i}$ by:
  \begin{align*}
    U_0 &\defeq \us{\epsilon'_1(U)}\\
    U_{j + 1} &\defeq \changes^{j + 1}\epsilon'_{j + 1}(U)
  \end{align*}
  Note that, for every $j$, $\epsilon'_{j + 1}(U)$ is a change action on $\ftor{\CAct}^{j + 1}(\cat{C})$ and, therefore, $\changes^j\epsilon'_{j+1}(U)$ is a change action
  in $\ftor{\CAct}(\cat{C})$. In particular, $\oplus_{\changes^j\epsilon'_{j + 1}}$ is an action of $U_{j + 1}$ on $U_j$. Hence we define the pre-$\omega$-differential map
  $\ds{\oplus^U_j}$ as follows:
  \begin{gather*}
    \pr_k \ds{\oplus^U_j} = \d^k \oplus_{\changes^j\epsilon'_{j + k + 1}}
  \end{gather*}
  and similarly for $\ds{+^U_j}, \ds{0^U_j}$.
  
  Then the action of $\seq{\epsilon'_i}$ on an object $U$ of $\cat{D}$ can be defined as:
  \[
    \seq{\epsilon'_i}(U) = (\seq{U_j}, \seq{\ds{\oplus^U_j}}, \seq{\ds{+^U_j}}, \seq{0^U})
  \]
  Note that this is indeed an $\omega$-change action, since the maps $\ds{\oplus^U_j}, \ds{+^U_j}$ are $\omega$-differential and satisfy the required equations by construction.

  Whenever $f : U \ra V$ is a morphism in $\cat{D}$, the morphism $\epsilon'_{j + 1}(f)$ is a differential map in $\ftor{\CAct}^{j + 1}(\cat{C})$, hence
  its $j$-th derivative $\d^{j + 1} f$ is a morphism in $\cat{C}$ of the appropriate type, so we can express the action of $\seq{\epsilon'_i}$ on morphisms of $U$ by:
  \[
    \pr_0\seq{\epsilon'_i}(f) = \us{\epsilon'_{1} (f)}\\
    \pr_{j + 1}\seq{\epsilon'_i}(f) = \d^{j + 1} \epsilon'_{j + 1}(f)
  \]
  which is an $\omega$-differential morphism by construction.

  \begin{rem}
    Note that the previous statement depends on $\epsilon'_{j + 1}$ equalising $\epsilon$ and $\xi$.
    If this were not the case, then we could have $\epsilon'_2 (f) = ((f_0, f_1), (f_1', f_2))$ with $f_1 \neq f_1'$.
    Then, according to the above definition:
    \begin{gather*}
      \pr_0 \seq{\epsilon'_i}(f) = f_0\\
      \pr_1 \seq{\epsilon'_i}(f) = f_1\\
      \pr_2 \seq{\epsilon'_i}(f) = f_2
    \end{gather*}
    However, since $f_1 \neq f_1'$, there is no guarantee that $f_2$ is a derivative for $f_1$, hence $\seq{\epsilon'_i}(f)$ is not $\omega$-differential.
  \end{rem}

  Thus defined, $\seq{\epsilon'_i}$ is a functor from $\cat{D}$ into $\ftor{\CAct_\omega}(\cat{C})$ such that $\epsilon'_i = \epsilon_i \circ \seq{\epsilon'_i}$,
  and it is clear from the construction that it is unique. Therefore, $\ftor{\CAct_\omega}(\cat{C})$ is precisely the desired limit.
   
\end{proof-for-theorem}
\section{Change actions as 2-categories}
\label{sec:2-categories}
Consider a change action $\da{A}$ in $\cat{Set}$. 
The change action induces the structure of a category $\ftor{Cat}(\da A)$ on $\us A$ as follows:
\begin{itemize}
  \item The objects of $\ftor{Cat}(\da A)$ are the elements of $\us A$.
  \item The morphisms $\ftor{Cat}(\da A)(a_1, a_2)$ are the changes 
    $\change a : \changes A$
    such that $a_1 \oplus \change a = a_2$.
  \item The identity morphism $\Id_A$ is the object $0 : \changes A$.
  \item Composition $\change a_2 \circ \change a_1$ is the sum $\change a_1 + \change a_2$.
\end{itemize}
Since $\oplus$ is a monoid action, the composition of morphisms is well-typed.
Associativity and identity follow from the fact that $\changes A$ is a monoid.

Now let $\da{f} = (\us f, \d f)$ be a differential map between change actions
$\da{A}$ and $\da{B}$. Clearly $\da{f}$ can be seen as a functor $\ftor{Cat}(\da{f})$
between the corresponding categories $\ftor{Cat}(\da A), \ftor{Cat}(\da B)$ 
in the following way:
\begin{itemize}
  \item If $a$ is an object of $\ftor{Cat}(\da A)$, then $\ftor{Cat}(f)(a)$ is
    the element $\us{f}(a)$ considered as an object of $\ftor{Cat}(\da B)$.
  \item If $\change a$ is a morphism from $a_1$ into $a_2$, then
    $\ftor{Cat}(f)(\change a)$ is $\d f(a_1, \change a)$.
\end{itemize}
This definition is well-typed since $\d f(a_1, \change a)$ is a change mapping
$\us{f}(a_1)$ into $\us{f}(a_1 \oplus \change a) = \us{f}(a_2)$, and hence a morphism from 
$\us{f}(a_1)$ into $\us{f}(a_2)$ in the category $\ftor{Cat}(\da B)$.
Functoriality follows from (and is equivalent to) regularity of $\d f$.

Conversely, let $\ftor{F}$ be a functor from $\ftor{Cat}(\da A)$ into
$\ftor{Cat}(\da B)$. 
This induces a differential map $\da{\ftor{F}} = (F, \d F)$ from $\da{A}$ into $\da{B}$ defined by:
\begin{align*}
  F(a) &\defeq \ftor{F}(a)\\
  \d F(a, \change a) &\defeq \ftor{F}(\change a)
\end{align*}
Regularity follows from functoriality of $\ftor{F}$, and the derivative property
is a direct consequence of the fact that $\ftor{F}$ is well-typed.

\begin{lem}
  The category $\cat{\CAct}$ embeds fully and faithfully into the 2-category
  $\cat{Cat}$ of (small) categories and functors.
\end{lem}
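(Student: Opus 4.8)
The plan is to package the two constructions already given---sending a change action $\da A$ to the category $\ftor{Cat}(\da A)$ and a differential map $\da f$ to the functor $\ftor{Cat}(\da f)$---into a single functor $\ftor{Cat} : \cat{\CAct} \ra \cat{Cat}$, and then to prove that it is full and faithful by showing that the assignment $\da f \mapsto \ftor{Cat}(\da f)$ and the converse assignment $\ftor F \mapsto \da{\ftor F}$ are mutually inverse bijections on each hom-set. Since $\cat{Cat}$ is being regarded as a $2$-category, I would make explicit at the outset that the embedding is claimed at the level of objects and $1$-cells (functors), so that ``full and faithful'' refers to the bijection $\cat{\CAct}(\da A, \da B) \cong \cat{Cat}(\ftor{Cat}(\da A), \ftor{Cat}(\da B))$.

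First I would check that $\ftor{Cat}$ is genuinely a functor. On objects this is exactly the construction of $\ftor{Cat}(\da A)$, whose well-definedness as a category (associativity, identities) follows from $(\changes A, +, 0)$ being a monoid acting via $\oplus$. For identities, $\ftor{Cat}(\Id_{\da A})$ sends a morphism $\change a$ to $\d\Id(a,\change a) = \change a$, so it is the identity functor. For composition, using the chain-rule formula $\d(g \circ f)(a,\change a) = \d g(\us f(a), \d f(a,\change a))$ I would compute
\[
  \ftor{Cat}(\da g \circ \da f)(\change a) = \d(g\circ f)(a,\change a) = \d g\big(\us f(a), \d f(a,\change a)\big) = \ftor{Cat}(\da g)\big(\ftor{Cat}(\da f)(\change a)\big),
\]
so that $\ftor{Cat}(\da g \circ \da f) = \ftor{Cat}(\da g) \circ \ftor{Cat}(\da f)$. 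Faithfulness is then immediate: if $\ftor{Cat}(\da f) = \ftor{Cat}(\da g)$, then reading off the action on objects gives $\us f = \us g$, and since \emph{every} pair $(a,\change a) \in \us A \times \changes A$ occurs as a morphism $\change a : a \ra a \oplus \change a$ of $\ftor{Cat}(\da A)$, agreement on morphisms forces $\d f = \d g$; hence $\da f = \da g$.

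The remaining step, fullness, is where the converse construction does the work: given a functor $\ftor F : \ftor{Cat}(\da A) \ra \ftor{Cat}(\da B)$, set $F(a) \defeq \ftor F(a)$ and $\d F(a,\change a) \defeq \ftor F(\change a)$. As already noted, $\d F$ is a regular derivative for $F$ (regularity from functoriality of $\ftor F$, and the derivative condition from the fact that $\ftor F$ respects sources and targets), so $\da{\ftor F} = (F, \d F)$ is a morphism of $\cat{\CAct}$, and by construction $\ftor{Cat}(\da{\ftor F}) = \ftor F$. I do not anticipate a genuine obstacle here, since the preamble has already verified both directions of the object/morphism correspondence; the only real content is the bookkeeping check that these two assignments are mutually inverse on each hom-set, which together with the functoriality verification above establishes that $\ftor{Cat}$ is a full and faithful embedding.
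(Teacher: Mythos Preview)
Your proposal is correct and follows essentially the same route as the paper. The paper does not give an explicit proof of this lemma; it simply states it after presenting the two constructions $\da A \mapsto \ftor{Cat}(\da A)$, $\da f \mapsto \ftor{Cat}(\da f)$ and the converse $\ftor F \mapsto \da{\ftor F}$, with the implicit understanding that these are mutually inverse on hom-sets. You have filled in exactly the expected bookkeeping: functoriality of $\ftor{Cat}$ via the chain rule, faithfulness from the fact that every pair $(a,\change a)$ arises as a morphism, and fullness from the already-constructed inverse.
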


Given differential maps $\da{f}, \da{g} : \da{A} \to \da{B}$, 
a natural transformation $\vf{U} : \da{f} \xrightarrow{\cdot} \da{g}$ maps every object $a : \us{A}$ to a change
$\vf{U}(a) : \changes B$ such that the following diagram commutes:
\begin{center}
  \begin{tikzcd}
    \us{f}(a)
    \arrow[r, "{\d f(a, \change a)}"]
    \arrow[d, swap, "\vf{U}(a)"]
    & \us{f}(a \oplus \change{a})
    \arrow[d, "\vf{U}(a \oplus \change a)"]
    \\
    \us{g}(a)
    \arrow[r, "{\d g(a, \change a)}"]
    & \us{g}(a \oplus \change{a})
  \end{tikzcd}
\end{center}
In particular, this means natural transformations are a subset of the set 
$\us{A} \ra \changes{B}$, which can be read as generalized vector fields (mapping
the space $A$ to $\changes B$ rather than $\changes A$).

\begin{rem}
  Consider a natural transformation from functor $\ftor{Cat}(\da f)$ into
  $\ftor{Cat}(\da g)$. This is, first and foremost, a map that assigns to
  every element $a \in \us A$ a change $\change a \in \changes A$. This is precisely
  the space of functional changes $\changes (A \Ra B)$ in $\cat{\CAct_\omega}$
  (see Sec.~\ref{sec:omega-change-actions})
  and, in general, in any change action model (see Sec.~\ref{sec:extrinsic}) equipped with an infinitesimal object.
\end{rem}

More generally, the category $\ftor{\CAct}(\cat{C})$ of change actions on an arbitrary base cartesian category $\cat{C}$ can be regarded as a 2-category. 
Indeed, given change actions $\da{A}, \da{B}$ we define the category of differential maps $\cat{Diff}(\da{A}, \da{B})$ as follows:
\begin{itemize}
  \item The objects of $\cat{Diff}(\da{A}, \da{B})$ are differential maps
    $\da{f} : \da{A} \ra \da{B}$.
  \item The morphisms between $\da{f}, \da{g}$ are $\cat{C}$-morphisms 
    $\vf{U} : \us A \ra \changes B$ such that the following diagrams
    (in $\cat{C}$) commute:
    \begin{center}
      \begin{tikzcd}
        A
        \arrow[r, "\pair{f}{\vf{U}}"]
        \arrow[dr, swap, "g"]
        & B \times \changes{B} \arrow[d, "\oplus"]
        \\
        &B
      \end{tikzcd}
      \begin{tikzcd}
        A \times \changes{A}
        \arrow[r, "\pair{\d f}{\vf{U} \circ \oplus}"]
        \arrow[d, swap, "\pair{\vf{U} \circ \pi_1}{\d g}"]
        &\changes{B} \times \changes{B}
        \arrow[d, "+"]
        \\
        \changes{B} \times \changes{B}
        \arrow[r, "+"]
        &\changes{B}
      \end{tikzcd}
    \end{center}
    Intuitively, the first diagram asserts that $\vf{U}$ has the ``type''
    of a natural transformation $f$ to $g$, whereas the second diagram states 
    naturality of $\vf{U}$.
\end{itemize}
The identity objects in $\cat{Diff}(\da{A}, \da{B})$ are the constant zero maps
$\Id_{\da{f}} \defeq (0_{\da{B}}) \circ !$.
Given $\cat{Diff}(\da{A}, \da{B})$-morphisms $\vf{U} : \da{f} \ra \da{g}$,
$\vf{V} : \da{g} \ra \da{h}$, their  composition is defined by:
\begin{gather*}
  \vf{V} \bullet \vf{U} \defeq + \circ \pair{\vf{U}}{\vf{V}}
\end{gather*}
which is a $\cat{Diff}$-map between $\da{f}$ and $\da{h}$ - indeed:
\begin{align*}
  h &= \oplus \circ \pair{g}{\vf{V}}\\
    &= \oplus \circ \pair{\oplus \circ \pair{f}{\vf{U}}}{\vf{V}}\\
    &= \oplus \circ \pair{f}{+ \circ \pair{\vf{U}}{\vf{V}}}
\end{align*}
\begin{align*}
  + \circ \pair{\d f}{+ \circ \pair{\vf{U}}{\vf{V}} \circ \oplus}
  &= + \circ \pair{+ \circ \pair{\d f}{\vf{U} \circ \oplus}}{\vf{V} \circ \oplus}\\
  &= + \circ \pair{+ \circ \pair{\vf{U} \circ \pi_1}{\d g}}{\vf{V} \circ \oplus}\\
  &= + \circ \pair{\vf{U} \circ \pi_1}{+ \circ \pair{\d g}{\vf{V} \circ \oplus}}\\
  &= + \circ \pair{\vf{U} \circ \pi_1}{+ \circ \pair{\vf{V} \circ \pi_1}{\d g}}\\
  &= + \circ \pair{+ \circ \pair{\vf{U}}{\vf{V}} \circ \oplus}{\d g}
\end{align*}
which entails the required diagram commutes. Associativity and identity follow
from the definition of change action.

Furthermore, composition of differential maps can be lifted to a functor on
the corresponding categories. More precisely, let $\da{A}, \da{B}, \da{C}$ be
change actions. Define the functor $\ftor{Comp} : \cat{Diff}(\da{A}, \da{B})
\times \cat{Diff}(\da{B}, \da{C}) \ra \cat{Diff}(\da{A}, \da{C})$ as follows:
\begin{itemize}
  \item If $(\da{f}, \da{g})$ is an object in $\cat{Diff}(\da{A}, \da{B}) \times
    \cat{Diff}(\da{B}, \da{C})$, then $\ftor{Comp}(\da{f}, \da{g})$ is
    just the composition of differential maps, i.e.
    \begin{gather*}
      \ftor{Comp}(\da{f}, \da{g}) = (\us g \circ \us f, \d g \circ \pair{\us f \circ \pi_1}{\d f})
    \end{gather*}
  \item If $(\vf{U}, \vf{V})$ is a morphism from $(\da{f_1}, \da{g_1})$ into
    $(\da{f_2}, \da{g_2})$, then $\ftor{Comp}(\vf{U}, \vf{V})$ is defined as:
    \begin{gather*}
      + \circ \pair{\d g_1 \circ \pair{f_1}{\vf{U}}}{\vf{V} \circ \us{f_2}}
    \end{gather*}
\end{itemize}
\begin{proof}
  We need to show that, as defined above, the morphism $\ftor{Comp}(\vf{U}, \vf{V})$
  is indeed a 2-cell (i.e. the required diagrams commute).
  
  It's easy to verify the first. Indeed:
  \begin{align*}
    g_2 \circ f_2 &= \oplus \circ \pair{g_1}{\vf{V}} \circ f_2\\
                  &= \oplus \circ \pair{g_1 \circ f_2}{\vf{V} \circ f_2}\\
                  &= \oplus \circ \pair{g_1 \circ \oplus \circ \pair{f_1}{\vf{U}}}{\vf{V} \circ f_2}\\
                  &= \oplus \circ 
                    \pair{\oplus \circ \pair{g_1 \circ f_1}{\d g_1 \circ \pair{f_1}{\vf{U}}}}
                    {\vf{V} \circ f_2}\\
                  &= \oplus \circ \pair{g_1 \circ f_1}
                    {+ \circ \pair{\d g_1 \circ \pair{f_1}{\vf{U}}}{\vf{V} \circ f_2}}\\
                  &= \oplus \circ \pair{g_1 \circ f_1}
                                       {\ftor{Comp}(\vf{U}, \vf{V})}
  \end{align*}
  For the second:
  \begin{align*}
    + \circ &\pair{\ftor{Comp}(\vf{U}, \vf{V}) \circ \pi_1}{\d (g_2 \circ f_2)}\\
    &= + \circ \pair{+ \circ \pair{\d g_1 \circ \pair{f_1}{\vf{U}}}{\vf{V} \circ f_2} \circ \pi_1}
    {\d g_2 \circ \pair{f_2 \circ \pi_1}{\d f_1}}\\
    &= + \circ \pair{\d g_1 \circ \pair{f_1}{\vf{U}} \circ \pi_1}
      {+ \circ \pair{\vf{V} \circ f_2 \circ \pi_1}{\d g_2 \circ \pair{f_2 \circ \pi_1}{\d f_2}}}\\
    &= + \circ \pair{\d g_1 \circ \pair{f_1}{\vf{U}} \circ \pi_1}
      {+ \circ \pair{\vf{V} \circ f_2 \circ \pi_1}{\d g_2 \circ \pair{f_2 \circ \pi_1}{\d f_2}}}\\
    &= + \circ \pair{\d g_1 \circ \pair{f_1}{\vf{U}} \circ \pi_1}
      {+ \circ \pair{\vf{V} \circ \pi_1}{\d g_2} \circ ((f_2 \circ \pi_1) \times \d f_2)}\\
    &= + \circ \pair{\d g_1 \circ \pair{f_1}{\vf{U}} \circ \pi_1}
      {+ \circ \pair{\d g_1}{\vf{V} \circ \oplus} \circ ((f_2 \circ \pi_1) \times \d f_2)}\\
    &= + \circ \pair{+ \circ \pair{\d g_1 \circ \pair{f_1}{\vf{U}} \circ \pi_1}
                                  {\d g_1 \circ ((f_2 \circ \pi_1) \times \d f_2)}}
                    {\vf{V} \circ \oplus \circ ((f_2 \circ \pi_1) \times \d f_2)}\\
    &= + \circ \pair{+ \circ \pair{\d g_1 \circ \pair{f_1}{\vf{U}} \circ \pi_1}
                                  {\d g_1 \circ (((+ \circ \pair{f_1}{\vf{U}}) \circ \pi_1) \times \d f_2)}}
                    {\vf{V} \circ \oplus \circ ((f_2 \circ \pi_1) \times \d f_2)}\\
   & \text{(by regularity of $\d g_1$)}\\
   &= + \circ \pair{\d g_1 \circ \pair{f_1 \circ \pi_1}{+ \circ \pair{\vf{U} \circ \pi_1}{\d f_2}}}
                   {\vf{V} \circ \oplus \circ ((f_2 \circ \pi_1) \times \d f_2)}\\
   &= + \circ \pair{\d g_1 \circ \pair{f_1 \circ \pi_1}{+ \circ \pair{\d f_1}{\vf{U} \circ \oplus}}}
                   {\vf{V} \circ \oplus \circ ((f_2 \circ \pi_1) \times \d f_2)}\\
   & \text{(by regularity of $\d g_1$ and reassociating)}\\
   &= + \circ \pair{\d g_1 \circ \pair{f_1 \circ \pi_1}{\d f_1}}
                   {+ \circ \pair{\d g_1 \circ \pair{\oplus \circ \pair{f_1 \circ \pi_1}{\d f_1}}{\vf{U}\circ \oplus}}
                                 {\vf{V} \circ \oplus \circ ((f_2 \circ \pi_1) \times \d f_2)}}\\
   &= + \circ \pair{\d (g_1 \circ f_1)}
                   {+ \circ \pair{\d g_1 \circ \pair{f_1 \circ \oplus}{\vf{U}\circ \oplus}}
                                 {\vf{V} \circ f_2 \circ \oplus}}\\
   &= + \circ \pair{\d (g_1 \circ f_1)}
                   {+ \circ \pair{\d g_1 \circ \pair{f_1}{\vf{U}}}
                                 {\vf{V} \circ f_2}
                      \circ \oplus}\\
   &= + \circ \pair{\d (g_1 \circ f_1)}
                   {\ftor{Comp}(\vf{U}, \vf{V}) \circ \oplus}
  \end{align*}
  
  That $\ftor{Comp}$ is a functor follows straightforwardly from monoidality.
  \qed 
\end{proof}

\begin{thm}
  The category $\cat{\CAct}(\cat{C})$ of change actions on a base category $\cat{C}$ is a 2-category with the structure described above.
\end{thm}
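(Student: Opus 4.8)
The plan is to verify directly that the data assembled above satisfy the axioms of a strict 2-category, taking the 0-cells to be change actions on $\cat{C}$, the hom-categories to be the $\cat{Diff}(\da{A}, \da{B})$, horizontal composition to be the functor $\ftor{Comp}$, and the identity 1-cells to be the identity differential maps of $\ftor{\CAct}(\cat{C})$. Most of the genuinely diagrammatic content has already been discharged above, so the proof is largely a matter of organising the remaining routine checks and identifying which algebraic fact underlies each.

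First I would confirm that each $\cat{Diff}(\da{A}, \da{B})$ is a category. Well-typedness of the vertical composite $\vf{V} \bullet \vf{U} = + \circ \pair{\vf{U}}{\vf{V}}$ and of the identity 2-cell $0_{\da{B}} \circ {!}$ was already checked above; what remains are associativity and the two unit laws for $\bullet$. These are immediate, since $\bullet$ is, up to a pairing, the monoid multiplication of $(\changes{B}, +_B, 0_B)$: associativity of $\bullet$ is associativity of $+_B$ and the unit laws are the monoid unit laws. No use of the change-action structure of $\da{A}$ is needed for this step.

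Next I would establish that $\ftor{Comp}$ is a bifunctor. On objects it is precisely composition of differential maps, whose associativity and unitality (with the identity differential maps as units) are exactly the statement that $\ftor{\CAct}(\cat{C})$ is a category, already available. On 2-cells the displayed calculation verifies that $\ftor{Comp}(\vf{U}, \vf{V})$ is again a valid 2-cell; it then remains to check that $\ftor{Comp}$ preserves identity 2-cells and vertical composites, both of which reduce to equations among $+_B, +_C$ and $\oplus$ together with regularity of the selected derivatives, i.e. to the monoidality already invoked. Crucially, functoriality of $\ftor{Comp}$ on the product category $\cat{Diff}(\da{A}, \da{B}) \times \cat{Diff}(\da{B}, \da{C})$ \emph{is} the middle-four interchange law relating horizontal and vertical composition, so no separate interchange argument is needed.

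Finally, horizontal associativity and the unit laws at the level of 1-cells are inherited verbatim from the category structure of $\ftor{\CAct}(\cat{C})$, and at the level of 2-cells they follow from the same pointwise monoid reasoning as in the first step. The main obstacle is the preservation of vertical composites by $\ftor{Comp}$: this is the one place where regularity of derivatives and additivity of the monoid sum must be made to cooperate, and although it closely parallels the already-displayed verification that $\ftor{Comp}(\vf{U}, \vf{V})$ is a 2-cell, it is the only step that is not purely formal bookkeeping.
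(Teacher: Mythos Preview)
Your proposal is correct and follows essentially the same approach as the paper: the paper also discharges the two nontrivial verifications (that vertical composites and $\ftor{Comp}(\vf{U},\vf{V})$ are valid 2-cells) by explicit calculation, and then dismisses the remaining functoriality of $\ftor{Comp}$ as following ``straightforwardly from monoidality'', which is exactly the residual content you identify. Your write-up is in fact a bit more explicit than the paper's about which 2-category axioms are being checked and why the interchange law is subsumed in the bifunctoriality of $\ftor{Comp}$.
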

\end{document}